\journalname{Computational Mechanics}
\newcommand{\reviewerOne}[1]{{#1}}
\newcommand{\reviewerTwo}[1]{{#1}}
\newcommand{\generalChange}[1]{{#1}}
\newcommand{\tensor}[1]{\boldsymbol{#1}}             
\renewcommand{\vec}[1]{{\MakeLowercase{\mathbf{#1}}}}  
\newcommand{\Rvec}[1]{\underline{#1}}                
\newcommand{\Xvec}[1]{\boldsymbol{\mathbf{#1}}} 
\newcommand{\grad}{\operatorname{grad}}      
\newcommand{\Grad}{\operatorname{Grad}}      
\renewcommand{\div}{\operatorname{div}}        
\newcommand{\dd}{\,\mathrm{d}}
\newcommand{\Lm}[2]{{\mathrm{L}^{#1} (#2)}}
\newcommand{\dx}{\,\mathrm{d}\vec{x}}
\newcommand{\dX}{\,\mathrm{d}\Xvec{X}}
\newcommand{\dsX}{\,\mathrm{d}s_{\Xvec{X}}}
\newcommand{\Hm}[2]{{\mathrm{H}^{#1} (#2)}}
\newcommand{\normHm}[3]{{\left\|{#1}\right\|_{\Hm{#2}{#3}}}}
\newcommand{\norm}[2]{{\left\lVert{#1}\right\rVert}_{#2}}
\newcommand{\seminorm}[2]{{\lvert{#1}\rvert}_{#2}}
\newcommand{\refnormal}{\vec{n}_{0}}
\newcommand{\symotimes}{\overset{\mathrm{S}}{\otimes}}
\begin{document}
%
\title{\reviewerTwo{Versatile stabilized finite element formulations for nearly
and fully incompressible solid mechanics}}
\author{Elias Karabelas\inst{1,}\inst{2} \and
        Gundolf Haase\inst{1,}\inst{3} \and
        Gernot Plank\inst{2,}\inst{3} and
        Christoph M. Augustin\inst{2,4,}%
        \thanks{\emph{Present address: }%
        christoph.augustin@medunigraz.at
        Gottfried Schatz Research Center: Division of Biophysics,
        Medical University of Graz,
        Neue Stiftingtalstraße 6 (MC1.D.)/IV, 8010 Graz, Austria}%
}                     
\institute{
  Institute for Mathematics and Scientific Computing,
  NAWI Graz,
  University of Graz, Graz, Austria
  \and
  Gottfried Schatz Research Center: Division of Biophysics,
  Medical University of Graz, Graz, Austria
  \and
  BioTechMed-Graz, Graz, Austria
  \and
  Department of Mechanical Engineering, University of California, Berkeley, Berkeley, CA, United States
}
\date{Received: date / Revised version: date}
%
\maketitle
\begin{abstract}
  Computational formulations for large strain, polyconvex, nearly
  incompressible elasticity have been extensively studied,
  but research on enhancing solution schemes that offer better tradeoffs
  between accuracy, robustness, and computational efficiency
  remains to be highly relevant.

  In this paper, we present two methods to overcome locking phenomena,
  one based on a displacement-pressure formulation
  using a stable finite element pairing with bubble functions,
  and another one using a simple pressure-projection stabilized
  \(\mathbb P_1 - \mathbb P_1\)
  finite element pair.
  A key advantage is the versatility of the proposed methods: with minor
  adjustments they are applicable to all kinds of finite elements and generalize
  easily to transient dynamics.
  The proposed methods are compared to and verified with standard benchmarks
  previously reported in the literature.
  Benchmark results demonstrate that both approaches provide a robust and computationally efficient way of simulating nearly and fully incompressible materials.
\end{abstract}
\keywords{Incompressible elasticity; Large strain elasticity;
          Mixed finite elements; Piecewise linear interpolation;
          Transient dynamics.}
%
\section{Introduction}
%
Locking phenomena, caused by ill-conditioned global stiffness matrices in
finite element analyses, are an often observed and extensively studied issue
when modeling nearly incompressible, hyperelastic materials~\cite{babuska1992locking,
braess2007finite, hughes1987finite, wriggers2008nonlinear, zienkiewicz2000finite}.
Typically, methods based on Lagrange multipliers are applied to enforce
incompressibility. A common approach is the split of the deformation gradient
into a volumetric and an isochoric part~\cite{flory1961thermodynamic}.
Here, locking commonly arises when unstable standard displacement formulations
are used that rely on linear shape functions to approximate the displacement
field \(\vec{u}\) and piecewise-constant finite elements combined with static
condensation of the hydrostatic pressure \(p\),
e.g., \(\mathbb P_1 - \mathbb P_0\) elements.
It is well known that in such cases solution algorithms may exhibit very low
convergence rates and that variables of interest such as stresses can be
inaccurate~\cite{gultekin2018quasi}.

From mathematical theory it is well known that approximation spaces for the
primal variable \(\vec u\) and \(p\) have to be well chosen to fulfill
the Ladyzhenskaya-Babu\^ska--Brezzi (LBB) or \emph{inf-sup}
condition~\cite{babuska1973finite,brezzi1974existence, chapelle1993infsup}
to guarantee stability.
A classical stable approximation pair is the
Taylor--Hood element~\cite{taylor1973numerical},
however, this requires quadratic ansatz functions for the displacement part.
\reviewerOne{
  For certain types of problems higher order interpolations
  can improve efficiency as higher accuracy is already reached with coarser
  discretizations \cite{chamberland2010comparison, land2015verification}.
  In many applications though, where geometries are fitted to, e.g.,
  capture fine structural features, this is not beneficial due to a
  possible increase in degrees of freedom and consequently a higher
  computational burden.
  Also for coupled problems such as electromechanical or
  fluid-structure-interaction models high-resolution grids for mechanical
  problems are sometimes required when interpolations between grids are not
  desired \cite{augustin2016anatomically, karabelas2018towards}.
  As a remedy for these kind of applications quasi
  Taylor--Hood elements with an order of $\tfrac{3}{2}$ have been considered,
  see~\cite{quaglino2017quasi}, as well as equal order linear pairs of ansatz
  functions which has been a field of intensive research in the last decades,
  see \cite{auricchio2017mixed, hughes2017multiscale} and references therein.
Unfortunately, equal order pairings do not fulfill the LBB
conditions and hence a stabilization of the element is of crucial importance.
}
\generalChange{There is a significant body of literature devoted to }
stabilized finite elements for the Stokes and Navier--Stokes equations.
Many of those methods were extended to incompressible elasticity, amongst other approaches by Hughes, Franca,
Balestra, and collaborators~\cite{franca1988new, hughes1986new}.
Masud and co-authors followed an idea by means of variational multiscale (VMS)
methods~\cite{masud2013framework,masud2005stabilized,nakshatrala2007finite,xia2009stabilized},
a technique that was recently extended to dynamic problems
(D-VMS)~\cite{scovazzi2016simple,rossi2016implicit}.
Further stabilizations of equal order finite elements include orthogonal sub-scale
methods~\cite{chiumenti2015mixed,codina2000stabilization,lafontaine2015explicit, cervera2003mixed}
and methods based on pressure projections~\cite{dohrmann2004stabilized,zienkiewicz1998triangles}.
Different classes of methods to avoid locking for nearly incompressible elasticity
were conceived by introducing nonconforming finite elements such as the
Crouzeix--Raviart element~\cite{dipietro2014extension,falk1991nonconforming} 
and Discontinuous Galerkin methods~\cite{kabaria2015hybridizable,teneyck2006discontinuous}.
Enhanced strain formulations~\cite{Reese1998,taylor2000mixed} have been considered as well as
formulations based on multi-field variational
principles~\cite{bonet2015computational,schroeder2016novel,schroeder2011new}.

In this study we introduce a novel variant of the MINI
element for accurately solving nearly and fully
incompressible elasticity problems.
The MINI element was originally established for computational fluid dynamics
problems~\cite{arnold1984stable} and pure tetrahedral meshes
\reviewerOne{and previously used in the large strain regime, e.g. in
\cite{chamberland2010comparison, lamichhane2009mixed}.}
We extend the MINI element definition for hexahedral meshes by
introducing two bubble functions in the element and provide \reviewerOne{a novel
proof of stability and well-posedness in the case of linear elasticity}.
The support of the bubble functions is restricted to the element and can thus
be eliminated from the system using static condensation.
\reviewerTwo{This also allows for a straightforward inclusion in
  combination with existing finite element codes since all
  required implementations are purely on the element level.}
\reviewerTwo{Additionally, we introduce a pressure-projection stabilization
method originally published for the
Stokes equations~\cite{bochev2006,dohrmann2004stabilized}
and previously used for large strain nearly incompressible elasticity in the
field of particle finite element methods and plasticity \cite{rodriguez2016,cante2014}.}
Due to its \emph{simplicity}, this type of stabilization is especially attractive
from an implementation point of view.

Robustness and performance of both the MINI element and the
pressure-projection approach are verified and compared to standard benchmarks
reported previously in literature.
A key advantage of the proposed methods is their \emph{high versatility}:
first, they are readily applicable to nearly and fully incompressible solid
mechanics; second, with little adjustments the stabilization techniques can be
applied to all kinds of finite elements, in this study we investigate the
performance for hexahedral and tetrahedral meshes;
and third, the methods generalize easily to transient dynamics.

Real world applications often require highly-resolved meshes and thus
efficient and massively parallel solution algorithms for the
linearized system of equations become an important factor to deal with the
resulting computational load. We solve the arising saddle-point systems
by using a GMRES method with a block preconditioner based on an
algebraic multigrid (AMG) approach. Extending our previous
implementations~\cite{augustin2016anatomically} we
performed the numerical simulations with the software
\textit{Cardiac Arrhythmia Research Package}
(CARP)~\cite{vigmond2008solvers} which relies on the MPI based library
\emph{PETSc}~\cite{petsc-user-ref} and the incorporated solver suite
\emph{hypre/BoomerAMG}~\cite{henson2002boomeramg}. The combination of these
advanced solving algorithms with the proposed stable elements which only rely on
linear shape functions proves to be \emph{very efficient} and renders feasible
simulations on grids with high structural detail.

The paper is outlined as follows:
Section~\ref{sec:methods} summarizes in brief the background on the methods.
In Section~\ref{sec:discr}, we introduce the finite element discretization
and discuss stability.
Subsequently, Section~\ref{sec:results} documents benchmark problems
where our proposed elements are applied and compared to results published in the
literature.
Finally, Section~\ref{sec:conclusion} concludes the paper with a discussion of
the results and a brief summary.
%
\section{Continuum mechanics}\label{sec:methods}
\subsection{Nearly incompressible nonlinear elasticity}
Let \(\Omega_0 \subset \mathbb R^3\) denote the reference configuration and
let \(\Omega_t \subset \mathbb R^3\) denote the current configuration
of the domain of interest. Assume that the boundary of \(\Omega_0\) is decomposed into
\(\partial \Omega_0 = \Gamma_{\mathrm{D},0} \cup \Gamma_{\mathrm{N},0}\)
with \(|\Gamma_{D,0}| > 0\).
\reviewerTwo{Here, $\Gamma_{\mathrm{D},0}$ describes the Dirichlet part of the boundary
and $\Gamma_{\mathrm{N},0}$ describes the Neumann part of the boundary, respectively.}
Further, let \(\refnormal\) be the unit outward normal on \(\partial \Omega_0\).
The nonlinear mapping \(\boldsymbol\phi\colon \Xvec X \in \Omega_0 \rightarrow \vec x \in \Omega_t\),
defined by \(\boldsymbol \phi:= \Xvec X + \vec u(\Xvec X,t)\), with displacement \(\vec u\),
maps points in the reference configuration to points in the current configuration.
Following standard notation we introduce the \emph{deformation gradient} \(\tensor F\)
and the Jacobian $J$ as
\[
  \tensor F := \Grad \boldsymbol \phi = \tensor I + \Grad \vec u,\quad
  J := \det(\tensor F),
\]
and the \emph{left Cauchy-Green tensor} as \(\tensor C := \tensor F^\top \tensor F\).
Here, \(\Grad(\bullet)\) denotes the gradient with respect to the reference
coordinates \(\Rvec X \in \Omega_0\). The displacement field \(\vec u\)
is sought as infimizer of the functional
\reviewerTwo{
\begin{align}
 \nonumber\Pi^\mathrm{tot}(\vec u) &:= \Pi(\vec u) + \Pi^\mathrm{ext}(\vec u),\\
 \nonumber\Pi(\vec u) &:= \int\limits_{\Omega_0} \Psi(\tensor F(\vec u))\dX,\\
 \label{eq:potential_ext}
 \Pi^{\mathrm{ext}}(\vec u)
   &:= - \rho_0\int\limits_{\Omega_0} \vec f(\vec X)  \cdot \vec u\dX
   - \int\limits_{\Gamma_\mathrm{N,0}} \vec h(\vec X) \cdot \vec u\dsX,
\end{align}
over all admissible fields $\vec u$ with $\vec u = \vec g_\mathrm{D}$ on
$\Gamma_\mathrm{D,0}$,} where, \(\Psi\) denotes the strain energy function;
\(\rho_0\) denotes the material density in reference configuration;
\(\vec f\) denotes a volumetric body force;
\reviewerTwo{\(\vec g_\mathrm{D}\) denotes a given boundary displacement;}
and \(\vec h\) denotes a given surface traction.
For ease of presentation it is assumed that \(\rho_0\) is constant and
\(\vec f\), \reviewerTwo{\(\vec g_\mathrm{D}\)}, and \(\vec h\) do not depend on \(\vec u\).
Existence of infimizers is, under suitable assumptions,
guaranteed by the pioneering works of Ball, see~\cite{Ball1976}.


In this study we consider nearly incompressible materials,
meaning that \(J \approx 1\). A possibility to model this behavior was
originally proposed by~\textcite{flory1961thermodynamic}
using a split of the deformation gradient \(\tensor F\) such that
\begin{equation}
  \tensor F = \tensor F_\mathrm{vol} \overline{\tensor F}.
\end{equation}
Here, \(\tensor F_\mathrm{vol}\) describes the volumetric change
while \(\overline{\tensor F}\) describes the isochoric change.
By setting \(\tensor F_\mathrm{vol} := J^{\frac{1}{3}} \tensor I\) and
\(\overline{\tensor F} := J^{-\frac{1}{3}} \tensor F\)
we get \(\det(\overline{\tensor F}) = 1\) and \(\det(\tensor F_\mathrm{vol}) = J\).
Analogously, by setting \(\tensor C_\mathrm{vol} := J^{\frac{2}{3}} \tensor I\) and
\(\overline{\tensor C} := J^{-\frac{2}{3}} \tensor C\),
we have \(\tensor C = \tensor C_\mathrm{vol} \overline{\tensor C}\).
Assuming a hyperelastic material, the Flory split also postulates an additive
decomposition of the strain energy function
\begin{equation} \label{eq:flory_strain_energy}
  \Psi = \Psi(\tensor C) = \overline{\Psi}(\overline{\tensor C})
  + \kappa U(J),
\end{equation}
where \(\kappa\) is the \emph{bulk modulus}. The function \(U(J)\)
acts as a penalization of incompressibility and we require that it is
strictly convex and twice continuously differentiable. Additionally,
a constitutive model for \(U(J)\) should fulfill that (i) it vanishes in the reference
configuration and that (ii) an infinite amount of energy is required to shrink
the body to a point or to expand it indefinitely, i.e.,
\[
  \mbox{(i)}\ U(1) = 0,\
  \mbox{(ii)} \lim_{J\to 0+} U(J) = \infty,\ \lim_{J\to \infty} U(J) = \infty.
\]
For the remainder of this work we will focus on functions \(U(J)\) that can be written as
\begin{align*}
  U(J) := \frac{1}{2}{(\Theta(J))}^2.
\end{align*}
In the literature many different choices for the function \(\Theta(J)\) are
proposed, see, e.g.,~\cite{doll2000on,hartmann2003polyconvexity,rossi2016implicit}
for examples and related discussion.

As we also want to study the case of full incompressibility,
meaning \(\kappa \to \infty\), we need a reformulation of the system.
In this work we will use a perturbed Lagrange-multiplier functional,
see~\cite{sussman1987finite, atluri1989formulation, brink1996} for details,
and we introduce
\begin{align*}
  \Pi^\mathrm{PL}(\vec u, q) := \int\limits_{\Omega_0}\overline{\Psi}(\overline{\tensor C}(\vec u))
                             + q \Theta(J(\vec u)) - \frac{1}{2 \kappa} q^2\dX.
\end{align*}
We will now seek infimizers \((\vec u, p) \in V_{\vec g_\mathrm{D}} \times Q \)
of the modified functional
\begin{align}
  \label{eq:inf_prob}
  \Pi^{\mathrm{tot}}(\vec u, q)
    := \Pi^\mathrm{PL}(\vec u, q) + \Pi^{\mathrm{ext}}(\vec u).
\end{align}
To guarantee that the discretization of~\eqref{eq:inf_prob} is well defined,
we assume that
\begin{equation*}
  V_{\vec g_\mathrm{D}} = \{ \vec v \in {[H^1(\Omega_0)]}^3 :
    \left.\vec v\right|_{\Gamma_{\mathrm{D},0}} = \vec g_\mathrm{D}\},
\end{equation*}
with \(H^1(\Omega_0)\) being the standard Sobolev space consisting of all
square integrable functions with square integrable gradient, and \(Q=L^2(\Omega_0)\).
Existence of infimizers in \(V_{\vec g_\mathrm{D}}\) cannot be guaranteed in general.
However, assuming suitable growth conditions on the strain energy function
\(\Psi\), and assuming that the initial data keeps the material in the hyperelastic range,
one can conclude that the space \(V\) for the infimizer \(\vec u\) contains
\(V_{\vec g_\mathrm{D}}\) as a subset, see~\cite{Ball1976} for details.

To solve for the infimizers of~\eqref{eq:inf_prob} we require to compute
the variations of~\eqref{eq:inf_prob} \reviewerTwo{with respect to} \(\Delta\vec u\) and \(\Delta p\)
\begin{align}
  \nonumber
  D_{\Delta \vec v} \Pi^\mathrm{PL}(\vec u,p) &=
  \int\limits_{\Omega_0} \left(\tensor S_\mathrm{isc}
  + p \tensor S_\mathrm{vol}\right):\tensor \Sigma(\vec u, \Delta \vec v) \dX \\
  \label{eq:non_var_prob:1}
  & -\rho_0\int\limits_{\Omega_0} \vec f \cdot \Delta \vec v\dX
  - \int\limits_{\Gamma_{\mathrm{N},0}} \vec h \cdot \Delta \vec v\dsX,\\
  \label{eq:non_var_prob:2}
  D_{\Delta q} \Pi^\mathrm{PL}(\vec u,p)
  &= \int\limits_{\Omega_0} \left(\Theta(J) - \frac{1}{\kappa} p\right)\Delta q\dX,
\end{align}
with abbreviations as, e.g., in~\cite{holzapfel2000nonlinear}
\begin{align}
  \tensor S_\mathrm{isc} &:= J^{-\frac{2}{3}}
  \mathrm{Dev}(\overline{\tensor S}), \quad \text{where~}\overline{\tensor S}
  := \frac{\partial\overline{\Psi}(\overline{\tensor C})}{\partial \overline{\tensor C}}\\
  \tensor S_\mathrm{vol} &:= \pi(J) \tensor C^{-1}, \quad \text{with~} \pi(J)
  := J\Theta'(J),\\
  \tensor \Sigma(\vec u, \vec v)
  &:= \mathrm{sym}(\tensor F^\top(\vec u) \Grad \vec v).
\end{align}
Next, with notations
\begin{align}
  a_{\mathrm{isc}}(\vec u;\Delta \vec v)
    &:= \int\limits_{\Omega_0} \tensor S_\mathrm{isc}(\vec u)
    : \tensor \Sigma(\vec u,\Delta\vec v)\dX,\\
  a_{\mathrm{vol}}(\vec u, p;\Delta \vec v)
    &:= \int\limits_{\Omega_0} p \tensor S_\mathrm{vol}(\vec u)
    : \tensor \Sigma(\vec u,\Delta\vec v)\dX,\\
  b_{\mathrm{vol}}(\vec u; \Delta q)
    &:= \int\limits_{\Omega_0} \Theta(J(\vec u)) \Delta q\dX,\\
  c(p,\Delta q)
    &:= \frac{1}{\kappa}\int\limits_{\Omega_0}p \Delta q\dX,\\
  l_\mathrm{body}(\Delta \vec v)
    &:= \rho_0\int\limits_{\Omega_0} \vec f \cdot \Delta \vec v\dX,\\
  l_\mathrm{surface}(\Delta \vec v)
  &:= \int\limits_{\Gamma_{\mathrm{N},0}} \vec h \cdot \Delta \vec v \dsX,\\
  R_\mathrm{upper}(\vec u, p;\Delta \vec v)
    &:= a_{\mathrm{isc}}(\vec u;\Delta \vec v)
    + a_{\mathrm{vol}}(\vec u, p;\Delta \vec v) \nonumber\\
    &-l_\mathrm{body}(\Delta \vec v)-l_\mathrm{surface}(\Delta \vec v),\\
  \label{eq:def_r_lower}
  R_\mathrm{lower}(\vec u, p;\Delta q)
    &:= b_{\mathrm{vol}}(\vec u; \Delta q) - c(p,\Delta q),
\end{align}
we formulate the mixed boundary value problem of nearly incompressible
nonlinear elasticity via a nonlinear system of equations.
This yields a nonlinear saddle-point problem,
find \((\vec u, p) \in V_{\vec g_D} \times Q\) such that
\begin{align}
  \label{eq:non_lin_var_prob:1}R_\mathrm{upper}(\vec u, p;\Delta \vec v) &= 0,\\
  \label{eq:non_lin_var_prob:2}R_\mathrm{lower}(\vec u, p;\Delta q) &=0,
\end{align}
for all \((\Delta \vec v, \Delta q) \in V_{\vec 0} \times Q\).
%
\subsection{Consistent linearization}
To solve the nonlinear variational equations~\eqref{eq:non_lin_var_prob:1}--\eqref{eq:non_lin_var_prob:2},
with a finite element approach we first apply a Newton--Raphson scheme,
for details we refer to~\cite{deuflhard2011newton}.
Given a nonlinear and continuously differentiable operator \(F\colon X\to Y\)
a solution to \(F(x)=0\) can be approximated by
\begin{align*}
  x^{k+1} &= x^{k} + \Delta x, \\
  \left.\frac{\partial F}{\partial x}\right|_{x = x^k} \Delta x &= -F(x^k),
\end{align*}
which is looped until convergence.
In our case, we have \(X = V_{\vec g_D} \times Q\), \(Y = \mathbb R^2\),
\(\Delta x = {(\Delta \vec u, \Delta p)}^\top\), \(x^k = {(\vec u^k, p^k)}^\top\),
and \(F = {(R_\mathrm{upper}, R_\mathrm{lower})}^\top\).
We obtain the following linear saddle-point problem for each
\((\vec u^k, p^k) \in V_{\vec g_D} \times Q\),
find \((\Delta \vec u, \Delta p) \in V_{\vec 0} \times Q\) such that
\begin{align}
  \label{eq:saddle_point_nl:1}
  a_k(\Delta \vec u, \Delta \vec v) + b_k(\Delta p, \Delta \vec v)
  &= -R_\mathrm{upper}(\vec u^k, p^k; \Delta \vec v),\\
  \label{eq:saddle_point_nl:2}
  b_k(\Delta q, \Delta \vec u) - c(\Delta p, \Delta q)
  &= -R_\mathrm{lower}(\vec u^k, p^k;\Delta q),
\end{align}
where
\begin{align*}
  a_k(\Delta \vec u, \Delta \vec v)
    &:= \int\limits_{\Omega_0} \Grad \Delta \vec v \tensor S_{\mathrm{tot},k}
    : \Grad \Delta \vec{u}\dX\\
    &+ \int\limits_{\Omega_0}\tensor \Sigma(\vec u_k, \Delta \vec v)
    : \mathbb C_{\mathrm{tot},k} : \tensor \Sigma(\vec u_k, \Delta \vec u)\dX,\\
  b_k(\Delta p, \Delta \vec v)
    &:= \int\limits_{\Omega_0} \Delta p \pi(J_k) \tensor F^{-\top}_k
    : \Grad \Delta \vec v \dX,
\end{align*}
with abbreviations
\generalChange{
\begin{align}
  \nonumber \tensor F_k &:= \tensor F(\vec u_k),\\
  \nonumber J_k &:= \det(\tensor F_k),\\
  \label{eq:def_s_tot}\tensor S_{\mathrm{tot},k}
  &:= \left.\tensor S_{\mathrm{isc}}\right|_{\vec u =\vec u_k}
    + p_k \left. \tensor S_{\mathrm{vol}}\right|_{\vec u =\vec u_k}, \\
  \label{eq:def_c_tot}\mathbb C_{\mathrm{tot},k}
    &:= \left.\mathbb{C}_{\mathrm{isc}}\right|_{\vec u =\vec u_k}
    + p_k \left.\mathbb{C}_{\mathrm{vol}}\right|_{\vec{u} = \vec u_k},\\
  \nonumber\mathbb{C}_\mathrm{vol}
    &:= k(J) \tensor C^{-1} \otimes \tensor C^{-1}
    - 2 \pi(J) \tensor C^{-1} \odot \tensor C^{-1},\\
  \nonumber k(J) &:= J^2 \Theta''(J) + J \Theta'(J),
\end{align}
where $\mathbb{C}_{\mathrm{isc}}$ is given in \eqref{eq:def_c_isc}.
}
The derivation of the consistent linearization is lengthy but standard,
we refer to~\cite[Chapter 8]{holzapfel2000nonlinear} for details.
The definition of the higher order tensor and other abbreviations are given
in the Appendix.
%
\subsection{Review on solvability of the linearized problem}
Since~\eqref{eq:saddle_point_nl:1}--\eqref{eq:saddle_point_nl:2}
is a linear saddle-point problem for each given \((\vec u^k, p^k)\)
we can rely on the well-established Babu\v{s}ka--Brezzi theory,
see~\cite{boffi2013mixed,ern2013theory,ruter2000analysis,schwab1998}.
The crucial properties to guarantee that the
problem~\eqref{eq:saddle_point_nl:1}--\eqref{eq:saddle_point_nl:2} is well-posed
are continuity of all involved bilinear forms and the following three conditions:
\begin{itemize}
  \item[(i)] The \emph{inf-sup condition}:
    there exists \(c_1 > 0\) such that
    \begin{align}
      \underset{q \in Q}{\inf} \ \underset{\vec v \in V_{\vec 0}}{\sup}
      \frac{b_k(q,\vec v)}{\norm{\vec v}{V_{\vec 0}} \norm{q}{Q}} \geq c_1.
    \end{align}
  \item[(ii)] The \emph{coercivity on the kernel condition}:
    there exists \(c_2 > 0\) such that
    \begin{align}\label{eq:coercivity_on_kernel}
      a_k(\vec v, \vec v) \geq c_2 \norm{\vec v}{V_{\vec 0}}^2 &&
      \text{for all } \vec v \in \ker B,
    \end{align}
    where
    \begin{align*}
      \ker B := \left\{ \vec v \in V_{\vec 0} : b_k(q, \vec v) = 0
      ~\text{for all } q \in Q \right\}.
    \end{align*}
  \item[(iii)] \emph{Positivity of \(c\)}: it holds
      \begin{align}
        c(q,q) \geq 0 && \text{for all }q\in Q.
      \end{align}
\end{itemize}
Upon observing that \(\tensor F^{-\top} : \Grad \vec v = \div{\vec v}\),
see~\cite{holzapfel2000nonlinear}, we rewrite the bilinear form \(b_k(q, \vec v)\) as
\begin{align}
  b_k(q, \vec v) &= \int\limits_{\Omega_0} q \pi(J_k) \tensor F^{-\top}_k
  : \Grad \vec v \dX \nonumber\\
  &= \int\limits_{\Omega_0} q \pi(J_k) \div{\vec v}\dX\nonumber\\
  &= \int\limits_{\Omega_t} q \Theta'(J_k) \div{\vec v}\dx. \label{eq:non_lin_inf_sup}
\end{align}
Assuming that \(\Theta'(J) \geq 1\), we can conclude the
\emph{inf-sup} condition from standard arguments, see~\cite[Section 5.2]{weise2014}.
The positivity of the bilinear form $c$ is always fulfilled.
However, it is not possible to show the coercivity condition~\eqref{eq:coercivity_on_kernel}
for a general hyperelastic material or load configuration.
Nevertheless, for some special cases it is possible to establish a result.
We refer to~\cite{weise2014,auricchio2005stability,auricchio2010importance} for a more detailed discussion.
Henceforth, we will assume that our given input data is such that we stay
in the range of stability of the problem.
Examples for cases in which bilinear form \(a_k\) lacks coercivity can be found
in~\cite[Chapter 9]{weise2014} and~\cite[Section 4]{auricchio2010importance}.
%
\section{Finite element approximation and stabilization}\label{sec:discr}
Let \(\mathcal T_h\) be a finite element partitioning of \(\overline{\Omega}\)
into subdomains, in our case either tetrahedral or convex hexahedral elements.
The partitioning is assumed to fulfill standard regularity conditions, see~\cite{ciarlet2002finite}.
Let \(\hat{K}\) be the reference element,
and for \(K\in\mathcal T_h\) denote by \(F_\mathrm{K}\) the affine,
or trilinear mapping from \(\hat{K}\) onto \(K\).
We assume that \(F_\mathrm{K}\) is a bijection. For a tetrahedral element \(K\) this can
be assured whenever \(K\) is non-degenerate, however, for hexahedral elements
this may not necessary be the case, see~\cite{knabner2003} for details.
Further, let \(\hat{\mathbb V}\) and \(\hat{\mathbb Y}\) denote two polynomial
spaces defined over \(\hat{K}\). We denote by
\begin{align}
  \label{eq:def_vh}
  V_{h,0} &:= \left\{\vec v \in H^1_0(\Omega_0):\vec v
  = \hat{\vec v} \circ F_\mathrm{K}^{-1},\hat{\vec v}
  \in {[\hat{\mathbb V}]}^3,\forall K \in \mathcal T_h\right\},\\
  \label{eq:def_qh}
  Q_h &:= \left\{q \in L^2(\Omega_0): p = \hat{p}\circ F_\mathrm{K}^{-1},
  \hat{p}\in\hat{\mathbb Y},\forall K\in\mathcal T_h\right\},\\
  V_{h,\vec g_\mathrm{D}} &:= H^1_{\vec g_\mathrm{D}}(\Omega_0) \cap V_{h,0},
\end{align}
the spaces needed for further analysis in the following sections.
%
\subsection{\generalChange{Nearly} incompressible linear elasticity}

\cite{boffi2017remark,shariff1997,shariff2000}

As a model problem we study the well-known equations for
nearly incompressible linear elasticity.
\generalChange{
In this case it is assumed that $\Omega:=\Omega_0\approx\Omega_t$.}
Then, the linear elasticity problem reads: find \generalChange{$(\vec u,p) \in V_{\vec g_D} \times Q$} such that
\begin{align}
  \label{eq:aie_linear:1}
  2\mu\int\limits_{\Omega} \tensor{\varepsilon}(\vec u)
    : \tensor{\varepsilon}(\vec v) \dx
    + \int\limits_{\Omega}p \div \vec v \dd \Xvec{x}
    &= \int\limits_{\Omega} \vec f \cdot \vec v \dd \Xvec{x}\\
  \label{eq:aie_linear:2}
    \int\limits_{\Omega} \div \vec u q \dd \Xvec{x}
    - \frac{1}{\lambda} \int\limits_{\Omega} p q \dd \Xvec{x} &= 0
\end{align}
for all \generalChange{$(\vec v, q) \in V_{\vec 0} \times Q$}.
Here, $\mu>0$ and $\lambda$ denote the Lam\'{e} parameters, \generalChange{and}
$\tensor \varepsilon(\vec v) := \mathrm{sym}(\grad \vec v)$\generalChange{.}
%
%

The regularity of~\eqref{eq:aie_linear:1}--\eqref{eq:aie_linear:2}
is a classical result~\cite{steinbach2008numerical}
and follows with the same arguments as for the Stokes equations.
The discretized analogue of~\eqref{eq:aie_linear:1}--\eqref{eq:aie_linear:2} is:
find \generalChange{$(\vec u_h, p_h) \in V_{h,\vec g_D} \times Q_h$} such that
\begin{align}
  \label{eq:aie_linear_discr:1}
  2\mu\int\limits_{\Omega} \tensor{\varepsilon}(\vec u_h)
    : \tensor{\varepsilon}(\vec v_h) \dx
    + \int\limits_{\Omega}p_h \div \vec v_h \dd \Xvec{x}
    &= \int_{\Omega} \vec f \cdot \vec v_h \dd \Xvec{x} \\
  \label{eq:aie_linear_discr:2}
  \int\limits_{\Omega} \div \vec u_h q_h \dd \Xvec{x}
    - \frac{1}{\lambda} \int\limits_{\Omega} p_h q_h \dd \Xvec{x} &= 0
\end{align}
for all \generalChange{$(\vec v_h, q_h) \in V_{h,\vec 0} \times Q_h$}.
\reviewerOne{Coercivity on the kernel condition \eqref{eq:coercivity_on_kernel}
is a standard result for the case of nearly incompressible linear
elasticity posed in the form \eqref{eq:aie_linear:1}-\eqref{eq:aie_linear:2} and \eqref{eq:aie_linear_discr:1}-\eqref{eq:aie_linear_discr:2}.
In the nonlinear case this is not true in general and will be adressed
in Section \ref{sec:changes_lin_nolin}.
}
The crucial point for checking well-posedness of the discrete
equations~\eqref{eq:aie_linear_discr:1}--\eqref{eq:aie_linear_discr:2}
is the fulfillment of the \emph{discrete inf-sup condition}, reading
\begin{align}\label{eq:discr_inf_sup}
  \underset{q_h \in Q_h}{\inf}\underset{\vec v_h \in V_{h,\vec 0}}{\sup}
    \frac{\int\limits_{\Omega} q_h \div \vec v_h\dx}{\norm{\vec v_h}{V_{\vec 0}} \norm{q_h}{Q}} > 0.
\end{align}
The discrete \emph{inf-sup} condition puts constraints on the choice of the spaces
$V_{h,0}$ and $Q_h$. A finite element pairing fulfilling~\eqref{eq:discr_inf_sup}
is called a \emph{stable pair}.
A classic example for tetrahedral meshes would be the Taylor--Hood element.
In this paper, we will focus on two different finite element pairings,
the MINI element and a stabilized equal order element.
\reviewerTwo{
  The stabilized equal order pairing has been used in this context for pure tetrahedral meshes, see \cite{cante2014,rodriguez2016}. To the best of the authors knowledge those elements have not been used in the present context for general tesselations.
}
\subsection{The pressure-projection stabilized equal order pair}%
\label{sec:pressure_projection}
In the following, we present a stabilized lowest equal order finite
element pairing, adapted to nonlinear elasticity from the pairing originally
introduced by \textcite{dohrmann2004stabilized,bochev2006} for the Stokes
equations.

We choose $\hat{\mathbb V}$ and $\hat{\mathbb Y}$ in~\eqref{eq:def_vh}--\eqref{eq:def_qh}
as the space of linear (or trilinear) functions over $\hat{K}$.
This choice of spaces is a textbook example of an unstable element, however,
following~\cite{dohrmann2004stabilized}, we can introduce a stabilized formulation
of~\eqref{eq:aie_linear_discr:1}--\eqref{eq:aie_linear_discr:2} by:
find $(\vec u_h,p_h) \in V_{h,\vec g_D} \times Q_h$ such that
\begin{align}\label{eq:lin_elast_stab_discr}
  \mu\int\limits_{\Omega} \tensor{\varepsilon}(\vec u_h)
    : \tensor{\varepsilon}(\vec v_h) \dd \Xvec{x}
    + \int\limits_{\Omega}p_h \div \vec v_h \dd \Xvec{x}
    &= \int_{\Omega} \vec f \cdot \vec v_h \dd \Xvec{x}, \\
  \int\limits_{\Omega} \div \vec u_h q_h \dd \Xvec{x}
   - \frac{1}{\lambda} \int\limits_{\Omega} p_h q_h \dd \Xvec{x}
    \quad\qquad& \nonumber \\
    - \frac{1}{\mu^\ast} s_h(p_h,q_h) &= 0,
\end{align}
for all $(\vec v_h, q_h) \in V_{h,0} \times Q_h$, where
\begin{align}\label{eq:def_db_stabil}
  s_h(p_h,q_h) :=  \int\limits_{\Omega}(p_h - \Pi_h p_h) (q_h - \Pi_h q_h)\dd \Xvec{x}
\end{align}
and $\mu^\ast>0$ a suitable parameter. \generalChange{We note that the integral in \eqref{eq:def_db_stabil} has to be understood as sum over integrals of elements of the tessellation.}
The projection operator $\Pi_h$ is defined element-wise for each $K \in \mathcal T_h$
\begin{align*}
  \left.\Pi_h p_h\right|_{K} := \frac{1}{\seminorm{K}{}} \int\limits_{K}p_h\dd\Xvec{x}.
\end{align*}
We can state the following results for this discrete problem:
\begin{theorem}
There exists a unique bounded solution to the discrete
problem~\eqref{eq:lin_elast_stab_discr}.
\end{theorem}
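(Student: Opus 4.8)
The plan is to apply the Babuška–Brezzi theory for perturbed saddle-point problems, exactly the framework reviewed earlier in the excerpt (conditions (i)–(iii) of the linearized solvability section). Since the stabilized system \eqref{eq:lin_elast_stab_discr} is linear, existence and uniqueness of a bounded solution will follow once I verify: continuity of all bilinear forms; coercivity of the displacement bilinear form on the kernel of the discrete divergence constraint; positivity (here: non-negativity) of the perturbation terms involving $\tfrac{1}{\lambda}$ and $\tfrac{1}{\mu^\ast}s_h$; and — the genuinely substantive point — an \emph{augmented} inf-sup condition in which the stabilization term $s_h$ compensates for the failure of the plain discrete inf-sup condition \eqref{eq:discr_inf_sup} for the equal-order pair.

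First I would fix the function-space setting: $V_{h,\vec 0}$ with the $H^1$-seminorm (equivalently the full $H^1$-norm by Poincaré, using $|\Gamma_{\mathrm D,0}|>0$) and $Q_h$ with the $L^2$-norm. Continuity of $a(\vec u_h,\vec v_h)=\mu\int_\Omega\tensor\varepsilon(\vec u_h):\tensor\varepsilon(\vec v_h)\dx$, of $b(p_h,\vec v_h)=\int_\Omega p_h\div\vec v_h\dx$, of $c(p_h,q_h)=\tfrac1\lambda\int_\Omega p_hq_h\dx$, and of $s_h(\cdot,\cdot)$ (note $\Pi_h$ is an $L^2$-orthogonal projection, hence $\|p_h-\Pi_hp_h\|_{L^2}\le\|p_h\|_{L^2}$, so $s_h$ is bounded by $\|p_h\|_{L^2}\|q_h\|_{L^2}$) is immediate from Cauchy–Schwarz and Korn's inequality. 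Coercivity on the kernel is the standard Korn argument for linear elasticity (already asserted in the excerpt to hold in this setting), and in fact, because the stabilization supplies control of $p_h-\Pi_hp_h$ and $c$ supplies control of $p_h$ when $\lambda<\infty$, I would actually prove coercivity of the full combined form $a(\vec v_h,\vec v_h)+c(q_h,q_h)+\tfrac1{\mu^\ast}s_h(q_h,q_h)$ on a suitably enlarged kernel; positivity of $c$ and of $s_h$ is clear since both are squared-seminorm expressions.

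The hard part will be the inf-sup / stability estimate that replaces \eqref{eq:discr_inf_sup}. The key device, following Dohrmann–Bochev, is to write any $q_h\in Q_h$ as $q_h=\Pi_hq_h+(q_h-\Pi_hq_h)$: the fluctuation part $q_h-\Pi_hq_h$ is controlled directly by $s_h(q_h,q_h)$, while the piecewise-constant mean part $\Pi_hq_h$ is controlled via the \emph{macro-element} (or Fortin) argument — one constructs, for each piecewise-constant function, a discrete velocity $\vec v_h\in V_{h,\vec 0}$ with $\div\vec v_h$ having the right mean values and $\|\vec v_h\|_{H^1}\lesssim\|\Pi_hq_h\|_{L^2}$, which is exactly the classical $\mathbb P_1/\mathbb P_0$-type bound available on regular meshes (this is where the mesh-regularity hypothesis on $\mathcal T_h$ enters, and where hexahedral vs.\ tetrahedral elements must both be accommodated). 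Combining these two pieces yields a bound of the form
\begin{align*}
  \beta\,\norm{q_h}{Q}^2 \le b(q_h,\vec v_h) + \tfrac{1}{\mu^\ast}\,s_h(q_h,q_h),
\end{align*}
for a suitable test velocity $\vec v_h$ depending on $q_h$, with $\beta>0$ independent of $h$. Once this augmented stability estimate is in hand, the abstract theory (e.g.\ the Brezzi–Braess lemma for symmetric perturbed saddle-point problems, or equivalently a direct Babuška inf-sup argument on the combined bilinear form $\mathcal A((\vec u_h,p_h),(\vec v_h,q_h)):=a(\vec u_h,\vec v_h)+b(p_h,\vec v_h)+b(q_h,\vec u_h)-c(p_h,q_h)-\tfrac1{\mu^\ast}s_h(p_h,q_h)$ on $V_{h,\vec 0}\times Q_h$) gives a unique $(\vec u_h,p_h)$ with $\norm{\vec u_h}{V_{\vec 0}}+\norm{p_h}{Q}\le C\,\norm{\vec f}{L^2}$, with $C$ independent of $h$ and of $\lambda\in(0,\infty]$, which is exactly the claimed bounded solvability. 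I would remark in passing that the argument is uniform in $\lambda$, so it covers the fully incompressible limit $\lambda\to\infty$ by simply dropping the (non-negative) $c$-term.
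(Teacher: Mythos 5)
Your overall framework---perturbed saddle-point theory with an augmented stability estimate in which $s_h$ compensates for the instability of the equal-order pair---is the right one, and it is in fact more than the paper offers: the paper's entire proof is a one-sentence citation of \cite[Theorems 4.1, 5.1 and Corollary 5.2]{bochev2006} by analogy with Stokes. However, your proposed mechanism for the key estimate contains a genuine gap. You split $q_h=\Pi_hq_h+(q_h-\Pi_hq_h)$ and claim the piecewise-constant part is handled by ``the classical $\mathbb P_1/\mathbb P_0$-type bound'', i.e.\ by constructing $\vec v_h\in V_{h,\vec 0}$ whose divergence has prescribed elementwise means and $\norm{\vec v_h}{H^1}\lesssim\norm{\Pi_hq_h}{L^2(\Omega)}$. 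That is precisely the discrete \emph{inf-sup} condition for the $\mathbb P_1-\mathbb P_0$ (resp.\ $\mathbb Q_1-\mathbb P_0$) pair, and it is false on general shape-regular meshes: for tetrahedra the piecewise-constant pressure space has far more degrees of freedom than the linear velocity space, so the map $\vec v_h\mapsto(\int_K\div\vec v_h\dx)_{K}$ has a large cokernel and no uniformly bounded Fortin operator exists, while for hexahedra $\mathbb Q_1-\mathbb P_0$ is the textbook checkerboard-unstable element. The paper itself exhibits both pairings as inadequate, locking choices in Section~\ref{sec:block}. So the mean part of the pressure cannot be controlled this way, and the central inequality of your plan is unsupported.

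The estimate you need---and the one actually proved in the cited reference---does not decompose $q_h$ at all. One takes $\vec v$ from the \emph{continuous} inf-sup condition with $\div\vec v=q_h$ and $\norm{\vec v}{H^1}\lesssim\norm{q_h}{L^2(\Omega)}$, replaces it by a Cl\'ement/Scott--Zhang interpolant $\vec v_h$, and integrates by parts \emph{globally}; this is legitimate only because $q_h$ is continuous and piecewise (tri)linear, so no interelement jump terms appear, and it yields $\int_\Omega q_h\div\vec v_h\dx\geq c_1\norm{q_h}{L^2(\Omega)}^2-c_2\norm{q_h}{L^2(\Omega)}\,h\norm{\grad q_h}{L^2(\Omega)}$. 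The projection $\Pi_h$ enters only at the last step, through the local equivalence $h_K\norm{\grad q_h}{L^2(K)}\lesssim\norm{q_h-\Pi_hq_h}{L^2(K)}$ (valid by scaling since $q_h-\Pi_hq_h$ is a zero-mean polynomial on each $K$), which converts the gradient term into $s_h(q_h,q_h)^{1/2}$ and gives the weakened inf-sup condition $\sup_{\vec v_h}b(q_h,\vec v_h)/\norm{\vec v_h}{H^1}\geq c_1\norm{q_h}{Q}-c_2\,s_h(q_h,q_h)^{1/2}$. It is the continuity of the discrete pressure, not any stability of the piecewise-constant subspace, that makes the argument work. With that lemma substituted for your macro-element step, the remainder of your outline (Korn-based coercivity on the kernel, positivity of $c$ and $s_h$, uniformity in $\lambda$ up to the incompressible limit) goes through as you describe.
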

\begin{theorem}
Assume that \(\vec u \in {[H^1_{\vec g_D}(\Omega)]}^3 \cap {[H^2(\Omega)]}^3\) and
\(p \in \Lm{2}{\Omega} \cap \Hm{1}{\Omega}\) solve the
problem~\eqref{eq:aie_linear:1}--\eqref{eq:aie_linear:2}.
Further, assume that \((\vec u_h, p_h)\) are the solutions to the stabilized
problem~\eqref{eq:lin_elast_stab_discr}.
Then there exists a constant \(c_3\) independent of the mesh size \(h\) and it holds:
\begin{align}
  \norm{\vec u - \vec u_h}{V} + \norm{p-p_h}{Q} \leq
    c_3 h(\normHm{\vec u}{2}{\Omega}+\normHm{p}{1}{\Omega})
\end{align}
\end{theorem}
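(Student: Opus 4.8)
The plan is to treat \eqref{eq:lin_elast_stab_discr} as a non-consistent (stabilized) perturbation of the saddle-point problem \eqref{eq:aie_linear:1}--\eqref{eq:aie_linear:2} and to bound the error by a Strang-type argument that separates the best-approximation error from the consistency defect introduced by $s_h$. First I would collect all the forms into one mesh-dependent bilinear form on $W_h:=V_{h,\vec 0}\times Q_h$: writing $a(\vec w,\vec v):=2\mu\int_\Omega\tensor{\varepsilon}(\vec w):\tensor{\varepsilon}(\vec v)\dx$ for the elasticity form, $b(\vec v,q):=\int_\Omega q\div\vec v\dx$ and $c(r,q):=\tfrac{1}{\lambda}\int_\Omega rq\dx$, set
\begin{align*}
  \mathcal B_h\big((\vec w_h,r_h);(\vec v_h,q_h)\big) &:= a(\vec w_h,\vec v_h)+b(\vec v_h,r_h)+b(\vec w_h,q_h)\\
  &\quad -c(r_h,q_h)-\tfrac{1}{\mu^\ast}s_h(r_h,q_h),
\end{align*}
and equip $W_h$ with $\norm{(\vec v_h,q_h)}{W_h}^2:=\normHmP{\vec v_h}{1}{\Omega}{2}+\normLmP{q_h}{2}{\Omega}{2}+\tfrac{1}{\mu^\ast}s_h(q_h,q_h)$. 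The form $\mathcal B_h$ is symmetric. Continuity of $\mathcal B_h$ with respect to $\norm{\cdot}{W_h}$ is immediate from Cauchy--Schwarz together with $0\le s_h(q_h,q_h)\le\normLmP{q_h}{2}{\Omega}{2}$, while the discrete inf-sup stability
\[
  \underset{(\vec v_h,q_h)\in W_h}{\sup}\frac{\mathcal B_h\big((\vec w_h,r_h);(\vec v_h,q_h)\big)}{\norm{(\vec v_h,q_h)}{W_h}}\ \ge\ \beta\,\norm{(\vec w_h,r_h)}{W_h},\qquad\beta>0\ \text{independent of }h,
\]
is exactly the content of the preceding well-posedness theorem; its proof rests on the coercivity of $a$ (Korn's inequality), the positivity of $c$ and $s_h$, and the Bochev--Dohrmann quasi inf-sup estimate, in which the pressure fluctuation $q_h-\Pi_h q_h$ is controlled by $\sqrt{s_h(q_h,q_h)}$ and the piecewise-constant part $\Pi_h q_h$ by a Fortin/macro-element construction anchored in the continuous inf-sup condition.

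Since $W_h\subset V_{\vec 0}\times Q$, testing \eqref{eq:aie_linear:1}--\eqref{eq:aie_linear:2} with $(\vec v_h,q_h)\in W_h$ and comparing with \eqref{eq:lin_elast_stab_discr} shows that the exact solution satisfies the discrete system up to the single defect coming from the stabilization,
\[
  \mathcal B_h\big((\vec u,p);(\vec v_h,q_h)\big)=l_\mathrm{body}(\vec v_h)-\tfrac{1}{\mu^\ast}s_h(p,q_h)\qquad\text{for all }(\vec v_h,q_h)\in W_h,
\]
because $p\in Q$ but $p\notin Q_h$ in general. Subtracting the discrete equations, inserting an arbitrary $(\vec w_h,r_h)\in V_{h,\vec g_\mathrm{D}}\times Q_h$, and using the stability and continuity of $\mathcal B_h$ then yields the standard abstract error estimate for inf-sup stable inconsistent discretizations, see e.g.~\cite{ern2013theory},
\begin{align*}
  \norm{(\vec u-\vec u_h,\,p-p_h)}{W_h} \le C\,\Big(&\underset{(\vec w_h,r_h)}{\inf}\ \norm{(\vec u-\vec w_h,\,p-r_h)}{W_h}\\
  &+\underset{(\vec v_h,q_h)\in W_h}{\sup}\frac{\tfrac{1}{\mu^\ast}\lvert s_h(p,q_h)\rvert}{\norm{(\vec v_h,q_h)}{W_h}}\Big),
\end{align*}
with $C$ depending only on $\beta$ and the continuity constants, hence independent of $h$.

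It remains to estimate the two terms on the right. For the best-approximation term I would take $\vec w_h$ a Scott--Zhang quasi-interpolant of $\vec u$ in $V_{h,\vec g_\mathrm{D}}$ (so that the Dirichlet datum is respected) and $r_h$ the Scott--Zhang interpolant of $p$ in $Q_h$; the standard interpolation estimates for continuous piecewise-linear elements give $\normHm{\vec u-\vec w_h}{1}{\Omega}\le c\,h\,\normHm{\vec u}{2}{\Omega}$ and $\normLm{p-r_h}{2}{\Omega}\le c\,h\,\normHm{p}{1}{\Omega}$, together with $s_h(p-r_h,p-r_h)\le\normLmP{p-r_h}{2}{\Omega}{2}\le c\,h^2\normHmP{p}{1}{\Omega}{2}$, so the infimum is bounded by $c\,h(\normHm{\vec u}{2}{\Omega}+\normHm{p}{1}{\Omega})$. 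For the consistency term I would use that $\Pi_h$ is the elementwise $\Lm{2}{\Omega}$-projection onto constants, so the Poincar\'e--Wirtinger inequality gives $\normLm{p-\Pi_h p}{2}{K}\le c\,h_K\seminorm{p}{\Hm{1}{K}}$ on every $K\in\mathcal T_h$; then elementwise Cauchy--Schwarz and $\sum_{K\in\mathcal T_h}\normLmP{q_h-\Pi_h q_h}{2}{K}{2}=s_h(q_h,q_h)\le c\,\norm{(\vec v_h,q_h)}{W_h}^2$ yield $\lvert s_h(p,q_h)\rvert\le c\,h\,\normHm{p}{1}{\Omega}\norm{(\vec v_h,q_h)}{W_h}$. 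Adding the two bounds and observing that $\norm{\cdot}{W_h}$ dominates $\norm{\cdot}{V}+\norm{\cdot}{Q}$ gives the claimed estimate with $c_3$ independent of $h$.

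The genuinely hard part of the whole analysis is the mesh-independent discrete stability --- the Bochev--Dohrmann quasi inf-sup estimate --- which explains why the stabilization has to penalize precisely the fluctuation $p_h-\Pi_h p_h$; this, however, is exactly the preceding theorem and may be taken for granted here. Within the present proof the only non-routine point is the control of the consistency defect $s_h(p,\cdot)$: it is this term that forces the extra regularity assumption $p\in\Hm{1}{\Omega}$ and caps the convergence rate at $O(h)$, while the remaining steps of the error analysis amount to routine bookkeeping with Cauchy--Schwarz and Scott--Zhang interpolation.
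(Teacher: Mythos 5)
Your sketch is essentially the proof the paper delegates: the paper's entire argument is the single sentence that the result ``follows from [Bochev--Dohrmann 2006, Theorems 4.1, 5.1 and Corollary 5.2]'', and what you have written out --- mesh-uniform inf-sup stability of the combined form in the $s_h$-augmented norm, a Strang-type splitting into best approximation plus the consistency defect $\tfrac{1}{\mu^\ast}s_h(p,\cdot)$, elementwise Poincar\'e--Wirtinger, and Scott--Zhang interpolation --- is precisely the content of those cited results transcribed from Stokes to linear elasticity. The one caveat worth recording is that the preceding well-posedness theorem, as literally stated, gives only existence and uniqueness of a bounded solution, which on a finite-dimensional space does not by itself supply an $h$-uniform inf-sup constant; you correctly identify the uniform quasi-inf-sup estimate as the genuinely hard input, but it too must be imported from the Bochev--Dohrmann reference rather than read off from the paper's Theorem~1.
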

\begin{proof}
Due to the similarity of the linear elasticity and the Stokes problem the proof
follows from~\cite[Theorem 4.1, Theorem 5.1 and Corollary 5.2]{bochev2006}.
\end{proof}
%
\subsection{Discretization with MINI-elements}%
\label{sec:mini_element}
\subsubsection{Tetrahedral elements}
One of the earliest strategies in constructing a stable finite element pairing
for discrete saddle-point problems arising from Stokes Equations is the
MINI-Element, dating back to the works of Brezzi et al, see for
example~\cite{arnold1984stable,brezzi1992relationship}.
In the case of Stokes the velocity ansatz space is enriched by suitable
polynomial bubble functions.
More precisely, if we denote by $\hat{\mathbb P}_1$ the space of polynomials
with degree $\leq 1$ over the reference tetrahedron $\hat{K}$, we will choose
\reviewerTwo{
\begin{align*}
  \hat{\mathbb V} &= \hat{\mathbb P}_1 \oplus \{\hat{\psi}_\mathrm{B}\},\\
  \hat{\mathbb Y} &= \hat{\mathbb P}_1, \\
  \hat{\psi}_\mathrm{B} &:= 256 \xi_0 \xi_1 \xi_2 (1 - \xi_0 - \xi_1 - \xi_2),
\end{align*}
where \((\xi_0, \xi_1, \xi_2) \in \hat{K}\), see also \cite{boffi2013mixed}.
}
Classical results~\cite{boffi2013mixed} guarantee the stability of the
MINI-Element for tetrahedral meshes.
Due to compact support of the bubble functions, static condensation can be
applied to remove the interior degrees of freedom during assembly. A short review on the static condensation process is given in the Appendix. Hence, these degrees of freedom are not needed to be
considered in the full global stiffness matrix assembly which is a key
advantage of the MINI element.

\subsubsection{Hexahedral meshes}
\generalChange{In the literature mostly two dimensional quadrilateral \reviewerTwo{tessellations},
see for example~\cite{boffi2013mixed,bai1997quadmini,lamichhane2017quadrilateral}, were considered for MINI element discretizations.}
In this case, the proof of stability relies on the so-called
\emph{macro-element technique} proposed by \textcite{stenberg1990error}.

To motivate our novel ansatz for hexahedral bubble functions,
we will first give an overview of Stenbergs main results.
A macro-element $M$ is a connected set of elements in $\mathcal T_h$.
Moreover, two macro-elements $M_1$ and $M_2$ are said to be equivalent
\reviewerTwo{if and only if} they can be mapped continuously onto each other.
Additionally, for a macro element $M$ we define the spaces
\begin{align}
\label{eq:macro_elem_kernel}
\begin{split}
  \boldsymbol V_{0,\mathrm{M}}
    &:= \left\{\vec v \in {[H^1_0(M)]}^3 : \vec v
    = \hat{\vec v} \circ F_\mathrm{K}^{-1}, \right. \\
    &\left. \quad\quad\hat{v}
    \in {[\hat{\mathbb V}]}^3,~K \subset M\right\},\\
  P_\mathrm{M}
    &:= \left\{p \in L^2(M):p
    = \hat{p}\circ F_\mathrm{K}^{-1},\hat{p}
    \in \hat{\mathbb Y},~K\subset M\right\},\\
  N_\mathrm{M}
    &:= \left\{p \in P_\mathrm{M} : \int\limits_{M} p \div \vec v\dx = 0,
   \forall \vec v \in \boldsymbol V_{0,\mathrm{M}}\right\}.
\end{split}
\end{align}
Denote by $\Gamma_h$ the set of all edges in $\mathcal T_h$ interior
to $\Omega$. The macro-element partition $\mathcal M_h$ of
$\Omega$ then consists of a (not necessarily disjoint) partitioning into
macro-elements ${\{M_i\}}_{i=1}^M$ with
 $\overline{\Omega} = \bigcup_{i=1}^M \overline{M}_i$.
The macro element technique is then described by the following theorem,
 see~\cite{stenberg1990error}.
\begin{theorem}\label{thm:macroelement}
Suppose that there is a fixed set of equivalence classes
$\mathcal E_j$, $j=1,\ldots,q$, of macro-elements,
a positive integer $L$,
and a macro-element partition $\mathcal M_h$ such that
\begin{enumerate}
  \item[(M1)] for each $M_i \in \mathcal E_j$, $j=1,\ldots,q$,
    the space $N_\mathrm{M}$ is one-dimensional consisting of functions that
    are constant on $M$;
  \item[(M2)] each $M \in \mathcal M_h$ belongs to one of the classes
    $\mathcal E_i$, $i=1,2,\ldots,q$;
  \item[(M3)] each $K \in \mathcal T_h$ is contained in at least one and not
    more than $L$ macro-elements of $\mathcal M_h$;
  \item[(M4)] each $E \in \Gamma_h$ is contained in the interior of at least
    one and not more than $L$ macro-elements of $\mathcal M_h$.
\end{enumerate}
Then the discrete $\inf$-$\sup$-condition~\eqref{eq:discr_inf_sup} holds.
\end{theorem}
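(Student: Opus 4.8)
The statement is Stenberg's macro-element criterion, and the plan is to follow the classical argument, which shows that hypotheses (M1)--(M4) reduce the \emph{global} discrete inf-sup condition~\eqref{eq:discr_inf_sup} to a \emph{uniform local} inf-sup on each macro-element (modulo the space $N_M$) together with a \emph{coarse global} inf-sup for the pressures that are constant on each macro-element, and then reassembles a global test function. Throughout, inf-sup is understood on the subspace of $Q_h$ with vanishing mean (the constant mode is controlled separately, here by the $\tfrac1\lambda$-term in~\eqref{eq:aie_linear_discr:1}--\eqref{eq:aie_linear_discr:2}).

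\emph{Step~1 (uniform local inf-sup).} Fix one of the finitely many equivalence classes $\mathcal E_j$ and a representative macro-element. By (M1), the only $p\in P_M$ that is orthogonal in the sense of $\int_M p\,\div\vec v\dx$ to every $\vec v\in\boldsymbol V_{0,M}$ is a constant; equivalently the restricted divergence form is non-degenerate on $\boldsymbol V_{0,M}\times(P_M/\mathbb R)$. Since both spaces are finite-dimensional, a compactness argument (minimise a positive continuous function over the unit sphere of $P_M/\mathbb R$) yields a constant $\beta_j>0$ with
\begin{align*}
  \sup_{\vec v\in\boldsymbol V_{0,M}}
    \frac{\int_{M} p\,\div\vec v\dx}{\normHm{\vec v}{1}{M}}
  \;\ge\; \beta_j\,\normLm{p-\overline{p}_M}{2}{M},
\end{align*}
with $\overline{p}_M$ the mean of $p$ on $M$. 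Any $M$ of class $j$ is the image of the representative under the (uniformly controlled) element maps $F_K$, so a scaling argument — in which the powers of the local mesh size entering the two norms cancel — transfers this estimate to every $M\in\mathcal M_h$ of class $j$; by (M2), one then has the bound on every macro-element with $\beta^\ast:=\min_{1\le j\le q}\beta_j>0$. Note also $\int_M c\,\div\vec v\dx=0$ for a constant $c$ and $\vec v\in\boldsymbol V_{0,M}$ (divergence theorem, $\vec v$ vanishing on $\partial M$), so only the mean-free part of $p$ is ever seen.

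\emph{Step~2 (coarse inf-sup and assembly).} Let $N_h\subset Q_h$ be the space of functions constant on every macro-element, and for $q_h\in Q_h$ with zero mean write $q_h=\overline{q}_h+(q_h-\overline{q}_h)$, $\overline{q}_h$ the $L^2(\Omega)$-projection onto $N_h$; then $\overline{q}_h$ still has zero mean and $q_h-\overline{q}_h$ has zero mean on each $M$. For the oscillatory part, take on each $M$ the near-maximiser $\vec v_M\in\boldsymbol V_{0,M}$ from Step~1, rescaled so that $\int_M(q_h-\overline{q}_h)\div\vec v_M\dx\ge\tfrac{\beta^\ast}{2}\normLm{q_h-\overline{q}_h}{2}{M}^2$ and $\normHm{\vec v_M}{1}{M}\le\normLm{q_h-\overline{q}_h}{2}{M}$, extend by zero, and put $\vec v_1:=\sum_{M}\vec v_M\in V_{h,\vec 0}$. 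By (M3)--(M4) the macro-elements overlap at most $L$ times, whence $\norm{\vec v_1}{V_{\vec 0}}^2\le L\sum_M\normHm{\vec v_M}{1}{M}^2$, while the last remark in Step~1 gives $\int_\Omega\overline{q}_h\,\div\vec v_1\dx=0$, so $\int_\Omega q_h\,\div\vec v_1\dx\ge\tfrac{\beta^\ast}{2}\normLm{q_h-\overline{q}_h}{2}{\Omega}^2$. For $\overline{q}_h\in N_h$ one invokes the inf-sup (equivalently, the regularity) of the \emph{continuous} problem, the classical fact recalled above for~\eqref{eq:aie_linear:1}--\eqref{eq:aie_linear:2}, to get $\vec w\in{[H^1_0(\Omega)]}^3$ with $\div\vec w=\overline{q}_h$ and $\norm{\vec w}{V_{\vec 0}}\le C_\Omega\normLm{\overline{q}_h}{2}{\Omega}$, followed by a Cl\'ement-type quasi-interpolation $I_h\vec w\in V_{h,\vec 0}$ that is $H^1$-stable and reproduces the divergence moment $\int_M\div(I_h\vec w)\dx=\int_M\div\vec w\dx$ on each macro-element; since $\overline{q}_h$ is constant on $M$, this gives $\int_\Omega\overline{q}_h\,\div(I_h\vec w)\dx=\normLm{\overline{q}_h}{2}{\Omega}^2$. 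Taking $\vec v_h:=\vec v_1+\delta\,I_h\vec w$ with $\delta>0$ small and bounding the cross terms by Young's inequality produces $\int_\Omega q_h\,\div\vec v_h\dx\ge c\,\normLm{q_h}{2}{\Omega}^2$ and $\norm{\vec v_h}{V_{\vec 0}}\le C\normLm{q_h}{2}{\Omega}$, with $c,C$ depending only on $\beta^\ast$, $C_\Omega$, $L$ and the interpolation constants, which is~\eqref{eq:discr_inf_sup}.

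\emph{Main obstacle.} I expect the delicate point to be the \emph{uniformity} in Step~1: (M1) is purely qualitative (one-dimensionality of $N_M$), and converting it into a mesh-independent bound $\beta^\ast>0$ needs the equivalence-class structure of (M2), a compactness argument on the finitely many finite-dimensional reference spaces, and — crucially in the hexahedral case — a careful scaling argument for the element maps $F_K$, which here are only trilinear rather than affine, so the regularity assumptions on $\mathcal T_h$ must be used to control how $\normHm{\cdot}{1}{\cdot}$ transforms. The second, more technical, ingredient is the construction of the quasi-interpolant $I_h$ in Step~2 with the prescribed macro-element divergence moments together with $H^1$-stability; this is where the coarseness of $\mathcal M_h$ (again via (M3)--(M4)) and the richness of $\hat{\mathbb V}$ enter, and it is the part of the argument most sensitive to the precise choice of local spaces.
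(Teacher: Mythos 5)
The paper does not prove this theorem at all: it is quoted verbatim from Stenberg and disposed of with the citation \cite{stenberg1990error}, so there is no in-paper argument to compare against. Your sketch is, in substance, the classical two-level argument from that reference — a uniform local inf-sup modulo constants on each macro-element (compactness over the finitely many equivalence classes plus a scale-invariance argument for the divergence form against the $H^1$-seminorm), followed by a splitting of the pressure into a macro-wise constant part and an oscillatory part, with the two parts controlled by locally supported bubbles and by a coarse inf-sup respectively, and recombined via a Young-inequality balancing. Two points deserve care if you were to write this out. First, your decomposition $q_h=\overline{q}_h+(q_h-\overline{q}_h)$ with the remainder mean-free on \emph{each} $M$ tacitly assumes the macro-elements are disjoint, whereas the theorem (and the paper's own eight-hexahedron construction around a vertex) explicitly allows overlapping covers; Stenberg handles this by proving the local estimate on every $M$ and invoking a separate combination lemma in which the finite-overlap constant $L$ from (M3)--(M4) enters, rather than by an orthogonal splitting. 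Second, in Stenberg's proof condition (M4) is what guarantees stability of the \emph{reduced} pair (macro-wise constant pressures against the conforming velocity space); your replacement of this step by the continuous inf-sup plus a divergence-moment-preserving Cl\'ement operator is a legitimate alternative route to the same coarse estimate, but it is exactly the construction of that operator where (M4) is consumed, so it should not be presented as an optional technicality. With those caveats your outline is a faithful reconstruction of the cited proof rather than a new one.
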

Conditions (M2)--(M4) are valid for a quasi-uniform \reviewerTwo{tessellation}
of $\Omega$ into hexahedral elements and, thus, it remains to show (M1).
To this end, we consider a macro-element $M_i \in \mathcal M_h$ consisting of
 eight hexahedrons that share a common vertex $\vec x_i \in \Omega$, see
Figure~\ref{fig:hexa_macroelement_def}.
A macro-element partitioning of this type fulfills conditions (M1)--(M3)
from Theorem~\ref{thm:macroelement}.
We will next show, that Assumption (M1) depends on the choice of the bubble
functions inside every $K \in M_i$.
For ease of presentation and with no loss of generality we will assume that
$M_i$ is a parallelepiped.
This means that the mapping $F_{\mathrm{M}_i}$ from $\hat{K}$ onto $M_i$
is affine, so there exists an invertible matrix
 $\tensor J_i \in \mathbb R^{3\times 3}$ such that
\begin{align*}
  \vec x = F_{\mathrm{M}_i}(\boldsymbol \xi)
    = \tensor J_i \boldsymbol\xi + \vec x_0,
\end{align*}
where $\boldsymbol \xi \in \hat{K} = {[-1,1]}^3$ and $\vec x_0$ is a given node
of $M_i$.
The case of $M_i$ not being the image of an affine mapping of
$\hat{K}$ can be handled analogously, however, there are constraints on the
invertibility of $F_{\mathrm{M}_i}$, see~\cite{knabner2003}.
Let ${\{\psi_j\}}_{j=1}^8$ denote the standard trilinear basis functions on the
unit hexahedron. These functions will serve as a basis for $P_{\mathrm{M}_i}$.
For the space $\boldsymbol{V}_{0,\mathrm{M}_i}$ we will chose one piecewise
continuous trilinear ansatz function defined in $\vec x_i$ and for
each sub-hexahedron we will add two bubble functions as degrees of freedom.
The distribution of the degrees of freedom is depicted in
Figure~\ref{fig:hexa_macroelement}.
On $\hat{K}$ we will define the following two bubble functions
\begin{align}
  \label{eq:def_hexa_bubble:1}
  \hat{\phi}_\mathrm{B}^1 &:= {(1-\xi_0)}^2{(1-\xi_1)}^2{(1-\xi_2)}^2
    \hat{\psi}_{\alpha},\\
  \label{eq:def_hexa_bubble:2}
  \hat{\phi}_\mathrm{B}^2 &:= {(1-\xi_0)}^2{(1-\xi_1)}^2{(1-\xi_2)}^2
    \hat{\psi}_{\beta},
\end{align}
where the indices $\alpha, \beta$ are chosen such that $\hat{\psi}_{\alpha}$
and $\hat{\psi}_{\beta}$ are two ansatz functions belonging to two diagonally
opposite nodes.
Having this, we will form a basis for $\boldsymbol V_{0,\mathrm{M}_i}$
by gluing together the images of the basis functions of each sub-hexahedron.
So we can write a basis for $\boldsymbol V_{0,\mathrm{M}_i}$ as
\begin{align}
  \label{eq:def_v0mi}
  V_{0,\mathrm{M}_i} &:= \operatorname{span}\{\psi_{\vec{x}_i},
    \phi_{\mathrm{B},1}^1, \phi_{\mathrm{B},2}^1,\ldots,\phi_{\mathrm{B},1}^8,
    \phi_{\mathrm{B},2}^8\},\\
  \nonumber
  \boldsymbol{V}_{0,\mathrm{M}_i} &:= {[V_{0,\mathrm{M}_i}]}^3.
\end{align}
Here, $\psi_{\vec x_i}$ corresponds to a piecewise trilinear ansatz function
that has unit value in $\vec x_i$ and zero in all other nodes of $M_i$.
Thus, we can calculate that $\dim(P_{\mathrm{M}_i}) = 27$
and $\dim(\boldsymbol V_{0,\mathrm{M}_i}) = 51$.
For ease of presentation we will rename the elements of~\eqref{eq:def_v0mi}
as ${\{\phi_i\}}_{i=1}^{17}$.
Now, for $q_h \in P_{\mathrm{M}_i}$ and
$\vec v_h \in \boldsymbol{V}_{0,\mathrm{M}_i}$ we can write
\begin{align*}
  \int\limits_{M_i} q_h \div \vec v_h\dx
    = \sum_{k=1}^{17}\sum_{l=1}^{27}\sum_{j=1}^3v_k^j q_l
      \int\limits_{M_i} \grad_{\vec x} \phi_k[j] \psi_l\dx.
\end{align*}
Next, we use the chain rule to get
 $\grad_{\vec x} \phi_k = \tensor J_i^{-\top} \hat{\grad}_{\boldsymbol \xi}$
 and a change of variables to obtain
\begin{align*}
  &\sum_{k=1}^{17}\sum_{l=1}^{27}\sum_{j=1}^3 v_k^j q_l
    \int\limits_{M_i} \grad_{\vec x} \phi_k[j] \psi_l\dx,\\
  &=\sum_{k=1}^{17}\sum_{l=1}^{27}\sum_{j=1}^3v_k^j q_l
    \int\limits_{\hat{K}} \tensor J_i^{-\top}\hat{\grad}_{\boldsymbol \xi}
    \hat{\phi}_k[j] \hat{\psi}_j\lvert \det{\tensor J_i}\rvert\mathrm{d}\boldsymbol\xi.
\end{align*}
This means we can find a matrix $\widetilde{\tensor D} \in \mathbb{R}^{27\times 51}$ such that
\begin{align*}
  \int\limits_{M_i} q_h \div \vec v_h\dx = \vec q^\top  \widetilde{\tensor D} \vec v,
\end{align*}
where $\vec q$ and $\vec v$ encode the nodal values of $q_h$ and $\vec v_h$.
The following ordering will be employed for $\vec v$
\begin{align*}
  \vec v = {\left(v_1^1,v_1^2,v_1^3,\ldots,v_{17}^1,v_{17}^2,v_{17}^3\right)}^\top.
\end{align*}
To proof (M1) we need to show that the rank of the matrix $\widetilde{\tensor D}$ is 26.
Due to the invertibility of $\tensor J_i$ the rank of the matrix $\widetilde{\tensor D}$ will remain unchanged by replacing $M_i$ by $\hat{K}$.
Thus, it suffices to compute the rank of the matrix $\tensor D$
whose $j^{\mathrm{th}}$ row is defined by
\begin{multline*}
  \left(\int\limits_{\hat{K}} \partial_{\xi_1}\phi_1\psi_1\mathrm{d}\boldsymbol\xi,\int\limits_{\hat{K}} \partial_{\xi_2}\phi_1\psi_1\mathrm{d}\boldsymbol\xi,\int\limits_{\hat{K}} \partial_{\xi_3}\phi_1\psi_1\mathrm{d}\boldsymbol\xi,\right.\\
   \left.,\ldots,\right. \\ \left. \int\limits_{\hat{K}} \partial_{\xi_1}\phi_{17}\psi_j\mathrm{d}\boldsymbol\xi,\int\limits_{\hat{K}} \partial_{\xi_2}\phi_{17}\psi_j\mathrm{d}\boldsymbol\xi,\int\limits_{\hat{K}} \partial_{\xi_3}\phi_{17}\psi_j\mathrm{d}\boldsymbol\xi\right).
\end{multline*}
By this formula the matrix $\tensor D$ can be explicitly calculated, e.g., by
using software packages like \emph{Mathematica}$^{\mathrm{TM}}$ and further
analyzed. We can conclude that the rank of $\tensor D$ is 26 and thus~(M1)
holds and we can apply Theorem~\ref{thm:macroelement}.
A \emph{Mathematica}$^{\mathrm{TM}}$ notebook
containing computations discussed in this section is available upon request.
\begin{remark}
  Contrary to the two-dimensional case studied
  in~\cite{bai1997quadmini,lamichhane2017quadrilateral} it is not sufficient to
  enrich the standard isoparametric finite element space for hexahedrons with
  only one bubble function.
  In this case both the spaces $\boldsymbol V_{0,\mathrm{M}_i}$
  and $P_{\mathrm{M}_i}$ have a dimension of 27, however,
  matrix $\tensor D$ has only rank 24.
\end{remark}
\begin{remark}
  Although not mentioned explicity, the stability of the MINI element holds also for mixed discretizations.
\end{remark}
\begin{figure}[ht]
  \centering
  \includegraphics[width=0.8\linewidth]{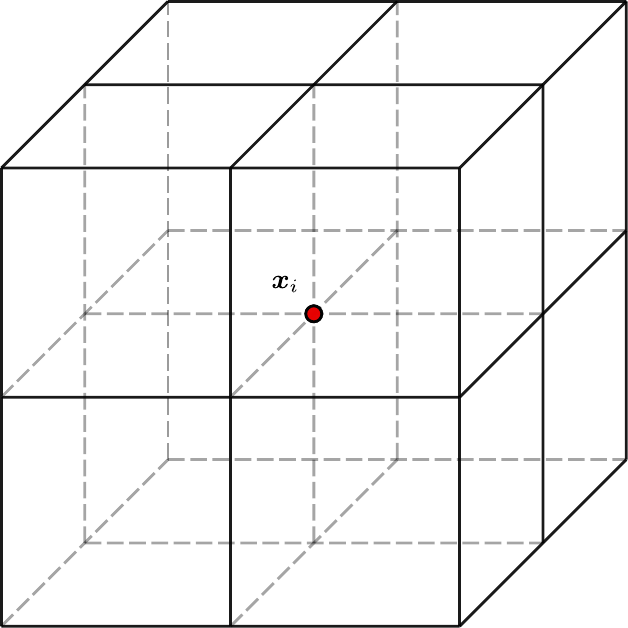}
  \caption{Macro-element definition for a mesh point $\vec x_i$.}%
  \label{fig:hexa_macroelement_def}
\end{figure}
\begin{figure}[ht]
  \centering
  \includegraphics[width=0.8\linewidth]{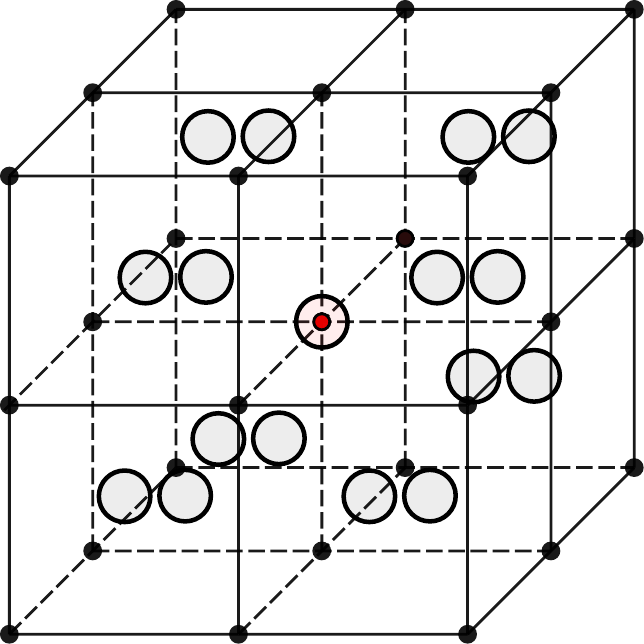}
  \caption{Macro-element distribution of degrees of freedom for
    $\vec v_h \in \boldsymbol{V}_{0,\mathrm{M}}$ and $q_h \in P_\mathrm{M}$.
    Small filled dots correspond to $P_\mathrm{M}$ and bigger opaque circles
    correspond to $\boldsymbol{V}_{0,\mathrm{M}}$.}%
  \label{fig:hexa_macroelement}
\end{figure}
\subsection{Changes and limitations in the nonlinear case}%
\label{sec:changes_lin_nolin}
\reviewerOne{
One of the main differences between the linear and nonlinear case stems
from the definition of the pressure $p$ as remarked in~\cite{boffi2017remark}.
Consider, as an example, the strain energy function for a nearly
incompressible neo-Hookean material where
\begin{align*}
 \overline{\Psi}(\overline{\tensor C}) := \frac{\mu}{2}\left(\mathrm{tr}(\overline{\tensor C})-3\right),
\end{align*}
with $\mu>0$ a material parameter.
Then, $\tensor{S}_{\mathrm{tot}}$ and $\mathbb{C}_{\mathrm{tot}}$,
evaluated at $(\vec u_k, p_k) = (\vec 0, 0)$,
are given by
\begin{align*}
 \tensor{S}_{\mathrm{tot}} = \tensor 0,\quad
 \mathbb{C}_{\mathrm{tot}} = 2\mu \tensor I \odot \tensor I - \frac{2\mu}{3} \tensor I \otimes \tensor I,
\end{align*}
independent of the choice of $\Theta(J)$.
Assuming that $\Omega:=\Omega_0\approx\Omega_t$ we obtain from
Eqs.~\eqref{eq:saddle_point_nl:1}--\eqref{eq:saddle_point_nl:2} the following
linear system
\begin{align}
  \label{eq:aie_hydro:1}
  2\mu\int\limits_{\Omega} \tensor{\varepsilon}_\mathrm{d}(\vec u)
  : \tensor{\varepsilon}_\mathrm{d}(\vec v) \dx
    + \int\limits_{\Omega}p \div \vec v \dd \Xvec{x}
    &= \int\limits_{\Omega} \vec f \cdot \vec v \dd \Xvec{x},\\
  \label{eq:aie_hydro:2}
  \int\limits_{\Omega} \div \vec u q \dd \Xvec{x}
    - \frac{1}{\kappa} \int\limits_{\Omega} p q \dd \Xvec{x} &= 0,
\end{align}
where
$\tensor{\varepsilon}_\mathrm{d}(\vec u) := \tensor{\varepsilon}(\vec u) - \frac{1}{3}\mathrm{div}(\vec u) \tensor I$.
While the pressure in formulation~\eqref{eq:aie_linear:1}--\eqref{eq:aie_linear:2}
is usually denoted as \emph{Herrmann pressure}~\cite{herrmann1965elasticity},
above formulation~\eqref{eq:aie_hydro:1}--\eqref{eq:aie_hydro:2}
uses the so-called \emph{hydrostatic pressure}.\newline
The arguments to prove the \emph{inf-suf} condition for this linear problem
remains the same as for~\eqref{eq:aie_linear:1}--\eqref{eq:aie_linear:2}.
For the extension of the \emph{inf-suf} condition to the nonlinear case we
already stated earlier in Eq.~\eqref{eq:non_lin_inf_sup} that
\begin{align*}
  b_k(q_h,\vec v_h) = \int\limits_{\Omega_{t,h}} q_h \Theta'(J_h) \div \vec v_h\dx.
\end{align*}
Here, $\Omega_{t,h}$ is the approximation of the real current configuration
$\Omega_t$. Our conjecture is that stability of the chosen elements is given
provided sufficient fine discretizations and volumetric functions
$\Theta(J)$ fulfilling $\Theta'(J) \geq 1$.
However, we can not offer a rigorous proof of this, and rely on our numerical
results which showed no sign of numerical instabilities.\newline
Concerning well-posedness of~\eqref{eq:aie_hydro:1}--\eqref{eq:aie_hydro:2},
it was noted in~\cite{boffi2017remark},
that the coercivity on the kernel condition~\eqref{eq:coercivity_on_kernel}
does not hold in general, which makes the formulation with hydrostatic pressure
not well-posed in general.
However, it remains well-posed for strictly divergence-free finite elements or pure
Dirichlet boundary conditions.
This has also been observed by other authors, see~\cite{khan2019,viebahn2018}.
Even if the coercivity on the kernel condition can be shown for the hydrostatic,
nearly incompressible linear elastic case this result may not transfer
to the nonlinear case.
Here, this condition is highly dependent on the chosen nonlinear material law and
for the presented benchmark examples (Section~\ref{sec:results})
we did not observe any numerical instabilities.\newline
For an in-depth discussion we refer the interested reader
to~\cite{auricchio2010importance,auricchio2005stability}.
A detailed discussion on Herrmann-type pressure in
the nonlinear case is presented in~\cite{shariff1997,shariff2000}. \newline
To show well-posedness for the special case of the presented MINI element
discretizations
we rely on results given in \cite[Section 4]{boffi2017remark}.
There it is shown, that discrete coercivity on the kernel holds,
provided that a rigid body mode is the only function that renders
\begin{align*}
  a(\vec{u}_h, \vec{v}_h) := \int\limits_{\Omega_h}
  \tensor{\varepsilon}_d(\vec{u}_h) : \tensor{\varepsilon}_d(\vec{v}_h)\dx
\end{align*}
from~\eqref{eq:aie_hydro:1}--\eqref{eq:aie_hydro:2} zero.
We could obtain this result following the same procedure outlined in
\cite{boffi2017remark} for both hexahedral and tetrahedral MINI elements.
A \emph{Mathematica}$^{\mathrm{TM}}$ notebook
containing the computations discussed is available upon request.\newline
In the case of the pressure-projection stabilization
we will modify Equation~\eqref{eq:def_r_lower} using the stabilization
term~\eqref{eq:def_db_stabil}
\begin{align*}
  R_\mathrm{lower}(\vec u_h, p_h;\Delta q_h) &:= b_{\mathrm{vol}}(\vec u_h; \Delta q_h) - c(p_h,\Delta q_h) \\&- \frac{1}{\mu^*} s_h(p_h,q_h).
\end{align*}
Here, the stabilization parameter $\mu^*>0$ is supposed to be large enough and
will be specifically defined for each nonlinear material considered.
Note, that by modifying the definition of the lower residual,
we introduced a mesh dependent perturbation of the original residual.
An estimate of the consistency error caused by this is not readily
available and will be the topic of future research. However, results and
comparisons to benchmarks in Section~\ref{sec:results} suggest that
this error is negligible for the considered problems as long as $\mu^\ast$ is
well-chosen.
\reviewerTwo{
If not specified otherwise we chose
\begin{itemize}
  \item $\mu^\ast=\mu$ for neo-Hookean materials and
  \item $\mu^\ast=c_1$ for Mooney--Rivlin materials
\end{itemize}
in the results section.}
For the pressure-projection stabilized equal order pair we can not transfer
the results from the linear elastic case to the non-linear case,
as the proof of well-posedness relies on the coercivity of
$a_k(\vec u, \vec v)$ which can not be concluded for this formulation.
However, no convergence issues occured in the numerical examples given
in Section \ref{sec:results}.\\
The considerable advantage of the MINI element is that there are no
modifications needed and that no additional stabilization parameters are
introduced into the system.
}

\subsection{Changes and limitations in the transient case}%
The equations presented in Section~\ref{sec:methods} are not yet suitable
for transient simulations. To include this feature we modify the nonlinear
variational problem~\eqref{eq:non_lin_var_prob:1} in the following way:
\begin{align}
  \label{eq:trans_nonlin:1}
  R_\mathrm{upper}^\mathrm{trans}(\vec u,p; \Delta \vec v)
    &:= \rho_0 \int\limits_{\Omega_0} \ddot{\vec u} \cdot \Delta \vec v\dx
    + R_\mathrm{upper}(\vec u, p; \Delta v),\\
  \label{eq:trans_nonlin:2}
  R_\mathrm{lower}^\mathrm{trans}(\vec u,p; \Delta q)
    &:= R_\mathrm{lower}(\vec u, p; \Delta q).
\end{align}
For time discretization we considered a generalized-$\alpha$ method,
see~\cite{chung1993} and also the Appendix for a short summary.
Due to the selected formulation, the resulting ODE system turns out to be of
degenerate hyperbolic type. Hence, we implemented a variant of the
generalized-$\alpha$ method as proposed in~\cite{kadapa2017} and using that we
did not observe any numerical issues in our simulations.
Note, that other groups have proposed a different treatment of the
incompressibility constraints in the case of transient problems,
see~\cite{rossi2016implicit,scovazzi2016simple} for details.
%
\section{Numerical examples}%
\label{sec:results}
%
While benchmark cases presented in this section are fairly simple,
mechanical applications often require highly resolved meshes. Thus,
efficient and massively parallel solution algorithms for the
linearized system of equations become an important factor to deal with the
resulting computational load.
After discretization, at each Newton--Raphson step a block system of the form
\begin{align*}
\begin{pmatrix}
 \tensor K_h & \tensor B_h^\top \\
 \tensor B_h & \tensor C_h
\end{pmatrix}
\begin{pmatrix}
  \Delta \vec u \\
  \Delta \vec p
\end{pmatrix}
=
\begin{pmatrix}
  -\Xvec R_\mathrm{upper} \\
  -\Xvec R_\mathrm{lower}
\end{pmatrix}
\end{align*}
has to be solved.
In that regard, we used a generalized minimal residual method (GMRES)
and efficient preconditioning based
on the \texttt{PCFIELDSPLIT}%
\footnote{https://www.mcs.anl.gov/petsc/petsc-current/docs/manualpages/PC/PCFIELDSPLIT.html}
package from the library
\emph{PETSc}~\cite{petsc-user-ref} and the incorporated
solver suite \emph{hypre/BoomerAMG}~\cite{henson2002boomeramg}.
By extending our previous work~\cite{augustin2016anatomically}
we implemented the methods in the finite element code
\textit{Cardiac Arrhythmia Research Package} (CARP)~\cite{vigmond2008solvers}.
%
\subsection{Analytic solution}\label{sec:analytic_solution}
To verify our implementation we consider a very simple uniaxial tension test,
see also~\cite[Sec. 10.1]{weise2014}.
The computational domain is described by one eighth part of a cylinder with
length $L=\SI{2}{\mm}$, and radius $R=\SI{1}{\mm}$
\begin{align*}
  \Omega_\mathrm{cyl,0} := \left\{ \vec x \in [0,L] \times {[0,R]}^2
  : y^2+z^2\leq R\right\},
\end{align*}
see Figure~\ref{fig:analytic_solution_cylinder}.
This cylinder is stretched to a length of $L+\Delta L$,
with $\Delta L = \SI{2}{\mm}$.
\begin{figure}[htbp]
  \centering
  \includegraphics[width=1.0\linewidth]{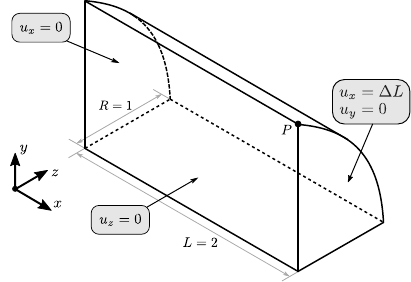}
  \caption{\emph{Analytic solution:} geometry and boundary conditions.}%
  \label{fig:analytic_solution_cylinder}
\end{figure}

\reviewerTwo{We chose a neo-Hookean material
\begin{align*}
  \Psi(\tensor C) = \frac{\mu}{2}\left(\mathrm{tr}(\overline{\tensor C})
    - 3\right) + \frac{\kappa}{2} {\ln(J)}^2,
\end{align*}
with $\mu =\SI{7.14}{\mega\pascal}$ and impose full incompressiblity, i.e.,
$1/\kappa=0$.}
For this special case, an analytic solution can be computed by
\begin{align*}
  \vec u &= (t x, \Delta R(t) y, \Delta R(t) z),\\
  p(t) &= \frac{\mu}{3}\left({\left(1+\frac{t \Delta L}{L}\right)}^2
      -{\left(1+\frac{t \Delta L}{L}\right)}^{-1}\right),\\
      \Delta R(t) &= {\left(1+\frac{t\Delta L}{L}\right)}^{-\frac{1}{2}} - 1,
\end{align*}
where $t \in [0,1]$ corresponds to the load increment.
Two meshes consisting of \num{5420} points and \num{4617} hexahedral
or \num{27702} tetrahedral elements were used.
We performed 20 incremental load steps \reviewerTwo{with respect to} $\Delta L$.
In Figure~\ref{fig:cyl_comp} it is shown that the results of the numerical
simulations render identical results for all the chosen setups and are in
perfect agreement with the exact solution plotted in blue.
\begin{figure}[htbp]
  \textbf{(a)}\\
  \includegraphics[width=0.9\linewidth]{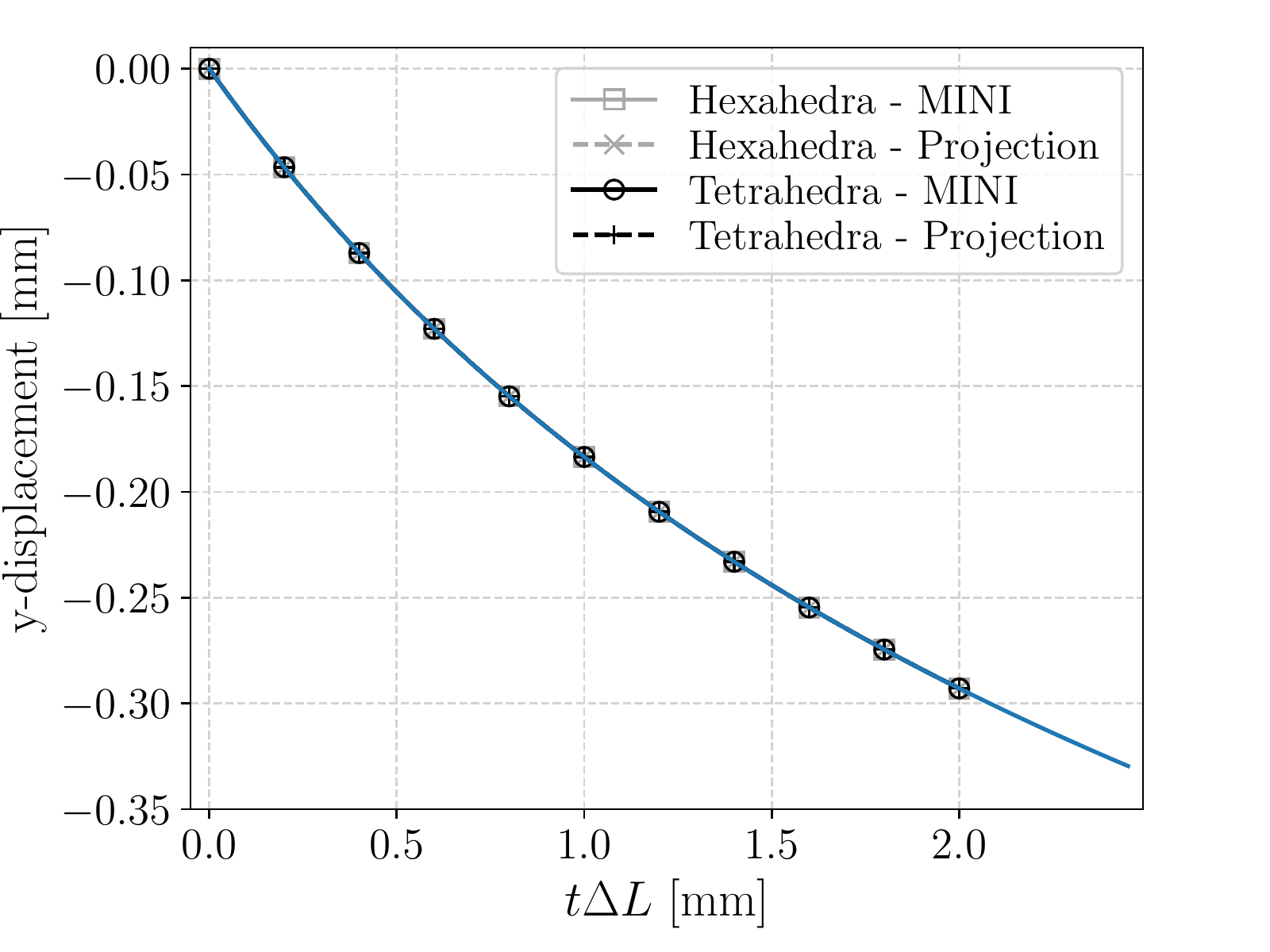}\\
  \textbf{(b)}\\
  \includegraphics[width=0.9\linewidth]{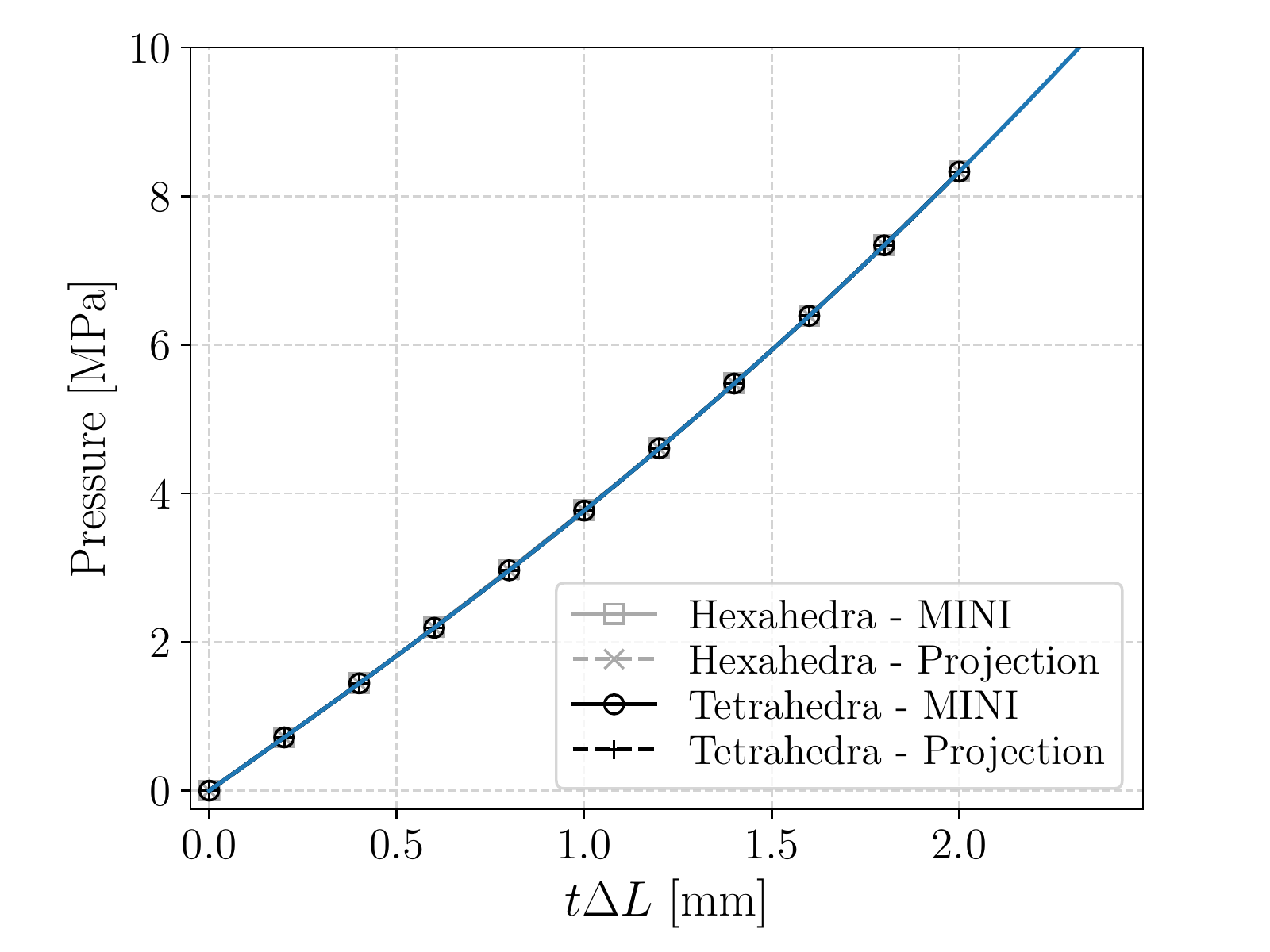}
  \caption{\emph{Analytic solution:} (a) $y$-component of displacement and (b) pressure at point
    $P={(2,0,1)}^\top$. Simulation results of all proposed formulations are in
    perfect alignment with the analytic solution printed in blue.}%
  \label{fig:cyl_comp}
\end{figure}

\subsection{Block under compression}\label{sec:block}
The computational domain, studied by multiple authors,
see, e.g.,~\cite{caylak2012stabilization,masud2013framework,reese2000new},
consists of a cube loaded by an applied pressure in the center of the top face;
see Figure~\ref{fig:block_under_compression}.
A quarter of the cube is modeled, where symmetric Dirichlet boundary conditions
are  applied to the vertical faces and the top face is fixed in the horizontal
plane.
\begin{figure}[htbp]
  \centering
  \includegraphics[width=0.8\linewidth]{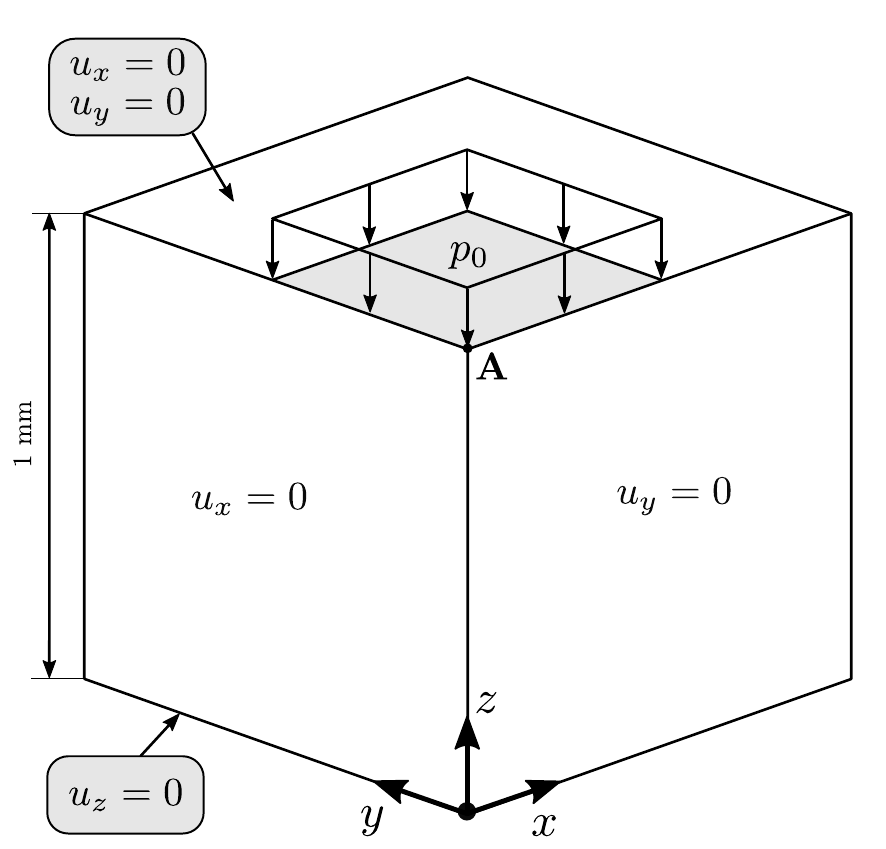}
  \caption{\emph{Block under compression:} geometry and boundary conditions.}%
  \label{fig:block_under_compression}
\end{figure}

The same neo-Hookean material model as in~\cite{masud2013framework} is used:
\begin{align*}
  \Psi(\tensor{C})
    &=\frac{1}{2}\mu\left(\operatorname{tr}(\tensor{C})-3\right)-\mu\ln J
    + \frac{\lambda}{2}{(\ln J)}^2,
\end{align*}
with material parameters
\(\lambda=\SI{400889.806}{\MPa}\), \(\mu=\SI{80.194}{\MPa}\).
To test mesh convergence the simulations were computed on a series of
uniformly refined tetrahedral and hexahedral meshes,
see Table~\ref{tab:block_meshes}.
\reviewerTwo{
  Figure~\ref{fig:block_meshes} shows the deformed meshes for the level $\ell=2$ with loads
  $p_0=\SI{320}{\mega\Pa}$ and $p_0=\SI{640}{\MPa}$, respectively.
}
\generalChange{In all cases discussed in this section we used 10 loading steps to arrive
at the target pressure $p_0$.}
\begin{table}[htpb]
\reviewerOne{
  \caption{Properties of \emph{cube meshes} used
           in Section~\ref{sec:block}.}%
  \label{tab:block_meshes}
  \centering
  \begin{tabular}{rrr}
    \toprule
    \multicolumn{3}{c}{Hexahedral Meshes}\\
    $\ell$ & Elements & Nodes \\
    \midrule
    1 & \num{512}   & \num{729}  \\
    2 & \num{4096}  & \num{4913}  \\
    3 & \num{32768} & \num{35937} \\
    4 & \num{262144} & \num{274625} \\
    5 & \num{2097152} & \num{2146689} \\
    \bottomrule
  \end{tabular}\hspace{1em}
  \begin{tabular}{rrr}
    \toprule
    \multicolumn{3}{c}{Tetrahedral Meshes}\\
    $\ell$ & Elements & Nodes \\
    \midrule
    1 & \num{3072}   & \num{729}  \\
    2 & \num{24576}  & \num{4913}  \\
    3 & \num{196608} & \num{35937} \\
    4 & \num{1572864} & \num{274625} \\
    5 & \num{12582912} & \num{2146689} \\
    \bottomrule
  \end{tabular}
}
\end{table}
As a measure of the compression level the vertical displacement of the node
at the center of the top surface, i.e. the edge point \textbf{A} of
the quarter of the cube, is plotted in Figure~\ref{fig:block_displacementA}.
Small discrepancies can be attributed to differences in the meshes for
tetrahedral and hexahedral grids, however, the different stabilization
techniques yield almost the same results for finer grids.
\reviewerOne{
Note, that the displacements at the edge point \textbf{A} obtained using
the simple \(\mathbb Q_1 - \mathbb P_0\) hexahedral and
\(\mathbb P_1 - \mathbb P_0\) tetrahedral elements seem to be in a similar
range compared to the other approaches. The overall displacement field, however,
was totally inaccurate rendering \(\mathbb Q_1 - \mathbb P_0\) and \(\mathbb P_1 - \mathbb P_0\)
elements an inadequate choice for this benchmark problem.
The solution for Taylor--Hood (\(\mathbb P_2 - \mathbb P_1\)) tetrahedral elements
was obtained using the FEniCS~project~\cite{alnaes2015fenics}.
Here, as a linear solver, we used a GMRES solver with preconditioning similar to the MINI and
projection-based approach, see first paragraph of Section~\ref{sec:results}.
The \texttt{PCFIELDSPLIT} and \emph{hypre/BoomerAMG} settings were slightly
adapted to optimize computational performance for quadratic ansatz functions.
We comparing simulations with about the same number of degrees of freedom,
not accuracy as, e.g., in \cite{chamberland2010comparison}.
For coarser grids computational times were in the same time range for
all approaches; see, e.g., the cases with approximately $10^6$ degrees of
freedom and target pressure of $p_0=\SI{320}{\mmHg}$ in Table~\ref{tab:block:computational_times}(a).
For the simulations with the finest grids
with approximately $10^7$ degrees of freedom, however, we could not find a
setting for the Taylor--Hood elements that was competitive to MINI and
pressure-projection stabilizations. The computational times to
arrive at the target pressure of $p_0=\SI{320}{\mmHg}$ using 192 cores on ARCHER, UK
were about 10 times higher for Taylor--Hood elements using FEniCS, see
Table~\ref{tab:block:computational_times}(b).}
\reviewerOne{
We attribute that to a higher communication load and  higher memory requirements
of the Taylor--Hood elements: memory to store the block stiffness matrices
was approximately
\num{2.5} times higher for Taylor--Hood elements compared to MINI and
projection-stabilization approaches (measured using the \texttt{MatGetInfo}%
\footnote{https://www.mcs.anl.gov/petsc/petsc-current/docs/manualpages/Mat/MatGetInfo.html}
function provided by PETSc).
Note, that although we used the same linear solvers,
the time comparisons are not totally just as results were obtained using two
different finite element solvers, CARP and FEniCS.
Note also, that timings are usually very problem dependent and for this block
under compression benchmark high accuracy was already achieved with
coarse grids for hexahedral and Taylor--Hood discretizations.\newline
For a further analysis regarding computational
costs of the MINI element and the pressure-projection stabilization, see Section~\ref{sec:twisting_column}.
}
\begin{table}[htbp!]
\reviewerOne{
  \caption{\emph{Block under compression:}
    Comparison of computational times for different discretizations.
    Timings were obtained using \textbf{(a)} 48 cores and \textbf{(b)}
    192 cores on ARCHER, UK.
    Coarser grids, see Table~\ref{tab:block_meshes}, are used for
    Taylor--Hood elements
    \(\mathbb P_2- \mathbb P_1\) to compare computational times for a
  similar number of degrees of freedom (DOF).}%
  \label{tab:block:computational_times}
  \textbf{(a)}\\[0.5em]
  \begin{tabular}{rrrrr}
    \toprule
    Discretization & Grid & DOF  &Tet. & Hex. \\
    \midrule
     Projection & $\ell=4$ & 1.098 Mio. & \SI{330}{\s} & \SI{438}{\s}    \\
     MINI       & $\ell=4$ & 1.098 Mio. & \SI{873}{\s} & \SI{655}{\s}   \\
     \(\mathbb P_2- \mathbb P_1\) &$\ell=3$ & 0.860 Mio. &  \SI{1202}{\s}  & -- \\
    \bottomrule
  \end{tabular}\\[1em]
  \textbf{(b)}\\[0.5em]
  \begin{tabular}{rrrrr}
    \toprule
    Discretization & Grid & DOF & Tet. & Hex. \\
    \midrule
    Projection & $\ell=5$ & 8.587 Mio. &   \SI{2488}{\s} & \SI{2192}{\s}    \\
    MINI       & $\ell=5$ & 8.587 Mio. &   \SI{3505}{\s} & \SI{4640}{\s}   \\
    \(\mathbb P_2 - \mathbb P_1\) & $\ell=4$ & 6.715 Mio.  & \SI{27154}{\s}  & -- \\
    \bottomrule
  \end{tabular}
}
\end{table}

\generalChange{
In Figure~\ref{fig:block_comparison} the hydrostatic pressure is plotted for
the MINI element and the projection-based stabilization. These
results are very smooth in all cases and agree well with those published
in~\cite{caylak2012stabilization,elguedj2008projection,masud2013framework,reese2000new}.}

\begin{figure}[htbp]
  \textbf{(a)}\\
  \includegraphics[width=0.9\linewidth]{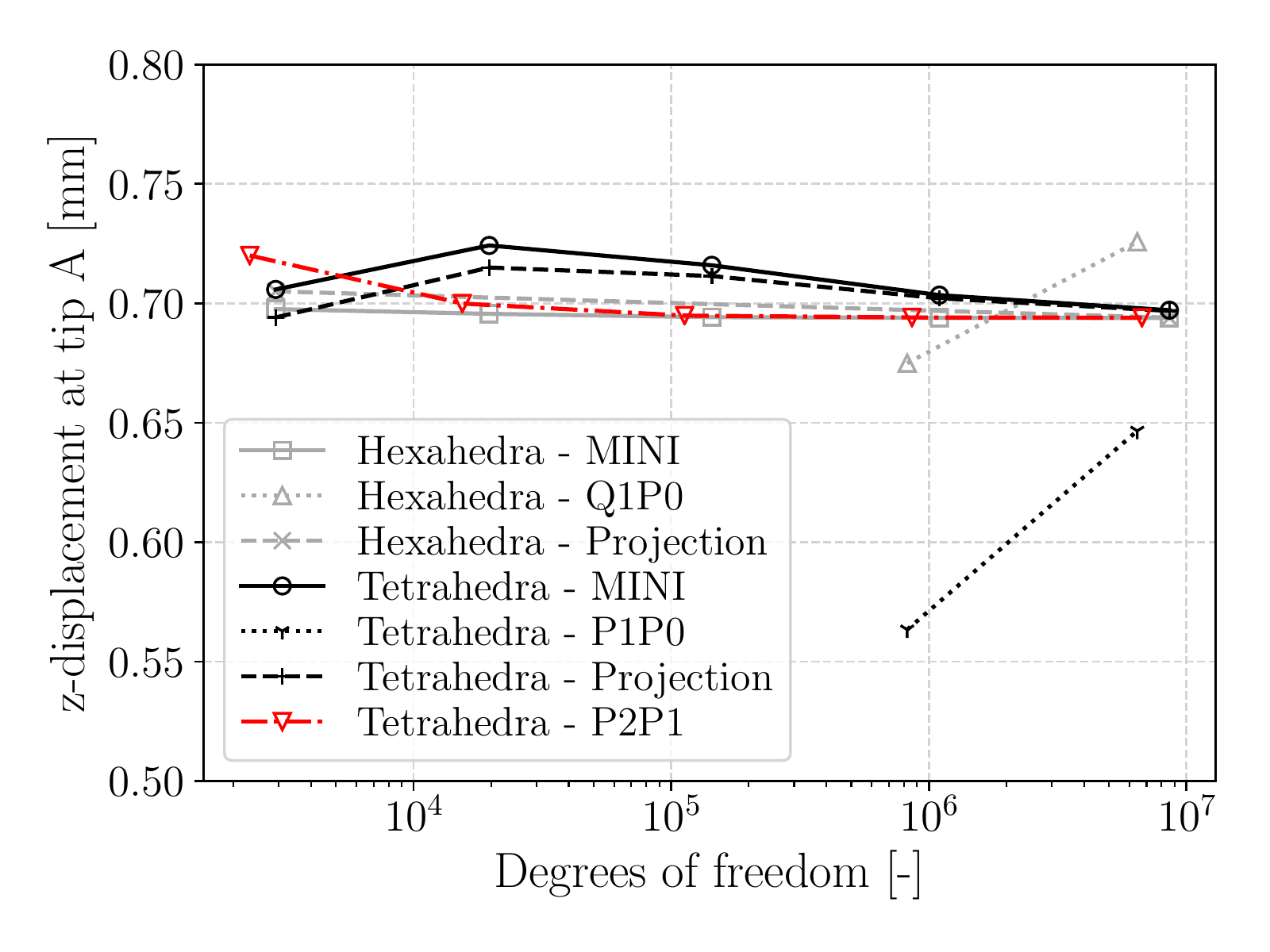}\\
  \textbf{(b)}\\
  \includegraphics[width=0.9\linewidth]{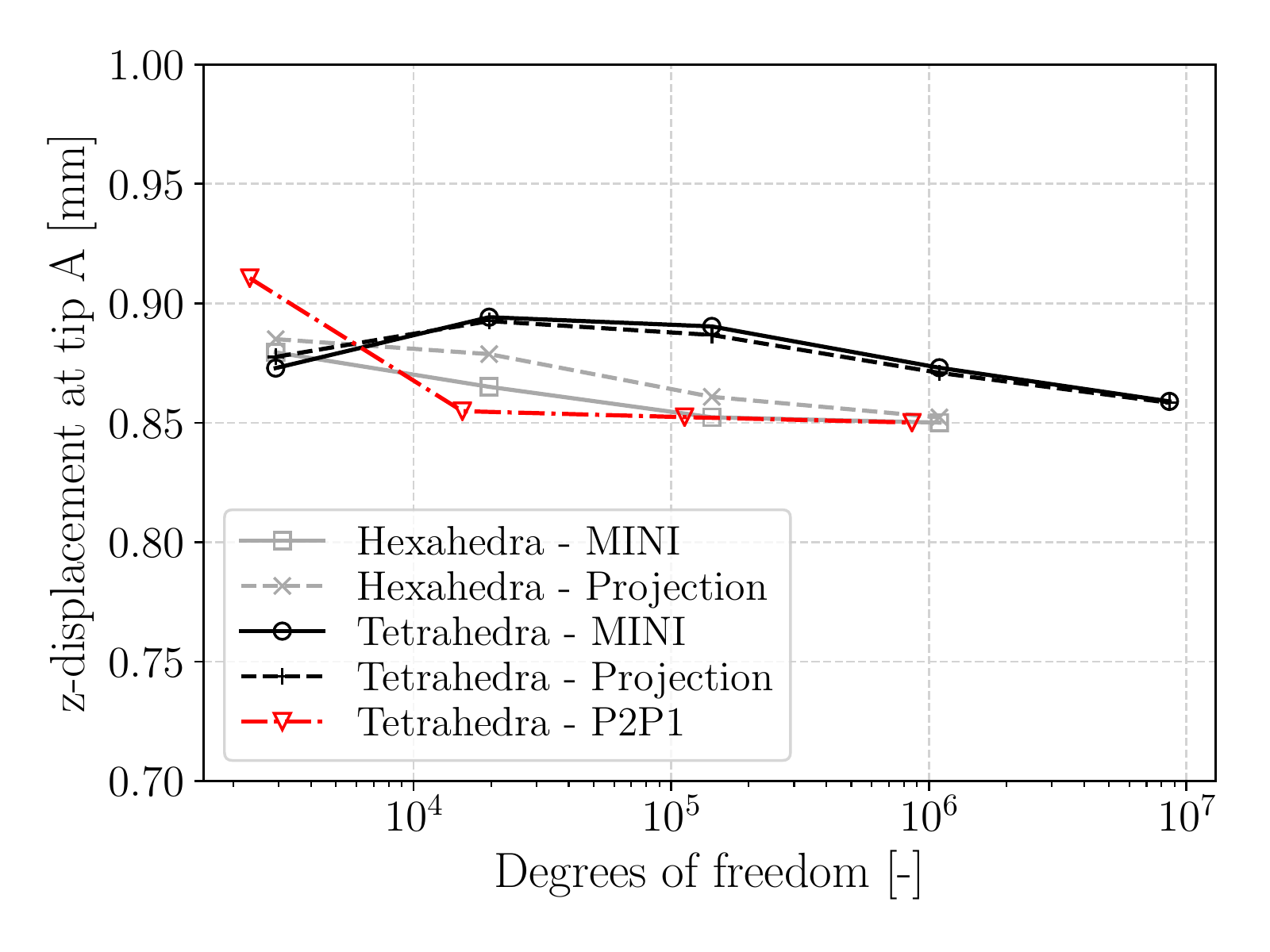}
  \caption{\reviewerOne{\emph{Block under compression:}
    vertical displacement at point \textbf{A} versus number of degrees of freedom
    in a logarithmic scale at load level
    (a) $p_0=\SI{320}{\mega\Pa}$ and (b) $p_0=\SI{640}{\MPa}$. Results for the MINI
    element and the  pressure-projection stabilization are compared to classical
    choices of elements, i.e., \(\mathbb Q_1 - \mathbb P_0\) hexahedral elements,
    \(\mathbb P_1 - \mathbb P_0\) tetrahedral elements,
    and Taylor--Hood (\(\mathbb P_2 - \mathbb P_1\)) tetrahedral elements.
    For case (b) the choice of \(\mathbb Q_1 - \mathbb P_0\) and
    \(\mathbb P_1 - \mathbb P_0\) elements did not give reasonable results and
    were thus omitted.}}%
  \label{fig:block_displacementA}
\end{figure}
\begin{figure*}[htbp]
  \centering
  \includegraphics[width=1.0\linewidth]{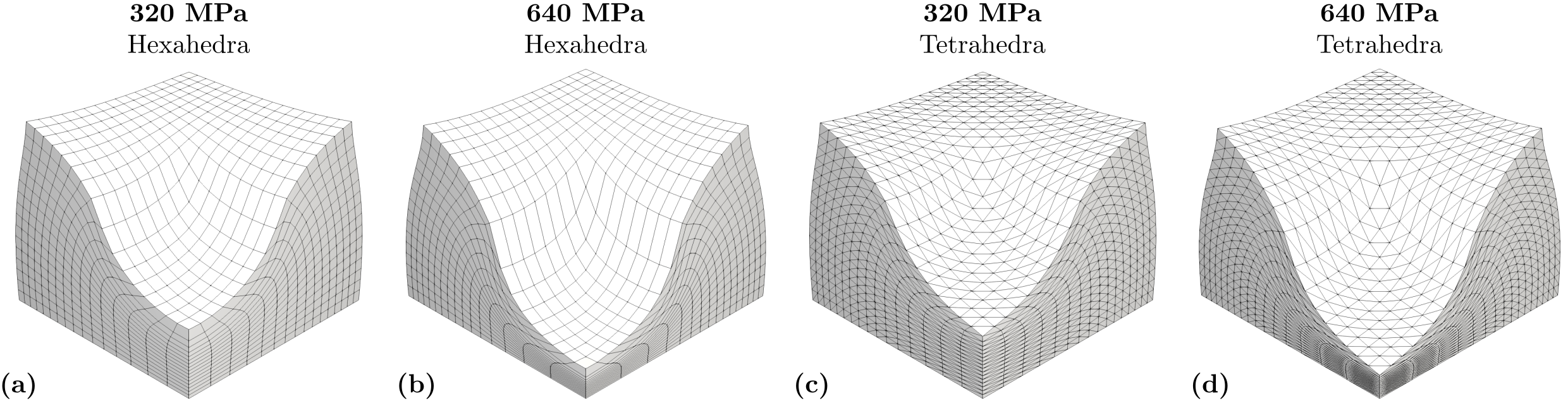}
  \caption{\emph{Block under compression:}
    deformed meshes of hexahedral (a,b) and tetrahedral (c,d) elements
    for the $\ell=2$ mesh in~Table~\ref{tab:block_meshes} at load level
  $p=\SI{320}{\mega\Pa}$ (a,c) and load level $p=\SI{640}{\MPa}$ (b,d).}%
  \label{fig:block_meshes}
\end{figure*}
\begin{figure*}[ht]
  \centering
  \includegraphics[width=1.0\linewidth]{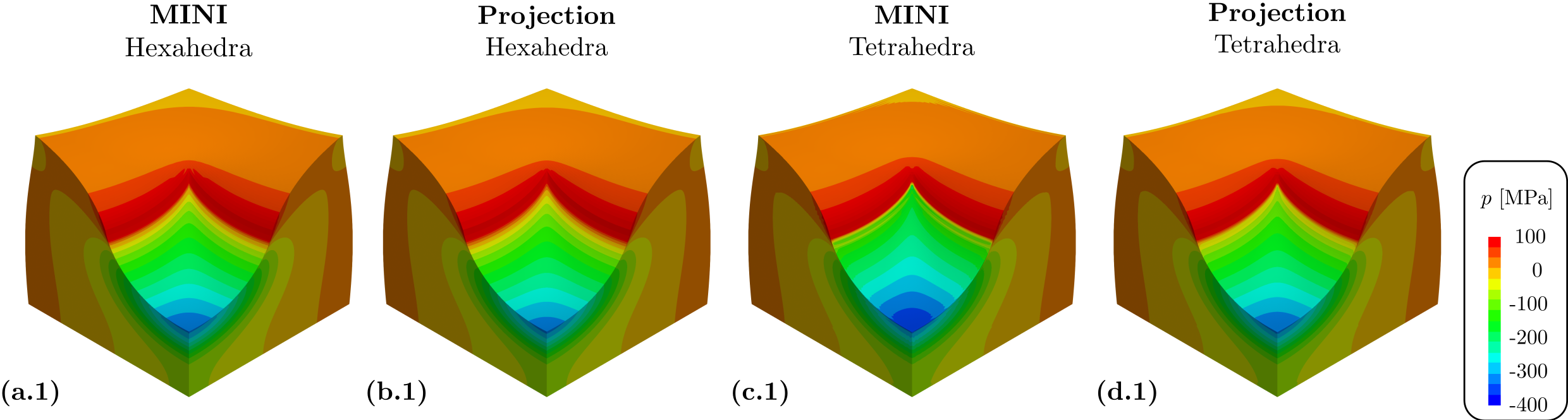}\\[1em]
  \includegraphics[width=1.0\linewidth]{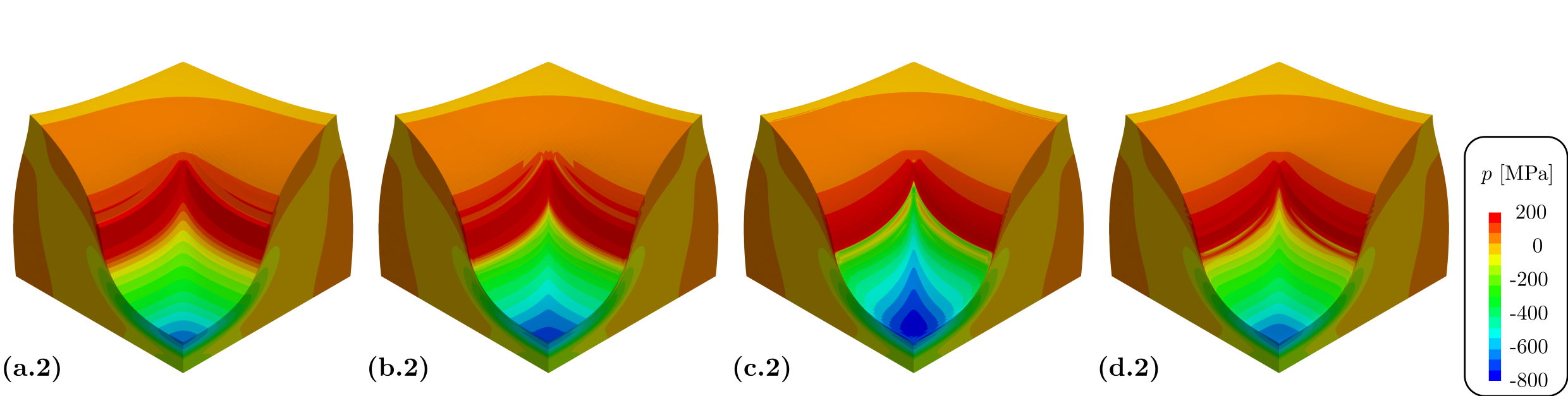}
  \caption{\emph{Block under compression:}
    comparison of hexahedral (a,b) and tetrahedral (c,d) elements with
    bubble-based (a,c) and projection-based (b,d) stabilization. Shown is the
    pressure contour on the deformed mesh at load level $p=\SI{320}{\mega\Pa}$
  in the first row and $p=\SI{640}{\MPa}$ in the second row.}%
  \label{fig:block_comparison}
\end{figure*}
%
\clearpage%
\subsection{Cook-type cantilever problem}\label{sec:cook}
In this section, we analyze the same Cook-type cantilever beam problem presented
in~\cite{bonet2015computational,schroeder2011new},
see also Figure~\ref{fig:cook_cantilever}.
Displacements at the plane $x=\SI{0}{\mm}$ are fixed.
At the plane $x=\SI{48}{\mm}$ a parabolic load,
which takes its maximum at $t_0=\SI{300}{\kPa}$, is applied.
Note, that this in-plane shear force in y-direction is considered as a dead load
in the deformation process.
\begin{figure}[htpb]
  \centering
  \includegraphics[width=0.8\linewidth]{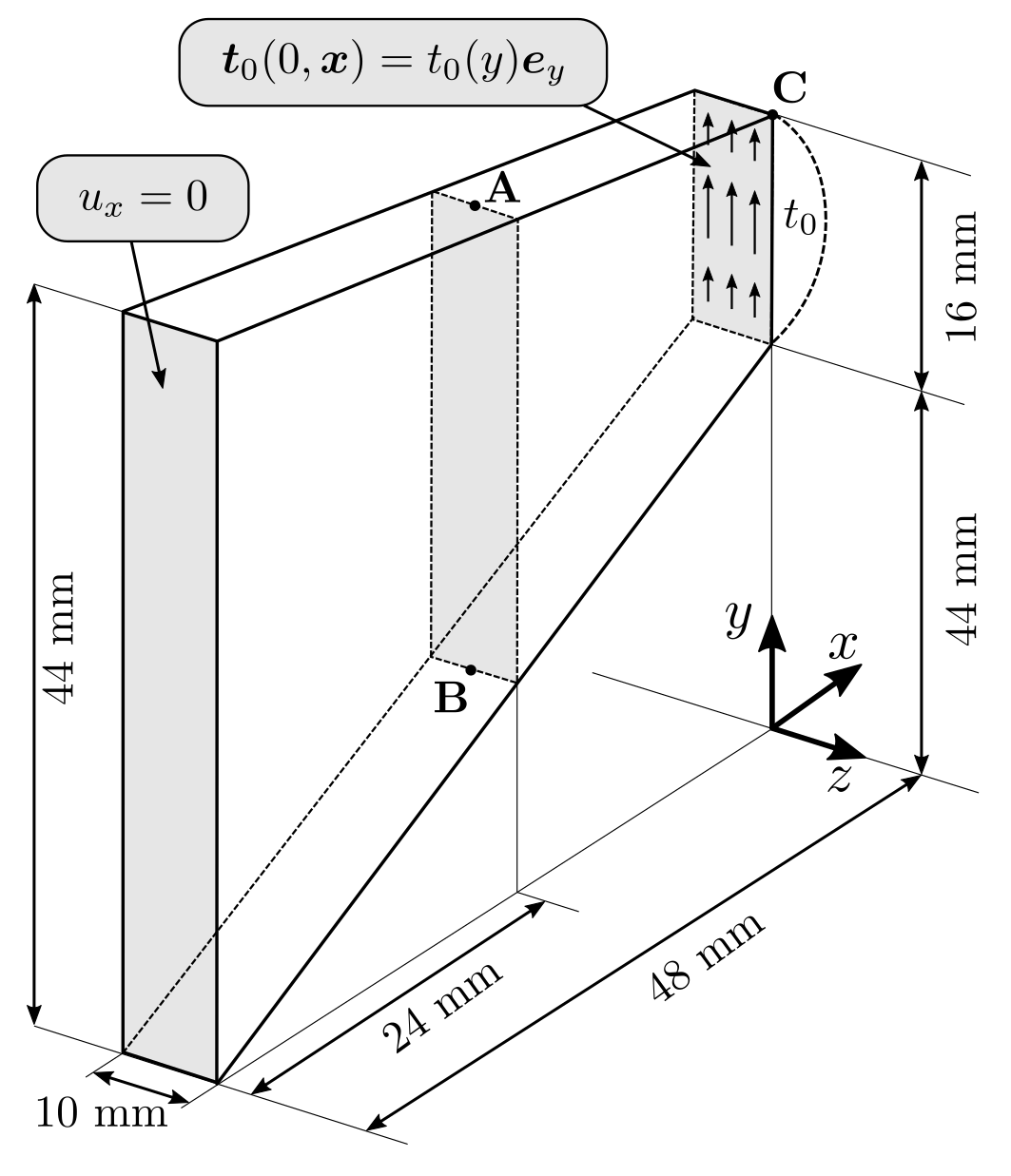}
  \caption{\emph{Cook-type cantilever problem:} geometry and boundary conditions.}%
  \label{fig:cook_cantilever}
\end{figure}
To compare to results in~\cite{schroeder2011new} the same isotropic strain energy
function was chosen
\[
  \Psi^\mathrm{iso}(\tensor{C})=c_{1}{(\operatorname{tr}\tensor{C})}^2
  + c_{2}{( {(\operatorname{tr}\tensor{C})}^2
  - \operatorname{tr}(\tensor{C}^2))}^2 - \gamma\ln(J),
\]
with material properties $c_1=\SI{21}{\kPa}$, $c_2=\SI{42}{\kPa}$, and
$\gamma=12c_1+24c_2$ to satisfy the condition of a stress-free reference geometry.

We chose a fully incompressible material, hence,
\[
  \Psi^\mathrm{vol}(\tensor{C})=\frac{\kappa}{2}{\left(J-1\right)}^2,
\]
with \reviewerTwo{$1/\kappa=0$}.
First, mesh convergence with respect to resulting displacements is analyzed
for the tetrahedral and hexahedral meshes with discretization details given in
Table~\ref{tab:cantilever_meshes}.
\begin{table}[htbp]
  \caption{Properties of \emph{cantilever meshes} used in
    Section~\ref{sec:cook}.}%
  \label{tab:cantilever_meshes}
  \centering
  \begin{tabular}{rrr}
    \toprule
    \multicolumn{3}{c}{Hexahedral Meshes}\\
    $\ell$ & Elements & Nodes \\
    \midrule
    1 & \num{324}    & \num{500}    \\
    2 & \num{2592}   & \num{3249}   \\
    3 & \num{20736}  & \num{23273}  \\
    4 & \num{165888} & \num{175857} \\
    \bottomrule
  \end{tabular}\hspace{2em}
  \begin{tabular}{rrr}
    \toprule
    \multicolumn{3}{c}{Tetrahedral Meshes}\\
    $\ell$ & Elements & Nodes \\
    \midrule
    1 & \num{1944}   & \num{500}    \\
    2 & \num{15552}  & \num{3249}   \\
    3 & \num{124416} & \num{23273}  \\
    4 & \num{995328} & \num{175857} \\
    \bottomrule
  \end{tabular}
\end{table}

Displacements $u_x$, $u_y$, and $u_z$ at point $\mathbf{C}$ are shown in
Figure~\ref{fig:cook_displacements}. The proposed stabilization techniques
give comparable displacements in all three directions and also match results
published in~\cite{bonet2015computational,schroeder2011new}.
Mesh convergence can also be observed for the stresses
$\sigma_{xx}$ at point $\mathbf{A}$ and $\mathbf{B}$
and $\sigma_{yy}$ at point $\mathbf{B}$, see Figure~\ref{fig:cook_stresses}.
Again, results match well those presented
in~\cite{bonet2015computational,schroeder2011new}.
Small discrepancies can be attributed to the fully
incompressible formulation used in our work and differences in grid construction.

\reviewerOne{In Figure~\ref{fig:cook_boxplots} and
Figure~\ref{fig:cook_comparison}(a) the distribution of
$J=\det(\tensor{F})$ is shown to provide an estimate of how accurately the
incompressibility constraint is fulfilled by the proposed stabilization
techniques. For most parts of the computational domains the values
of $J$ are close to $1$, however, hexahedral meshes and here in particular
the MINI element maintain the condition of $J \approx 1$ more accurately on the
element level. Note, that for all discretizations the overall volume of the
cantilever remained unchanged at \SI{14400}{\mm\cubed}, rendering the material
fully incompressible on the domain level.}

\reviewerOne{Figure~\ref{fig:cook_comparison} gives a comparison of several
computed values
in the deformed configuration of Cook's cantilever for the finest grids ($\ell=4$).
Slight pressure oscillations in Figure~\ref{fig:cook_comparison}(b)
on the domain boundary for the MINI element are to be expected,
see~\cite{soulaimani1987}; this also affects the distribution of $J$ in
Figure~\ref{fig:cook_comparison}(a). A similar checkerboard pattern is present
for the projection based stabilization.}

\generalChange{In the third row of Figure~\ref{fig:cook_comparison} we compare
the stresses $\sigma_{xx}$ for the different stabilization techniques.
We can observe slight oscillations for the the projection-based approach,
whereas the MINI element gives a smoother solution.}
Compared to results in~\cite[Figure~10]{schroeder2011new} the $\sigma_{xx}$
stresses have a similar contour but are slightly higher.
As before, we attribute that to the
fully incompressible formulation in our paper compared to the quasi-incompressible
formulation in~\cite{schroeder2011new}.
\begin{figure}[htbp]
  \textbf{(a)}\\
  \includegraphics[width=0.9\linewidth]{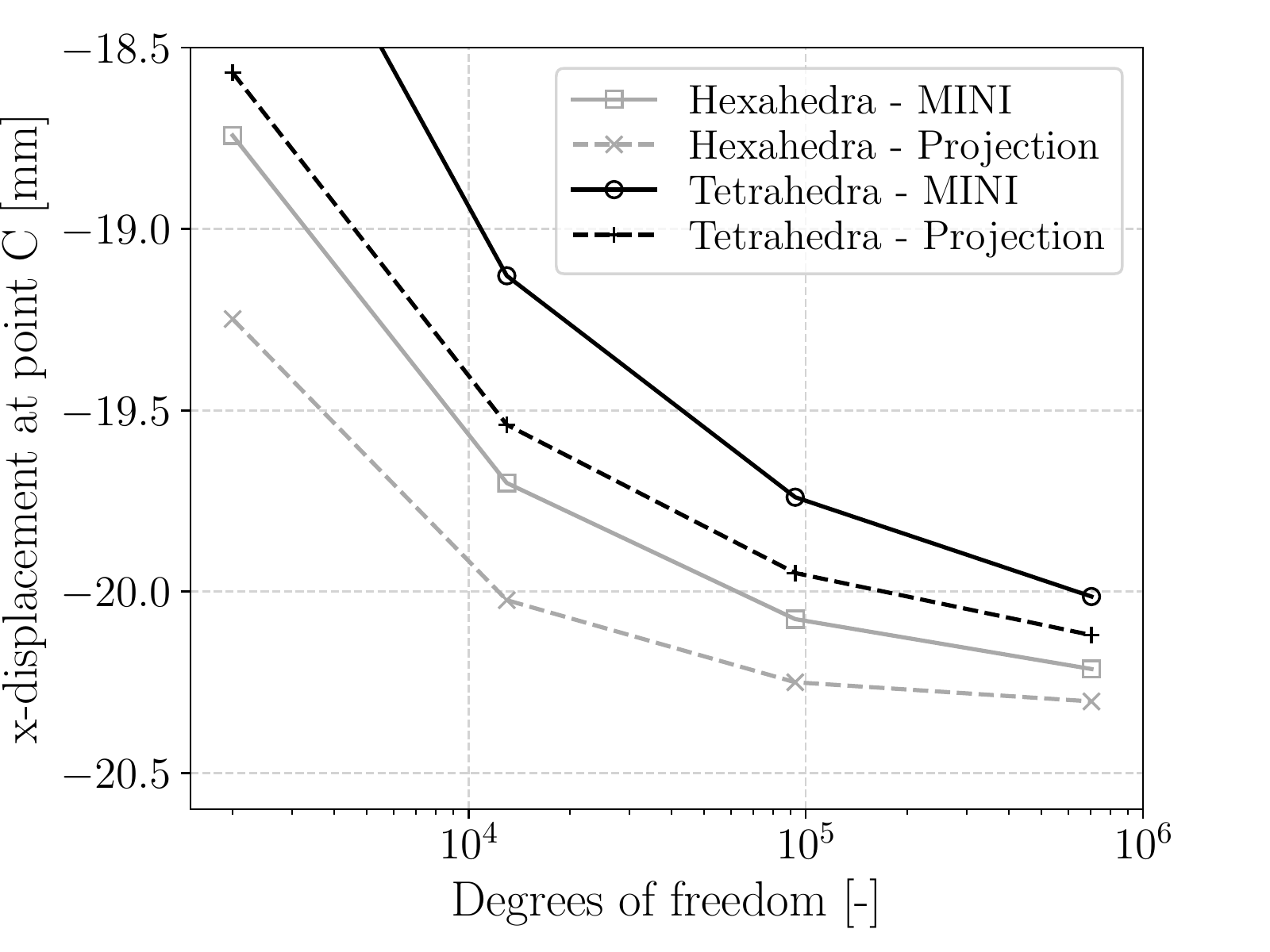}\\
  \textbf{(b)}\\
  \includegraphics[width=0.9\linewidth]{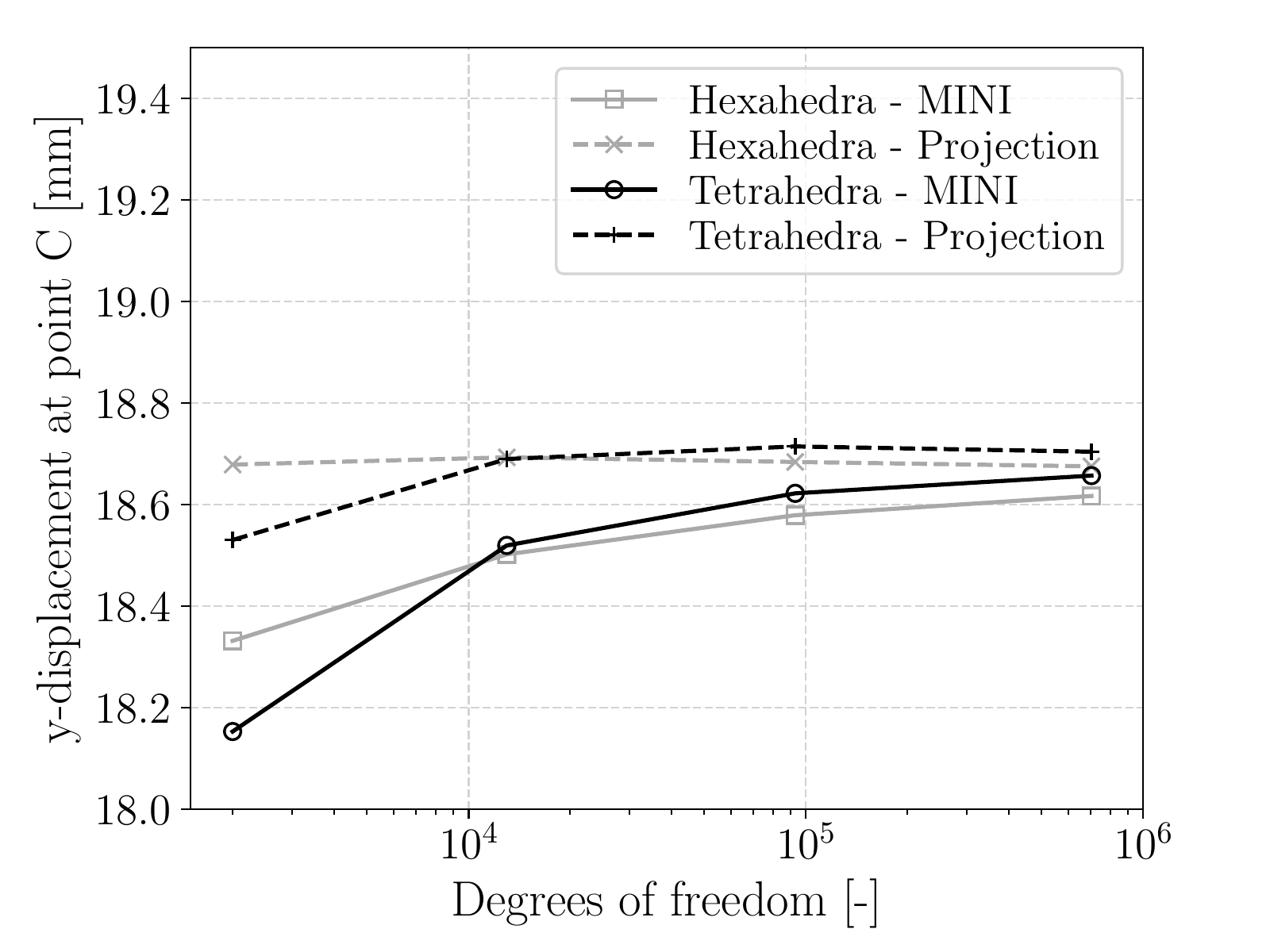}\\
  \textbf{(c)}\\
  \includegraphics[width=0.9\linewidth]{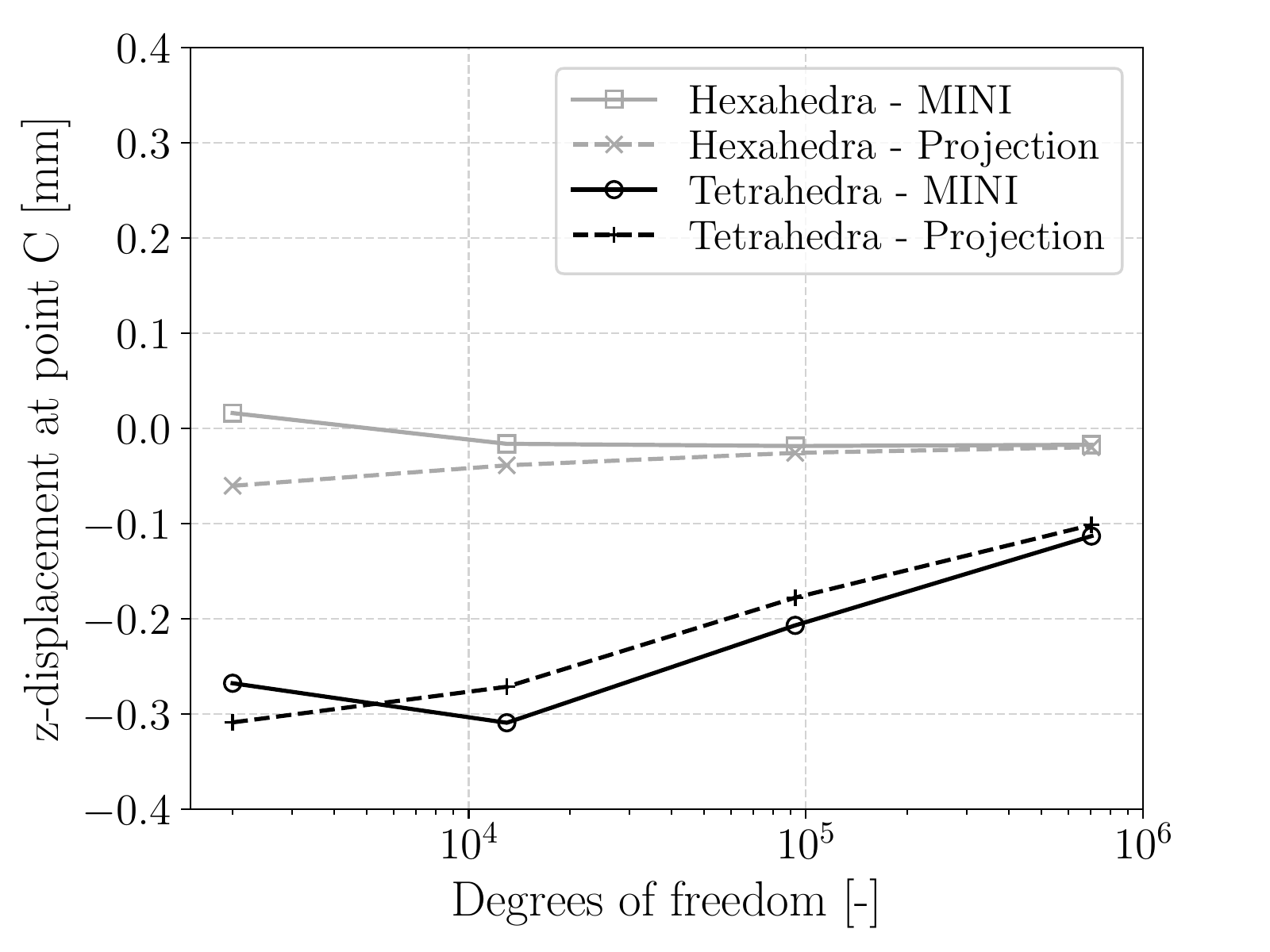}%
  \caption{\emph{Cook-type cantilever problem:}
    displacements $u_x$, $u_y$, and $u_z$ at point \textbf{C} versus
    the number of degrees of freedom in a logarithmic scale using the fully
    incompressible formulation.}%
  \label{fig:cook_displacements}
\end{figure}
\begin{figure}[htpb]
  \textbf{(a)}\\
  \includegraphics[width=0.9\linewidth]{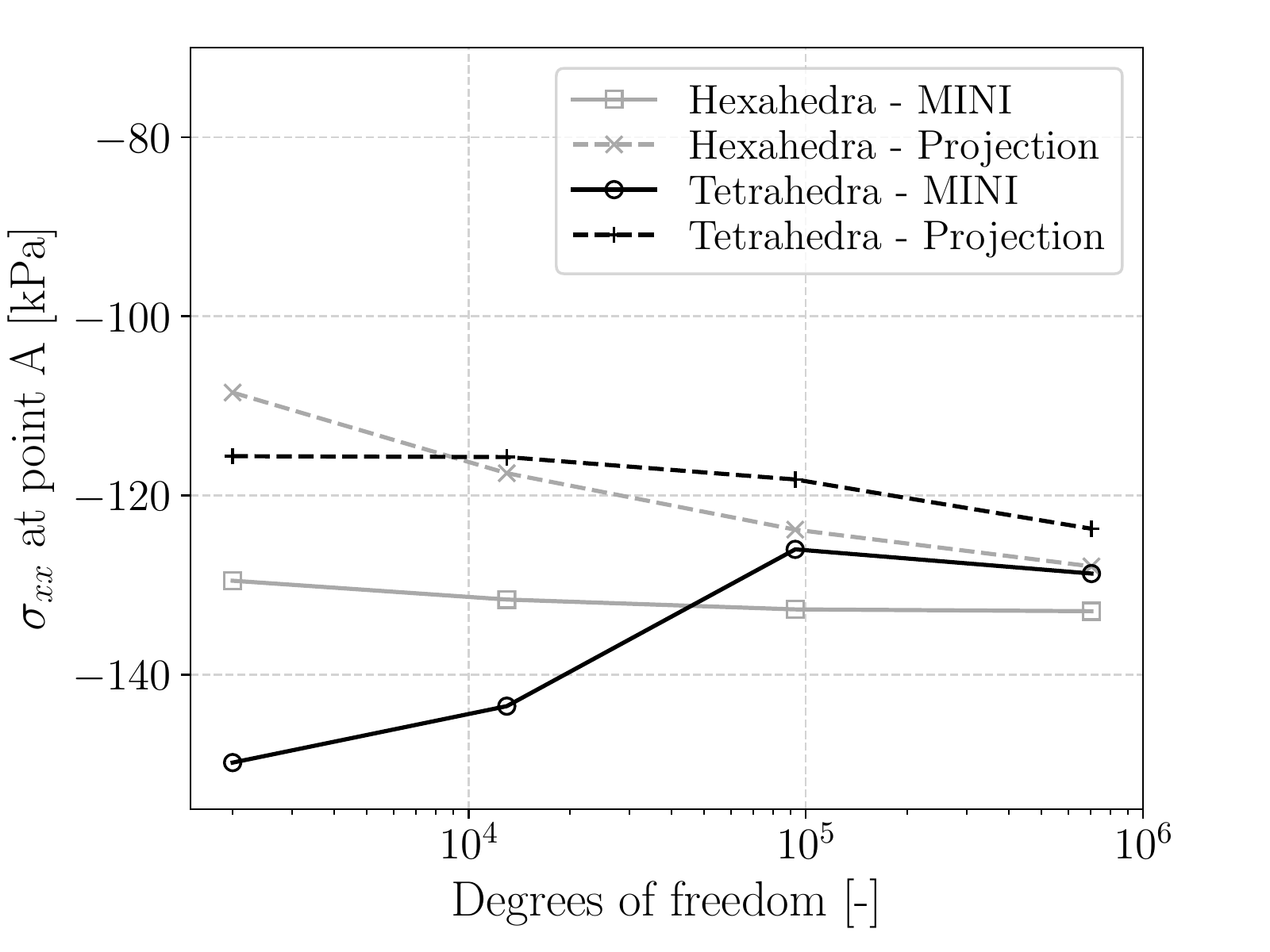}\\
  \textbf{(b)}\\
  \includegraphics[width=0.9\linewidth]{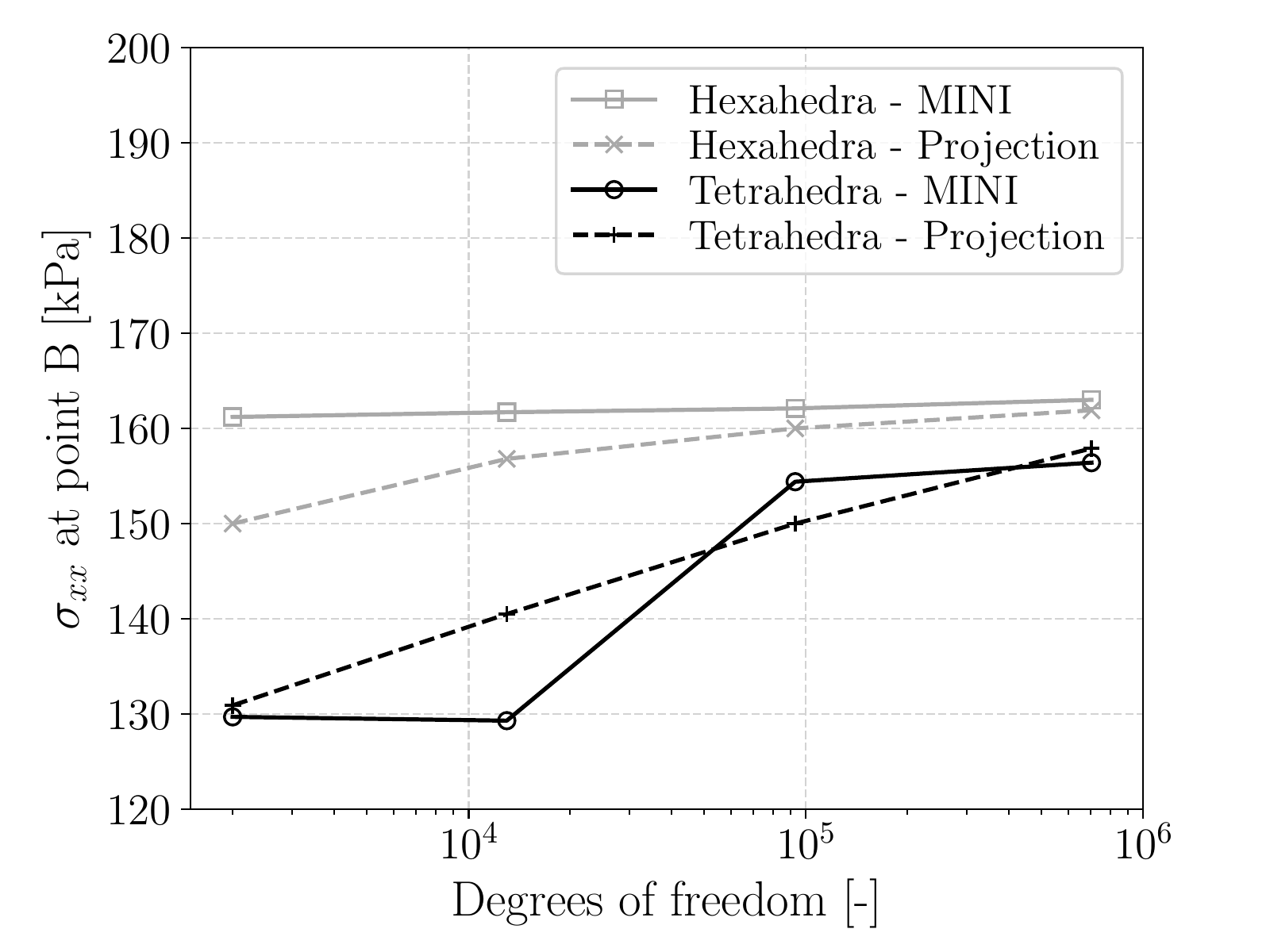}\\
  \textbf{(c)}\\
  \includegraphics[width=0.9\linewidth]{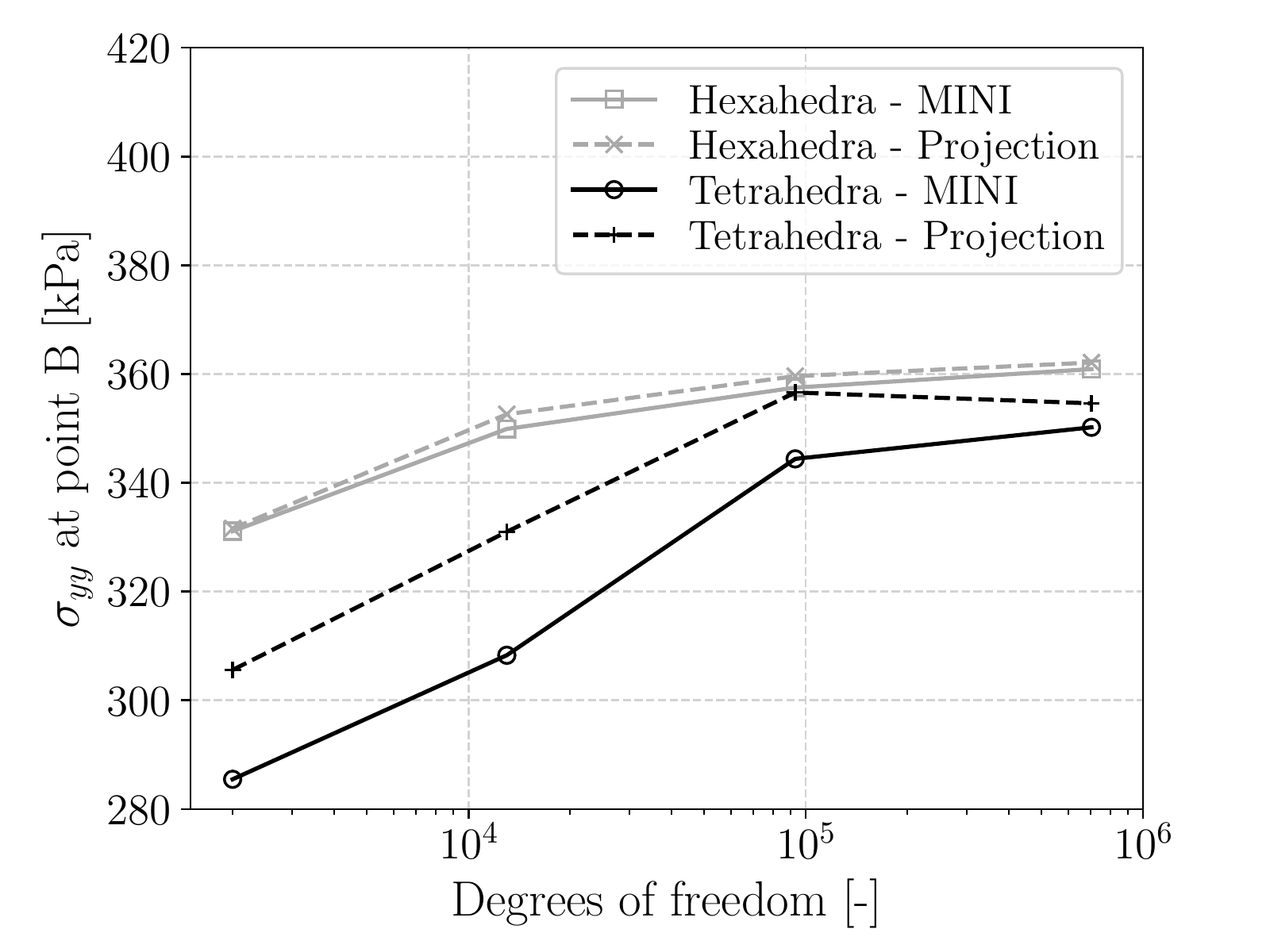}
  \caption{\emph{Cook-type cantilever problem:}
    stresses $\sigma_{xx}$ at (left) point \textbf{A}
    and (middle) point \textbf{B} and $\sigma_{yy}$ at (right) point \textbf{B}
    versus the number of degrees of freedom in a logarithmic scale using the
  fully incompressible formulation.}%
 \label{fig:cook_stresses}
\end{figure}

\begin{figure}[htbp]
  \textbf{(a)}\\
  \includegraphics[width=1.0\linewidth]{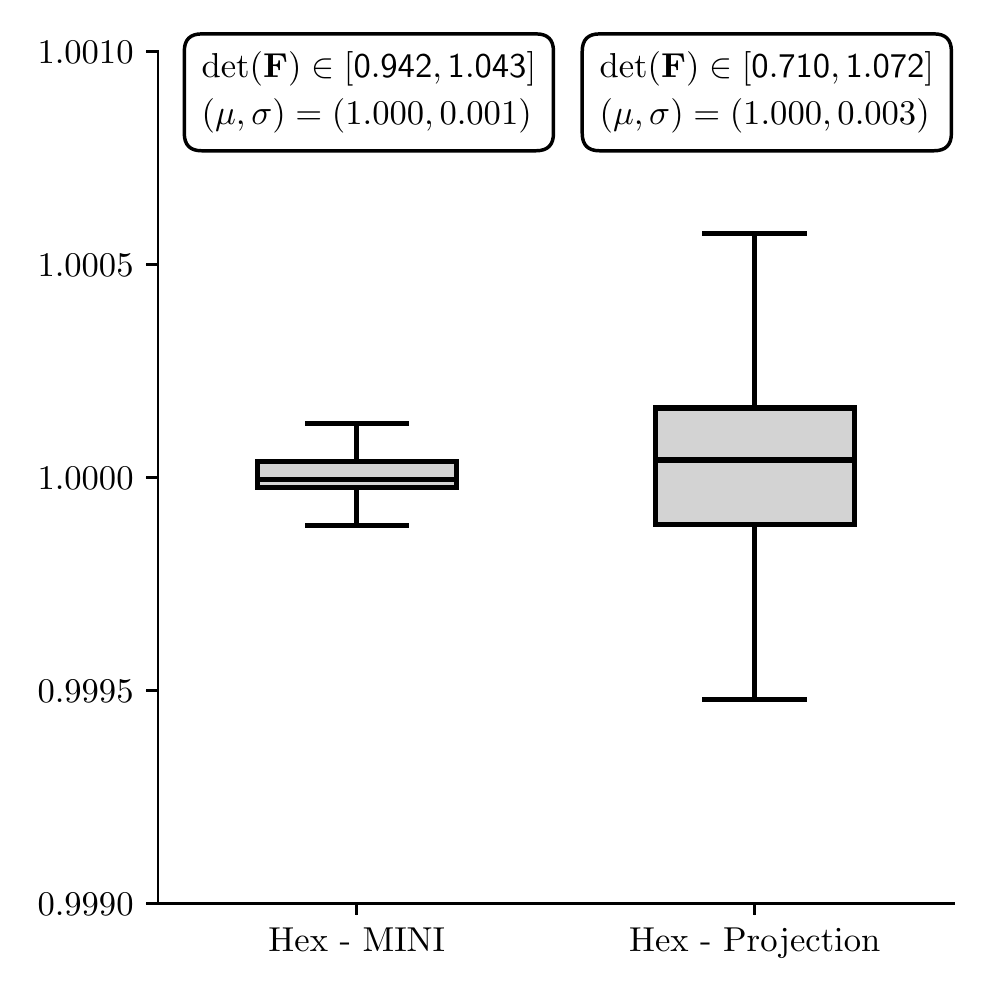}\\
  \textbf{(b)}\\
  \includegraphics[width=1.0\linewidth]{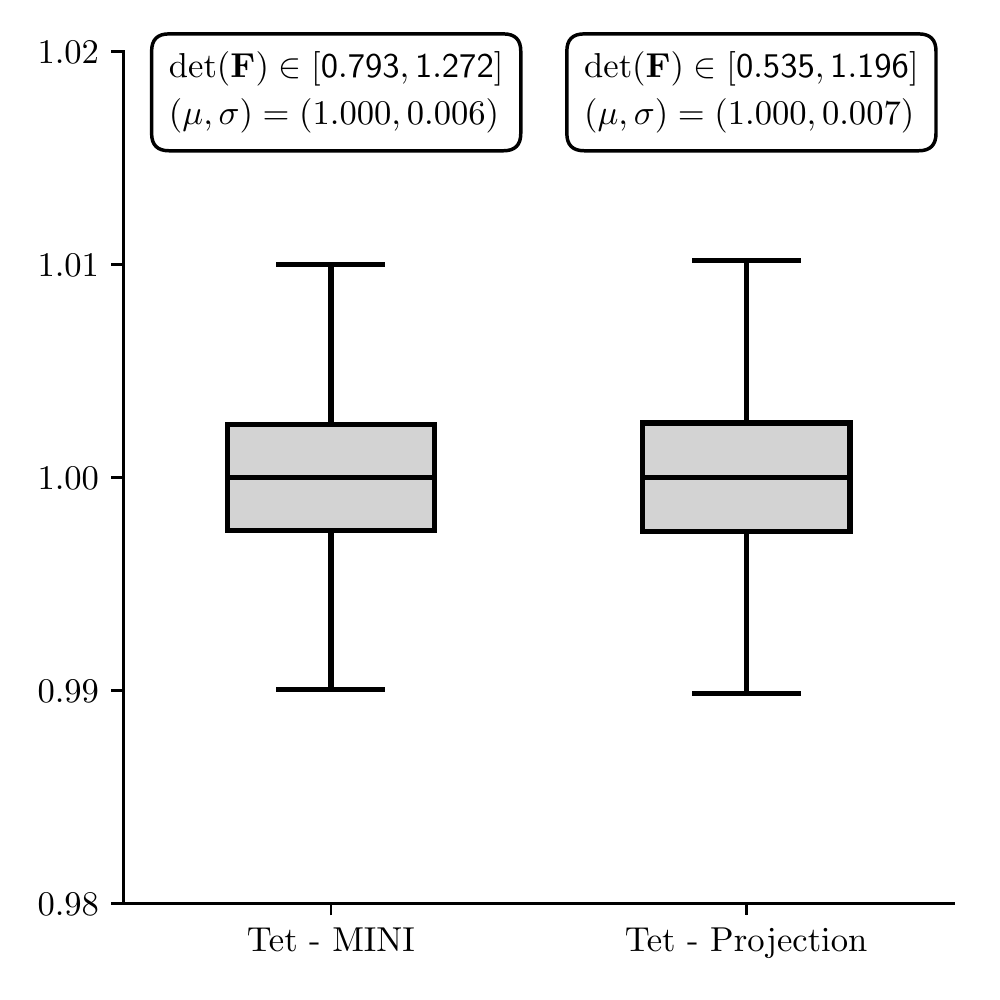}
  \caption{\reviewerOne{\emph{Cook-type cantilever problem:}
    boxplots showing the distribution of $J=\det(\tensor{F})$ for \textbf{(a)}
    hexahedral and \textbf{(b)} tetrahedral elements.
    Additionally, the minimal and maximal value, as well as the mean ($\mu$)
    and the standard deviation ($\sigma$) is given for each setting.}}%
 \label{fig:cook_boxplots}
\end{figure}
\begin{figure*}[htbp]
  \centering
  \includegraphics[width=1.0\linewidth]{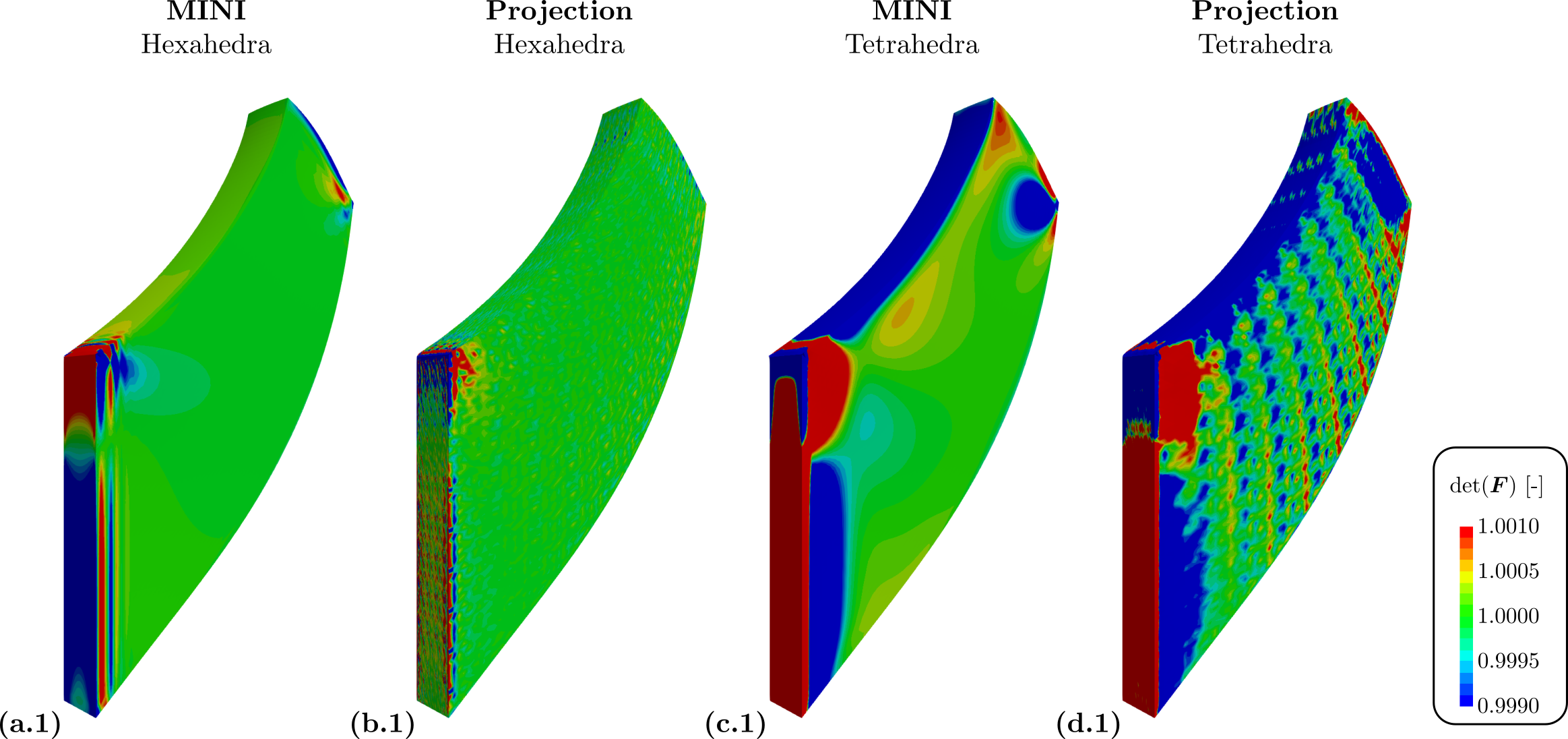}
  \\[1em]
  \includegraphics[width=1.0\linewidth]{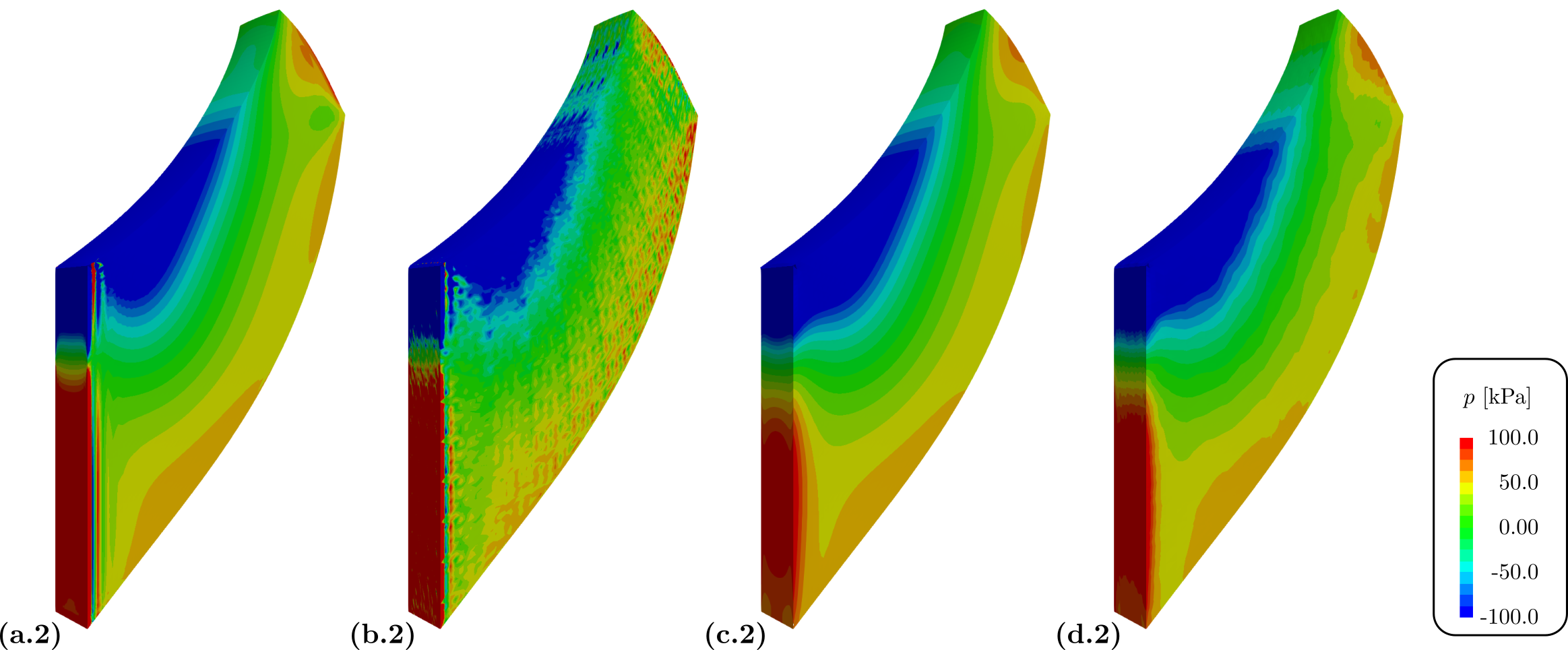}
  \includegraphics[width=1.0\linewidth]{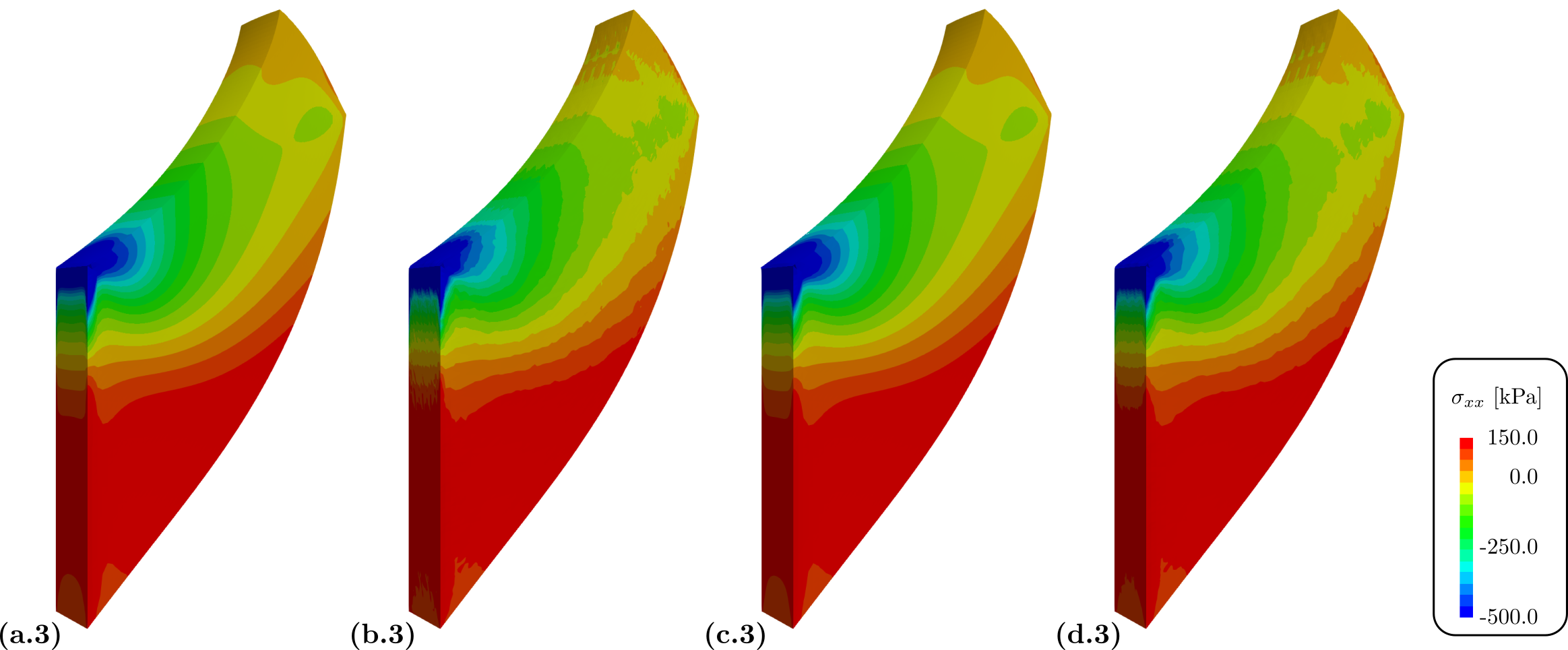}
  \caption{\emph{Cook-type cantilever problem:}
    comparison of hexahedral (a,c) and tetrahedral (b,d) elements with
    bubble-based (a,b) and projection-based (c,d) stabilization. Shown is the
    distribution of $J=\det(\tensor{F})$ (first row);
    distribution of the hydrostatic pressure $p$ (second row) in \si{\kPa};
    and the distribution of the stress $\sigma_{xx}$ (third row) in \si{\kPa}
    for the fully incompressible formulation.
 }%
  \label{fig:cook_comparison}
\end{figure*}

\clearpage%
\subsection{Twisting column test}%
\label{sec:twisting_column}
Finally, we show the applicability of our stabilization techniques for the
transient problem of a twisting column~\cite{aguirre2014vertex, gil2014stabilised, scovazzi2016simple}.
The initial configuration of the geometry is depicted in
Figure~\ref{fig:twisting_column}. There is no load prescribed and the column is
restrained against motion at its base.
A twisting motion is applied to the domain by means of the following initial
condition on the velocity
\[
  \vec{v}(\vec{x}, 0)
    = \vec{v}(x,y,z, 0)
    = 100\sin\left(\frac{\pi y}{12}\right){\left(z, 0, -x\right)}^\top \si{\m/\s},
\]
for \(y\in\left[0,6\right]\si{m}\). To avoid symmetries in the problem the
column is rotated about the $z$-axes by an angle of $\theta=\SI{5.2}{\degree}$.
\begin{figure}[htbp]
  \centering
  \includegraphics[width=0.8\linewidth]{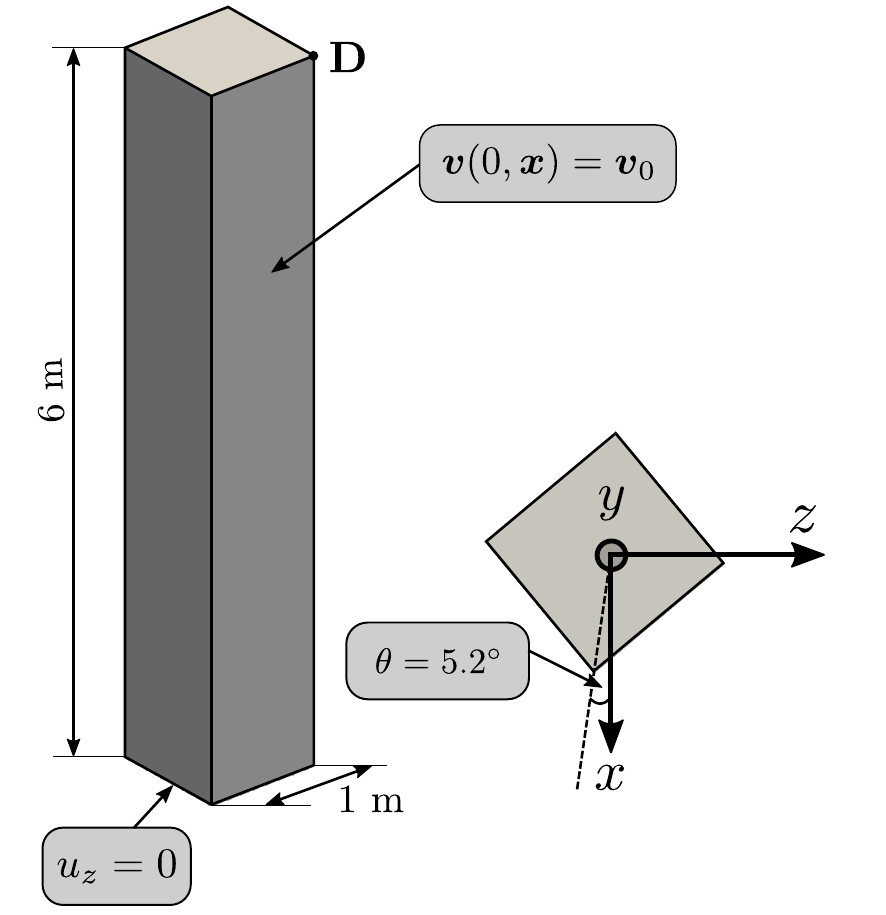}
  \caption{\emph{Twisting column test:} geometry and boundary conditions.}%
  \label{fig:twisting_column}
\end{figure}

We chose the neo-Hookean strain-energy
\[
  \Psi(\tensor C) = \frac{\mu}{2}\left(\mathrm{tr}(\overline{\tensor C}) - 3\right)
 + \frac{\kappa}{2} {(J-1)}^2,
\]
with parameters $\mu=\SI{5704.7}{\kPa}$ and $\kappa=\SI{283333}{kPa}$
for the nearly incompressible and \reviewerTwo{$1/\kappa=0$} for the
truly incompressible case.
For the results presented, we considered hexahedral and tetrahedral meshes with
five levels of refinement, respectively;
for discretization details of the column meshes
see Table~\ref{tab:column_meshes}.
\begin{table}[htbp]
  \caption{Properties of \emph{column meshes} used
           in Section~\ref{sec:twisting_column}.}%
  \label{tab:column_meshes}
  \centering
  \begin{tabular}{rrr}
    \toprule
    \multicolumn{3}{c}{Hexahedral Meshes}\\
    $\ell$ & Elements & Nodes \\
    \midrule
    1 & \num{48}     & \num{117}  \\
    2 & \num{384}    & \num{625}  \\
    3 & \num{3072}   & \num{3969} \\
    4 & \num{24576}  & \num{28033} \\
    5 & \num{196608} & \num{210177} \\
    \bottomrule
  \end{tabular}\hspace{2em}
  \begin{tabular}{rrr}
    \toprule
    \multicolumn{3}{c}{Tetrahedral Meshes}\\
    $\ell$ & Elements & Nodes \\
    \midrule
    1 & \num{240}    & \num{117}  \\
    2 & \num{1920}   & \num{625}  \\
    3 & \num{15360}  & \num{3969} \\
    4 & \num{122880} & \num{28033} \\
    5 & \num{983040} & \num{210177} \\
    \bottomrule
  \end{tabular}
\end{table}
In Figure~\ref{fig:twisting_mesh_convergence},
mesh convergence with respect to tip displacement $(u_x,u_y,u_z)$
at point \textbf{D} is analyzed. While differences at lower levels of
refinement $\ell=1,2$ are severe, the displacements converge for
higher levels of refinement $\ell=3,4,5$. For finer grids the curves for
tetrahedral and hexahedral elements are almost indistinguishable, see also
Figure~\ref{fig:twisting_method_comp}, and the results are in good agreement
with those presented in~\cite{scovazzi2016simple}. While this figure was
produced using MINI elements we also observed a similar behavior of mesh
convergence for the projection-based stabilization.
In fact, for the finest grid, all the proposed stabilization techniques and
elements gave virtually identical results,
see Figure~\ref{fig:twisting_method_comp}.
Further, as already observed by~\textcite{scovazzi2016simple}, the fully and nearly
incompressible formulations gave almost identical deformations,
see Figure~\ref{fig:twisting_formulation_comp}.

In Figure~\ref{fig:twisted_stress} stress $\sigma_{yy}$ and pressure $p$ contours
are plotted on the deformed configuration for the incompressible case
at time instant $t=\SI{0.3}{\s}$.
Minor pressure oscillations can be observed for tetrahedral elements.
Again, results match well those presented in~\cite[Figure 22]{scovazzi2016simple}.

Finally, in Figure~\ref{fig:twisted_velocity}, we compare the magnitude of
velocity and acceleration at time instant $t=\SI{0.3}{\s}$. Results for these
variables are very smooth and hardly distinguishable for all the different
approaches.

The \emph{computational costs} for this nonlinear elasticity problem
were significant due to the required solution of a saddle-point problem in each
Newton step and a large number of time steps.
However, this challenge can be addressed by using a massively parallel
iterative solving method and exploiting potential of modern HPC hardware.
The most expensive simulations were the fully incompressible cases for the
finest grids with a total of \num{840708} degrees of freedom and \num{400} time
steps. These computations were executed at the national HPC computing facility
ARCHER in the United Kingdom using \num{96} cores.
Computational times were as follows:
\SI{239}{\min} for tetrahedral meshes and projection-based stabilization;
\SI{283}{\min} for tetrahedral meshes and MINI elements;
\SI{449}{\min} for hexahedral meshes and projection-based stabilization; and
\SI{752.5}{\min} for hexahedral meshes and MINI elements.
Simulation times for nearly incompressible problems were lower, ranging from
\num{177} to \SI{492}{\min}. This is due to the additional matrix on the
lower-right side of the block stiffness matrix which led to a smaller number of
linear iterations. Simulations with hexahedral meshes were, in general,
computationally more expensive compared to simulations with tetrahedral grids;
the reason beeing mainly a higher number of linear iterations.
Computational burden for MINI elements was larger due to higher
matrix assembly times. However, this assembly time is highly scalable as there
is almost no communication cost involved in this process.
\begin{figure*}[htbp]
  \textbf{(A)} \emph{Mesh convergence for hexahedral elements:}\\
  \includegraphics[width=0.33\linewidth]{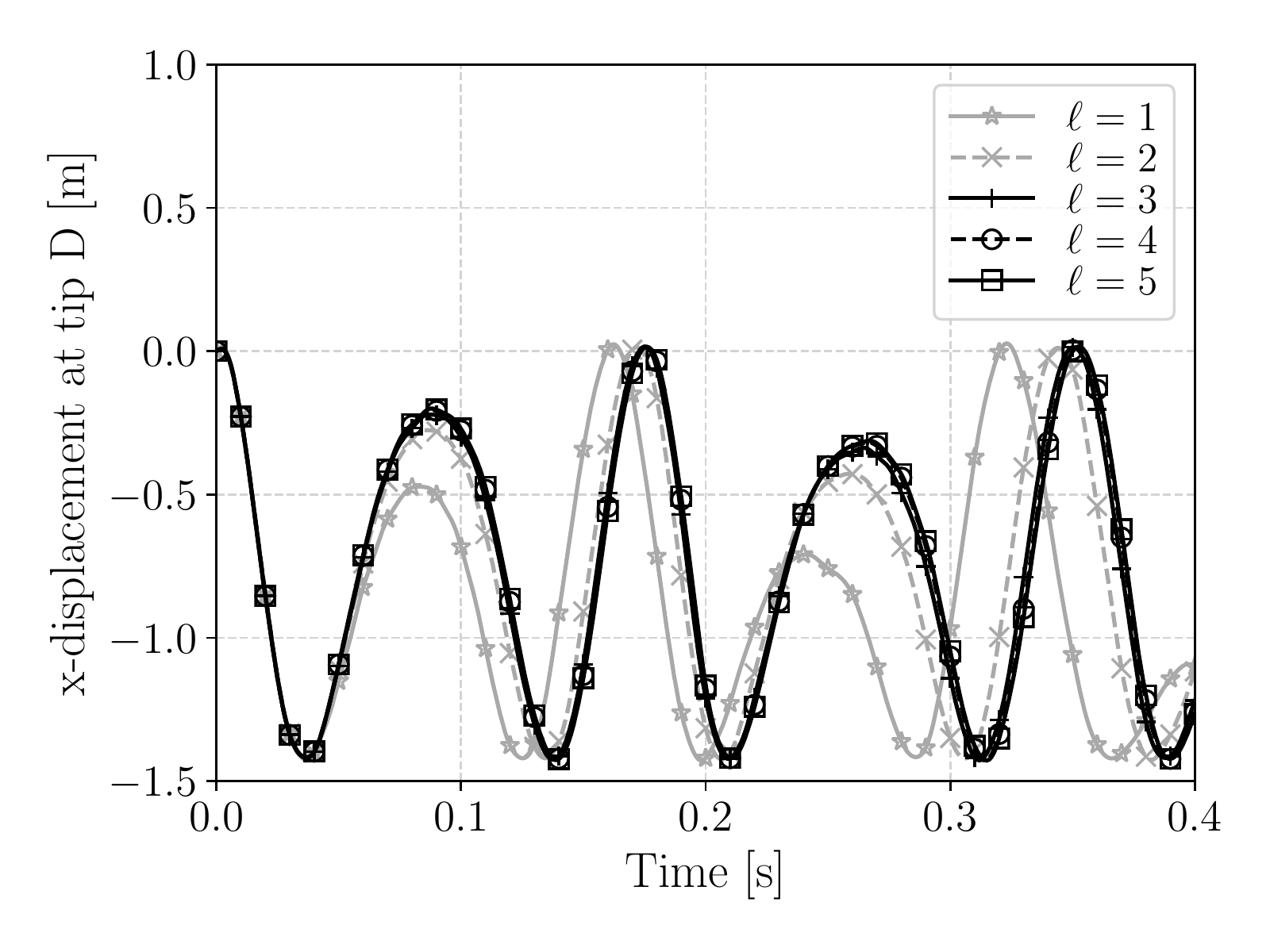}
  \includegraphics[width=0.33\linewidth]{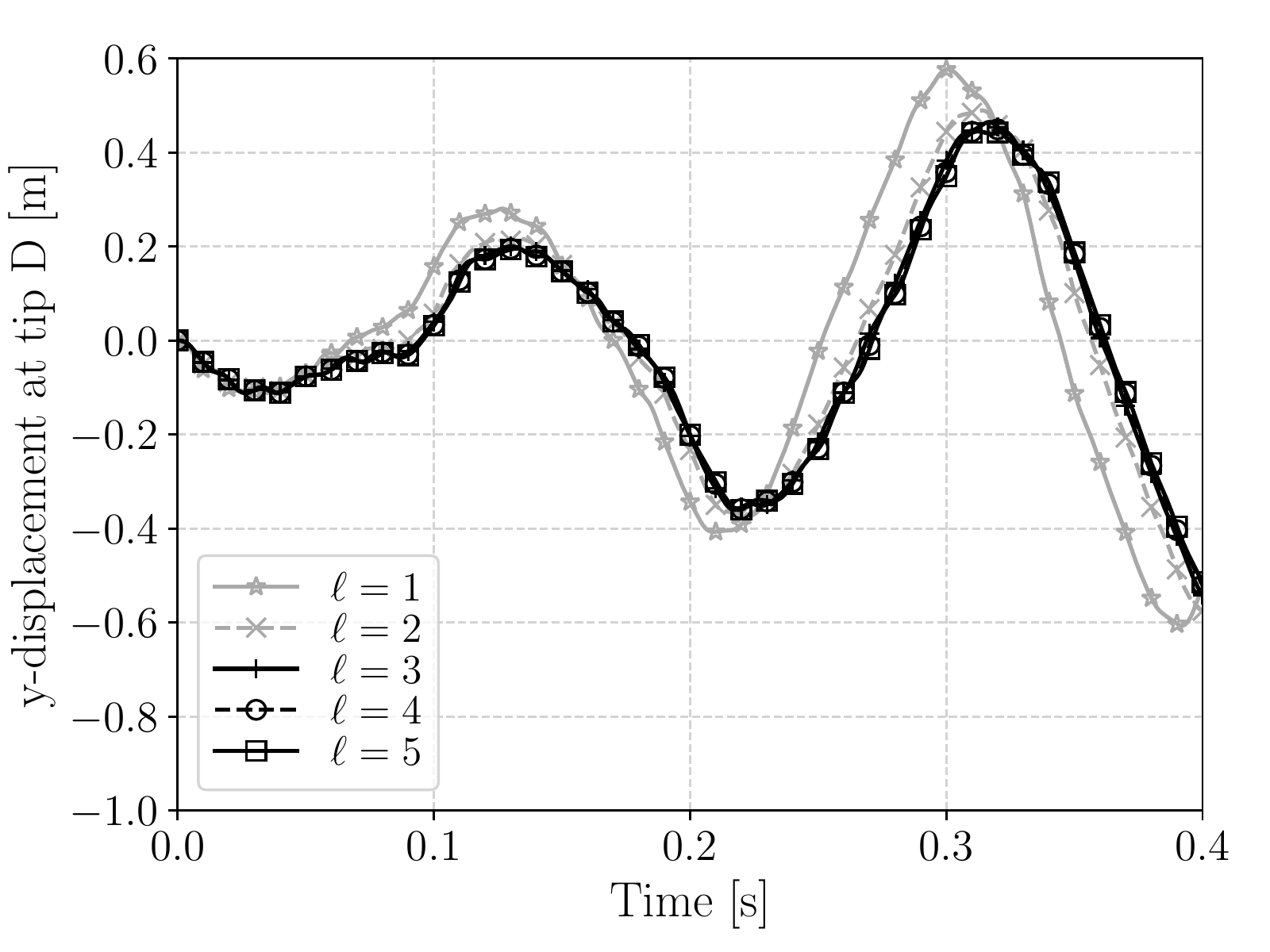}
  \includegraphics[width=0.33\linewidth]{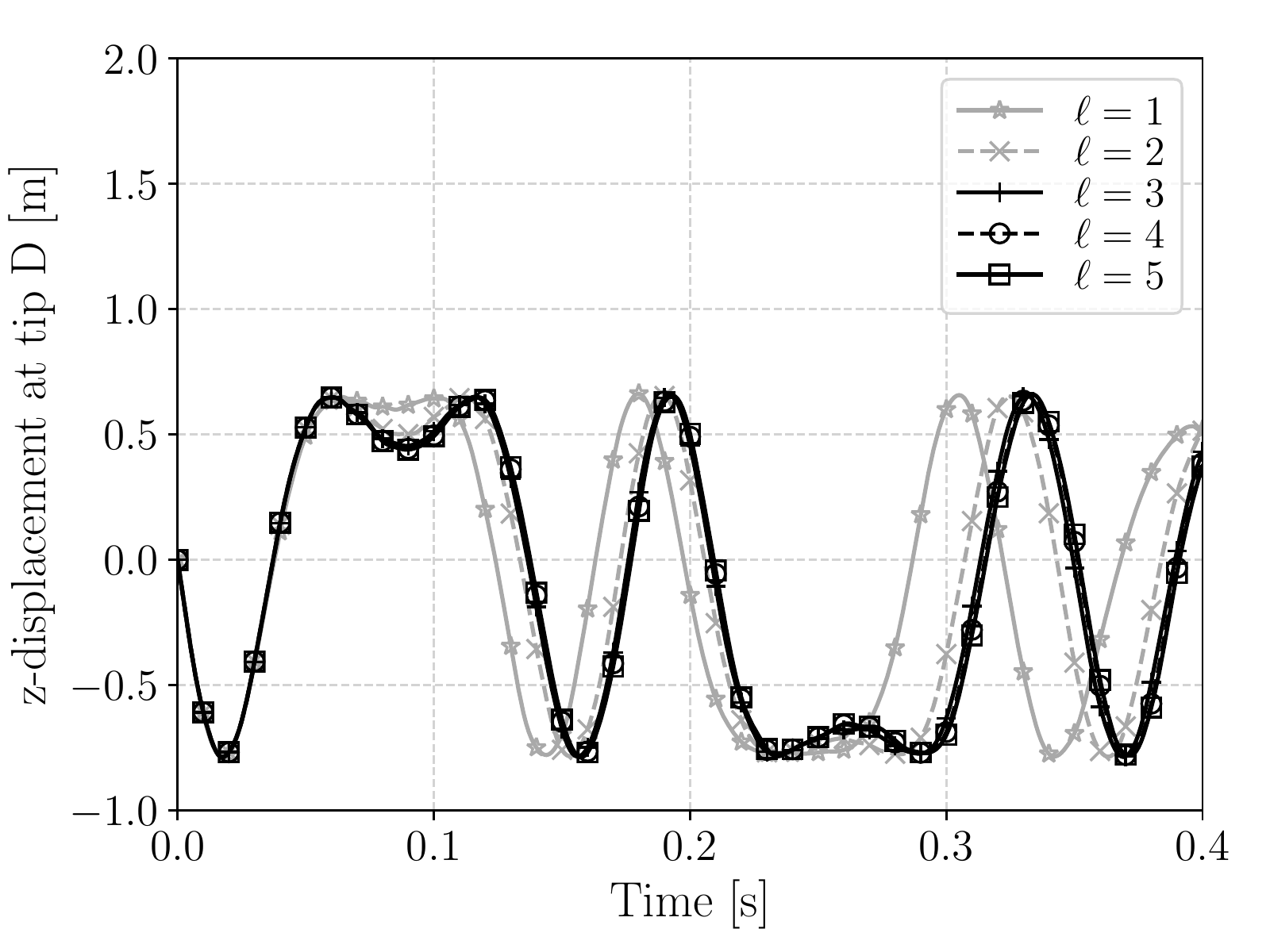}\\
  \textbf{(B)} \emph{Mesh convergence for tetrahedrnts:}\\
  \includegraphics[width=0.33\linewidth]{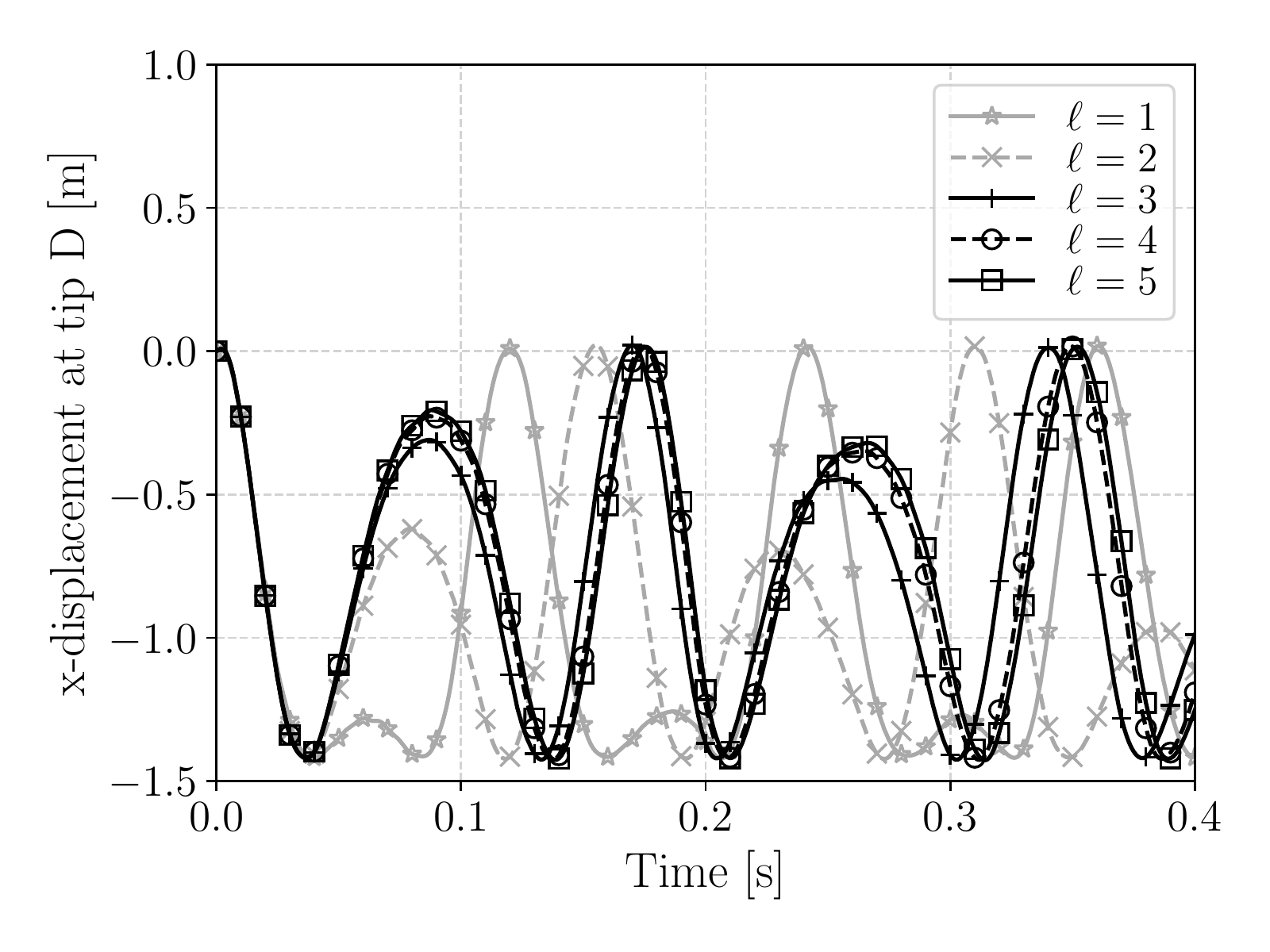}
  \includegraphics[width=0.33\linewidth]{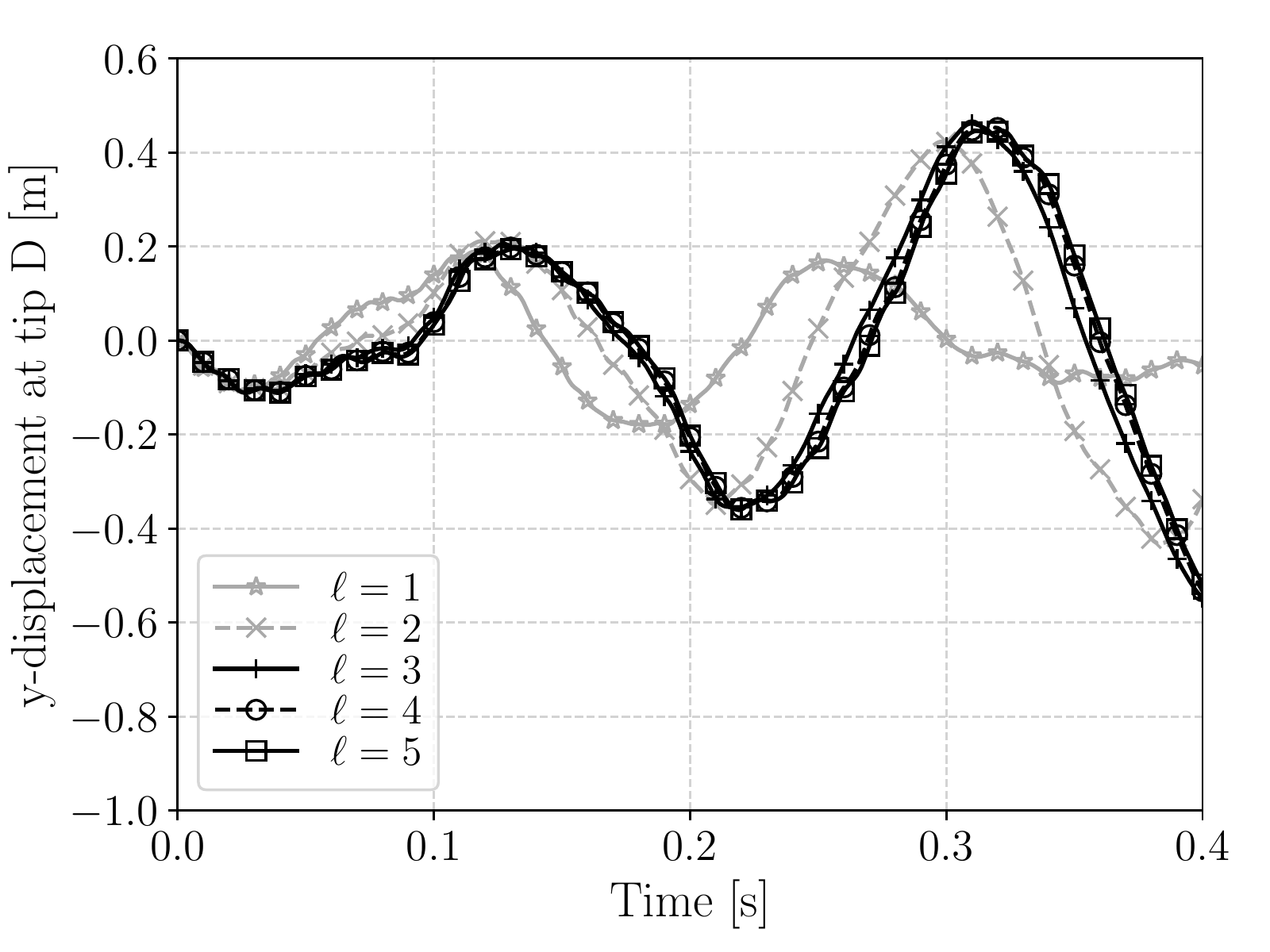}
  \includegraphics[width=0.33\linewidth]{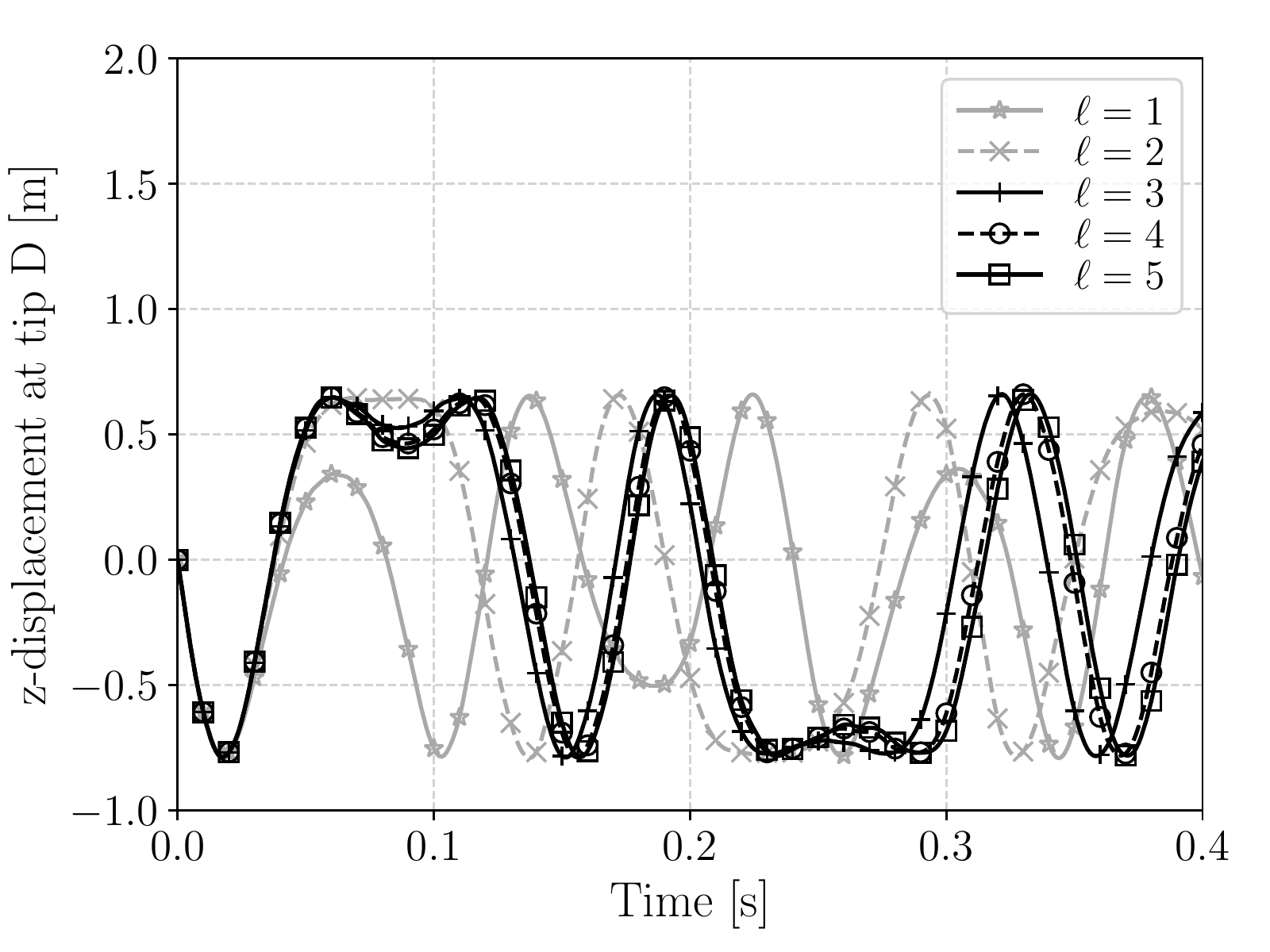}
  \caption{\emph{Twisting column test:} mesh convergence for
    \textbf{(A)} hexahedral and \textbf{(B)} tetrahedral elements.
    Shown are displacements $u_x$, $u_y$, and $u_z$ at tip \textbf{D}
    versus time.
    For experiments depicted the incompressible formulation with MINI elements
    was chosen. At finer levels of refinement $\ell=3,4,5$ (in black)
    results converge to a solution for each displacement direction.}%
  \label{fig:twisting_mesh_convergence}
\end{figure*}
\begin{figure*}[htbp]
  \includegraphics[width=0.33\linewidth]{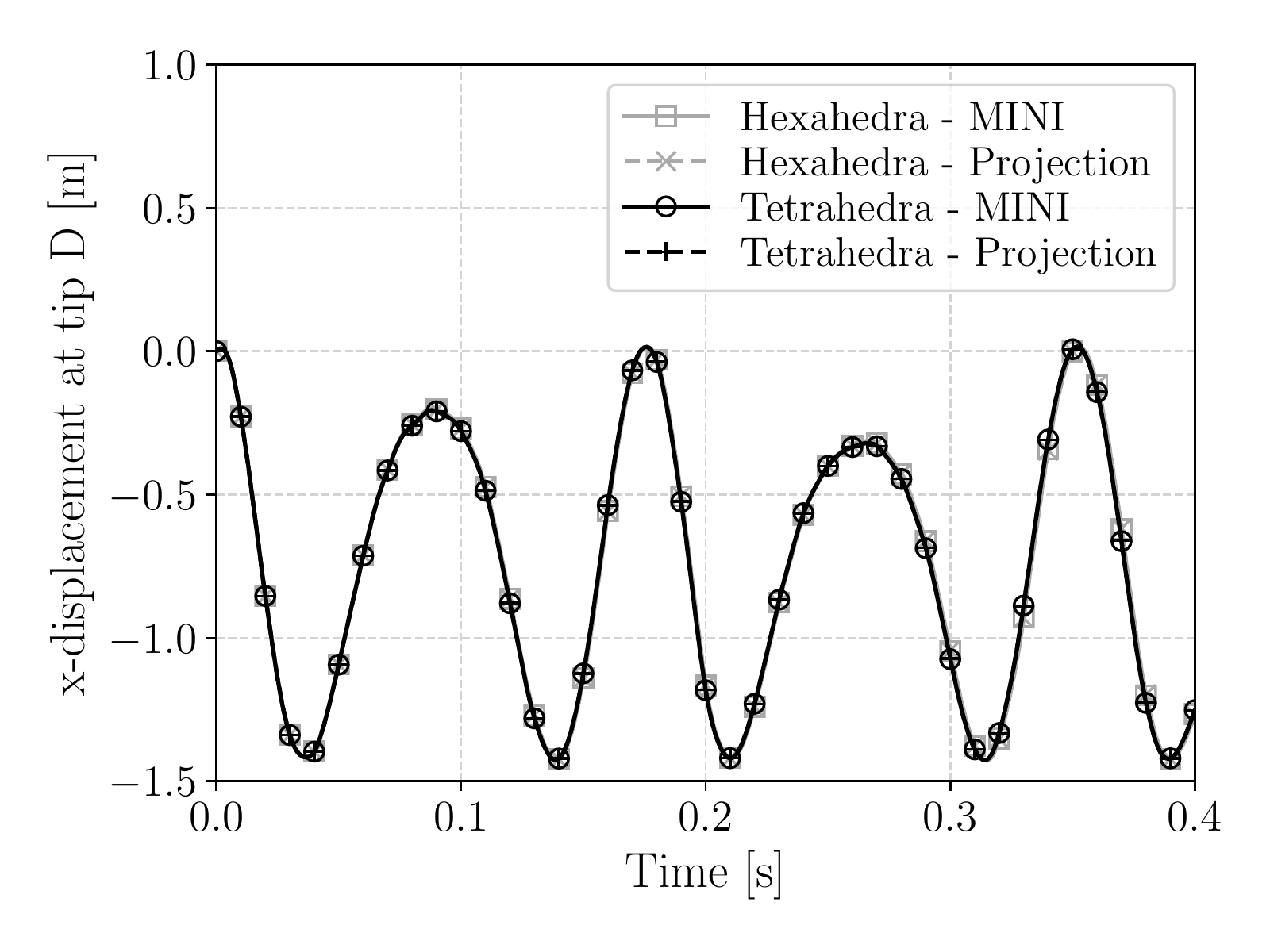}
  \includegraphics[width=0.33\linewidth]{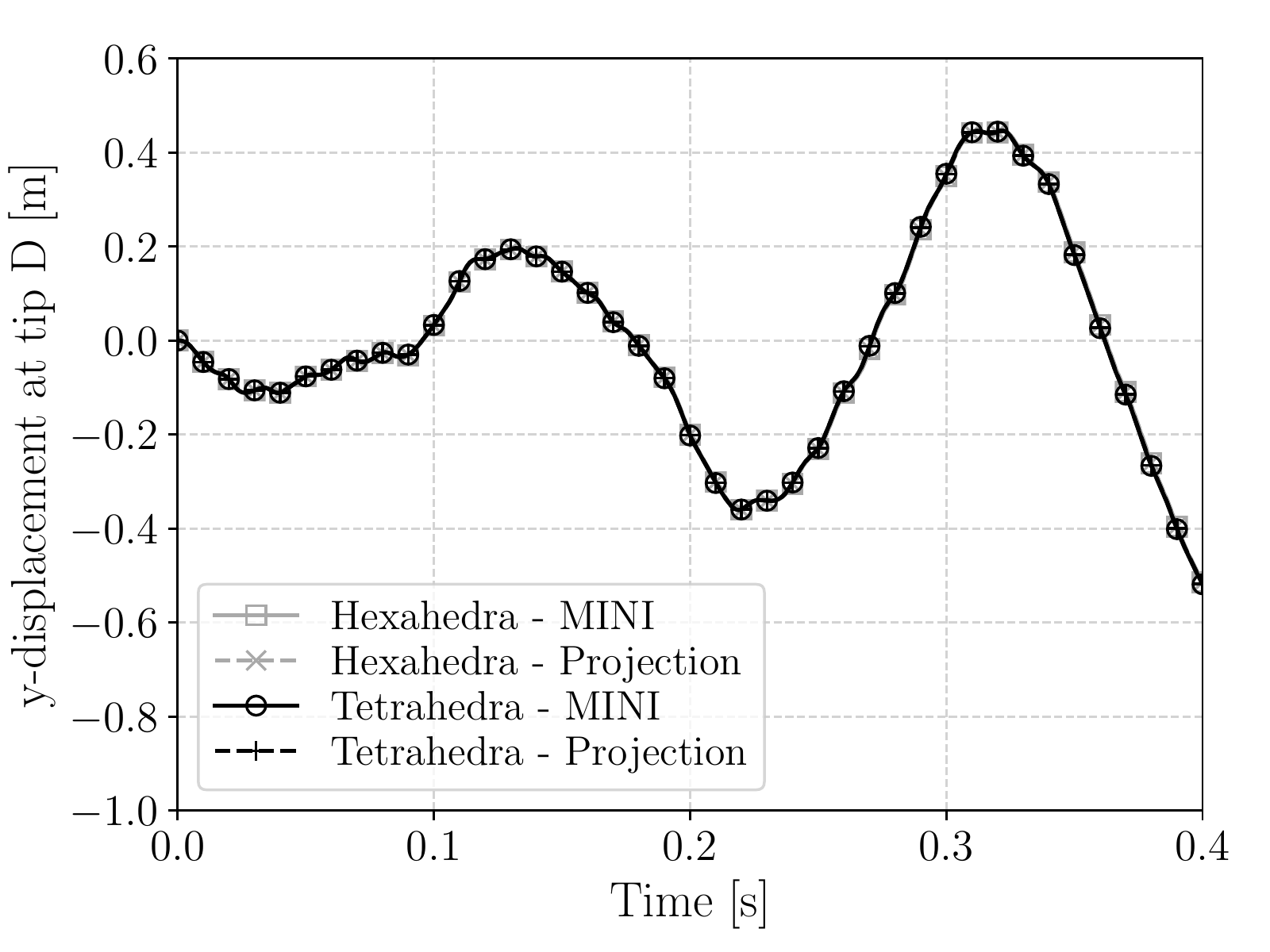}
  \includegraphics[width=0.33\linewidth]{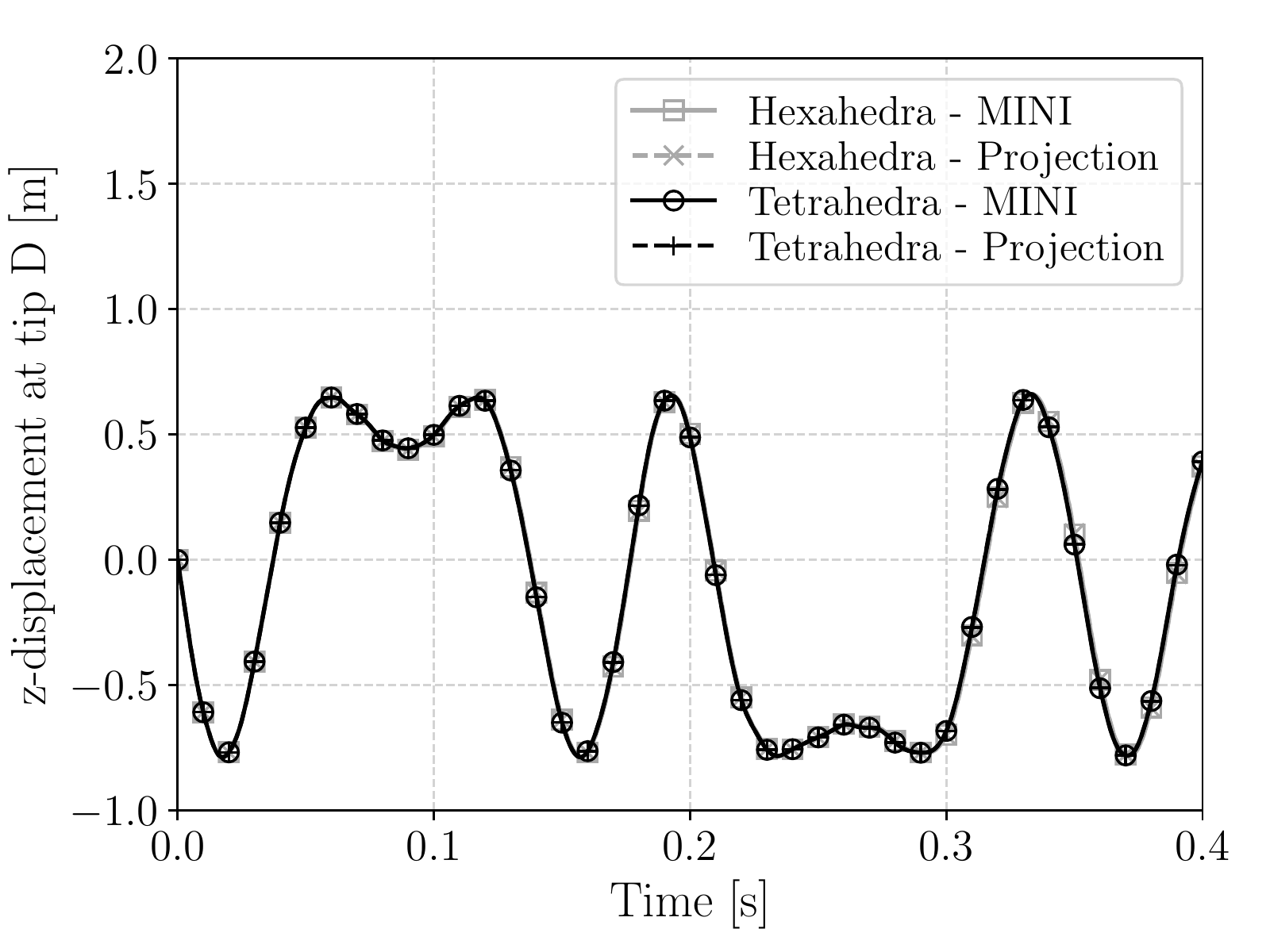}
  \caption{\emph{Twisting column test:}
    comparison of stabilization techniques \reviewerTwo{for the finest grids ($\ell=5$).}
    Shown are displacements
    $u_x$, $u_y$, and $u_z$ at tip \textbf{D} versus time. Both MINI elements
    (dashed line) and projection-based stabilization (dashed lines) render
    almost identical results for hexahedral (in gray) and tetrahedral elements
    (in black).
  }%
  \label{fig:twisting_method_comp}
\end{figure*}
\begin{figure*}[htbp]
  \includegraphics[width=0.33\linewidth]{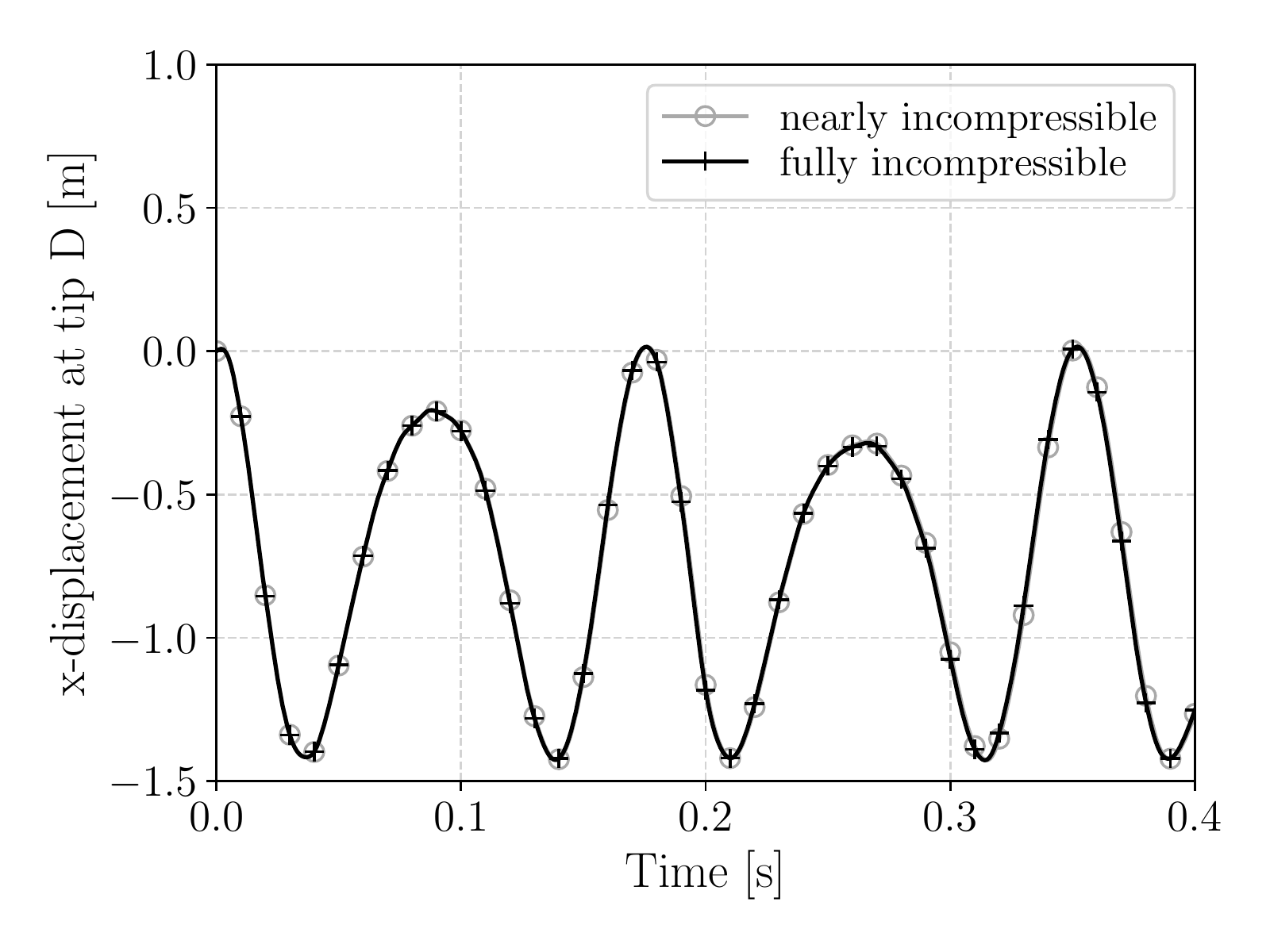}
  \includegraphics[width=0.33\linewidth]{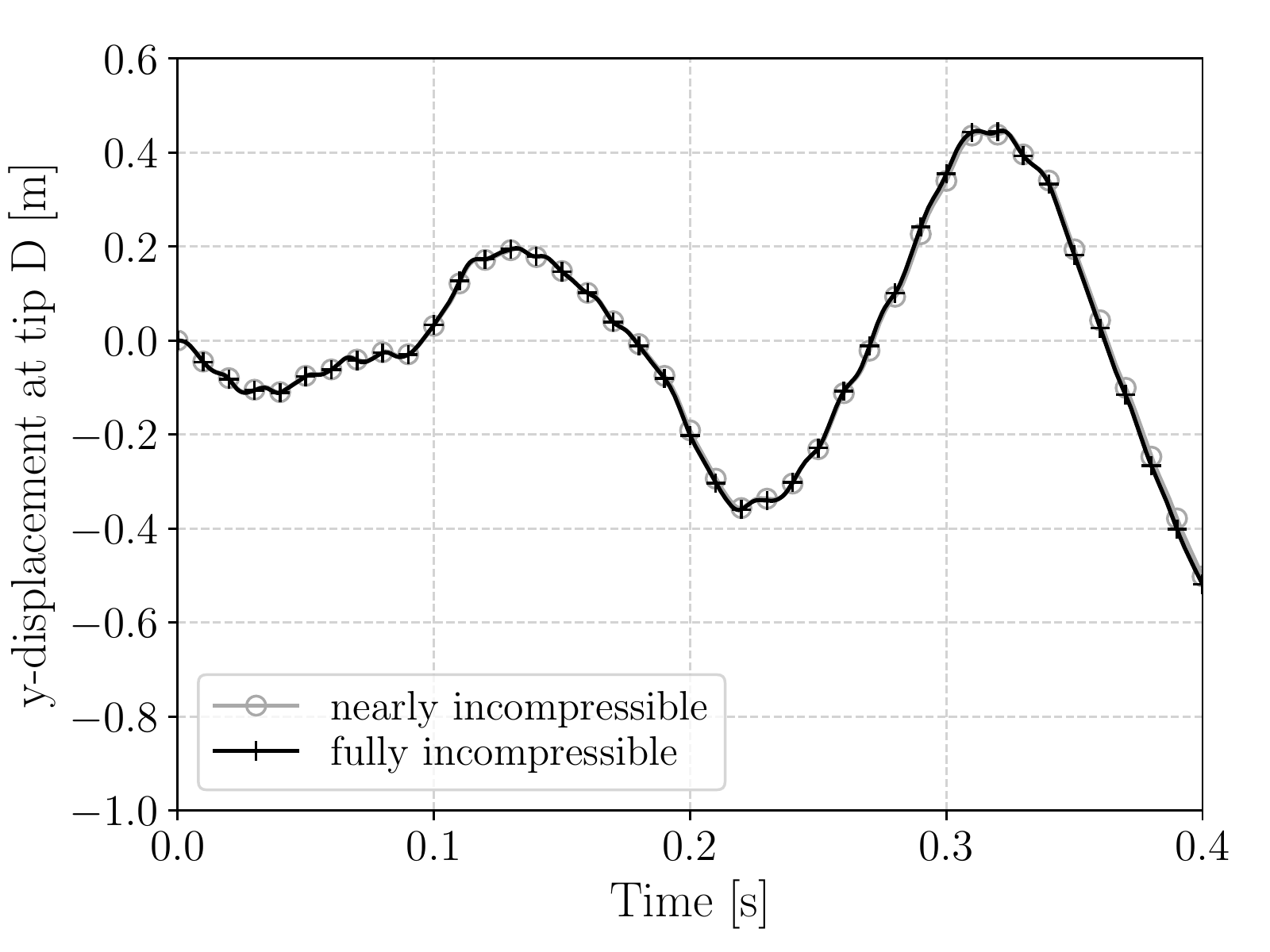}
  \includegraphics[width=0.33\linewidth]{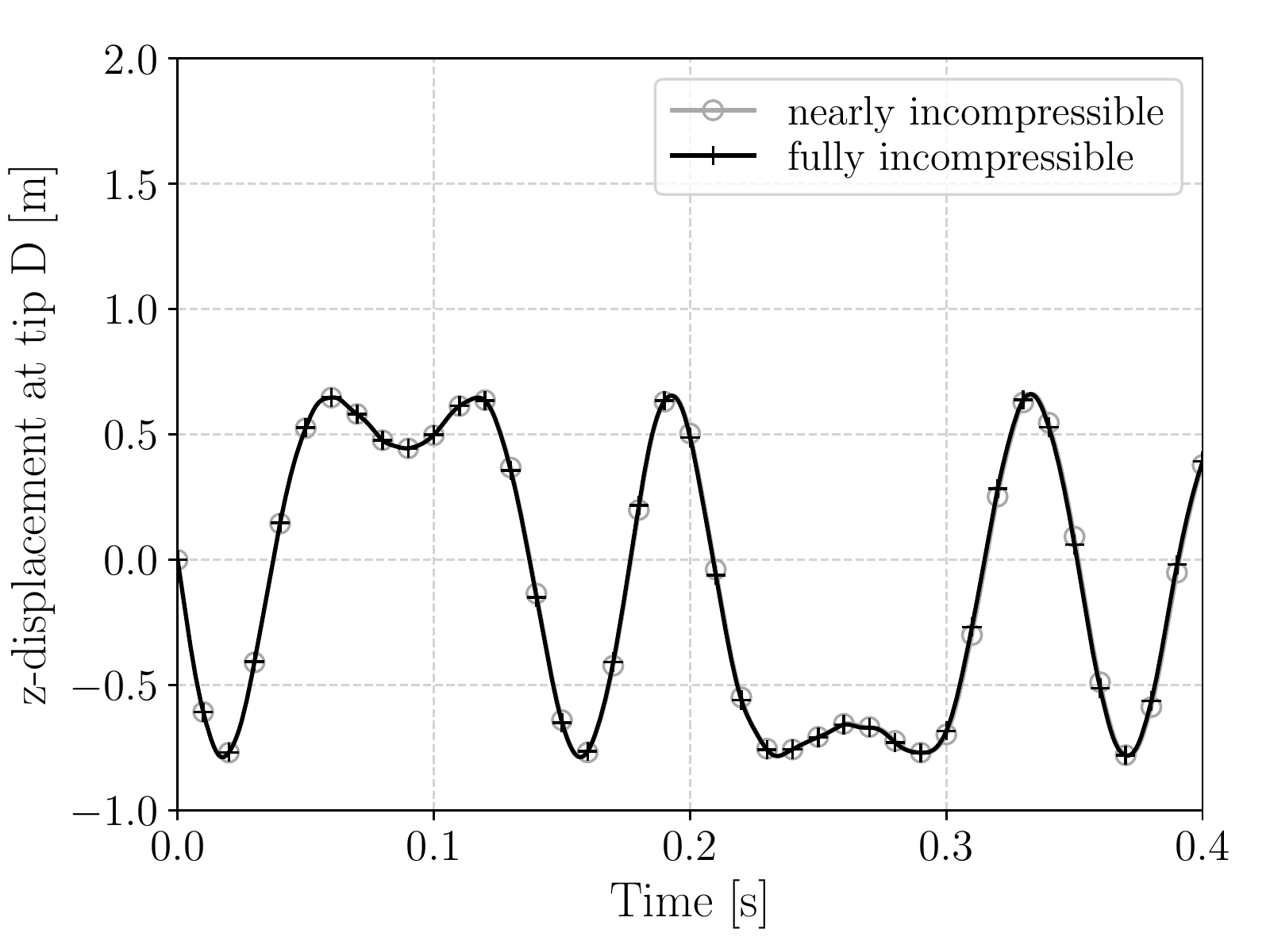}
  \caption{\emph{Twisting column test:}
    comparison of nearly and fully incompressible formulation
    \reviewerTwo{for the finest tetrahedral grids ($\ell=5$) and MINI elements.}
    Displacements $u_x$, $u_y$, and $u_z$ are almost identical for the whole
    simulation duration of \SI{0.4}{\s}.}%
  \label{fig:twisting_formulation_comp}
\end{figure*}
\begin{figure*}[htbp]
  \centering
  \includegraphics[width=\linewidth]{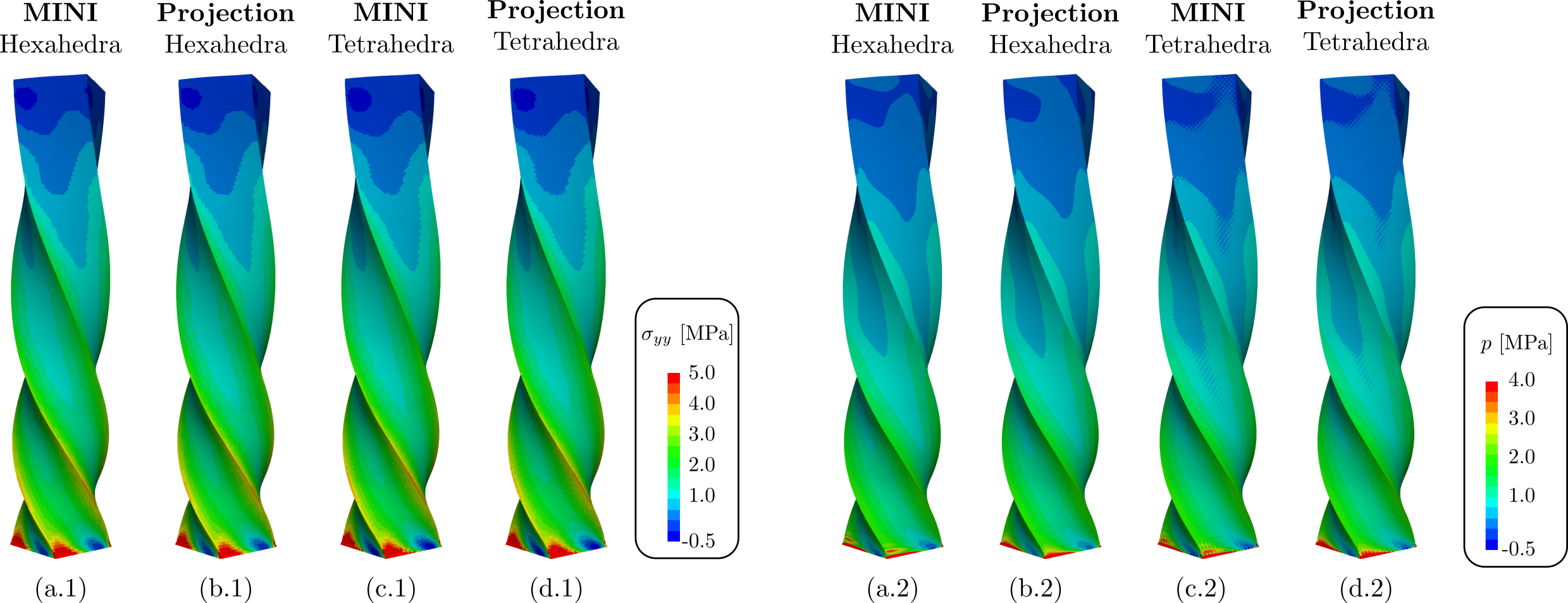}
  \caption{\emph{Twisting column test:}
    (a) stress $\sigma_{yy}$ and (b) hydrostatic pressure $p$ contours
    at time instant $t=\SI{0.3}{\s}$ for the different grids and stabilization
    techniques.}%
  \label{fig:twisted_stress}
\end{figure*}
\begin{figure*}[htbp]
  \centering
  \includegraphics[width=\linewidth]{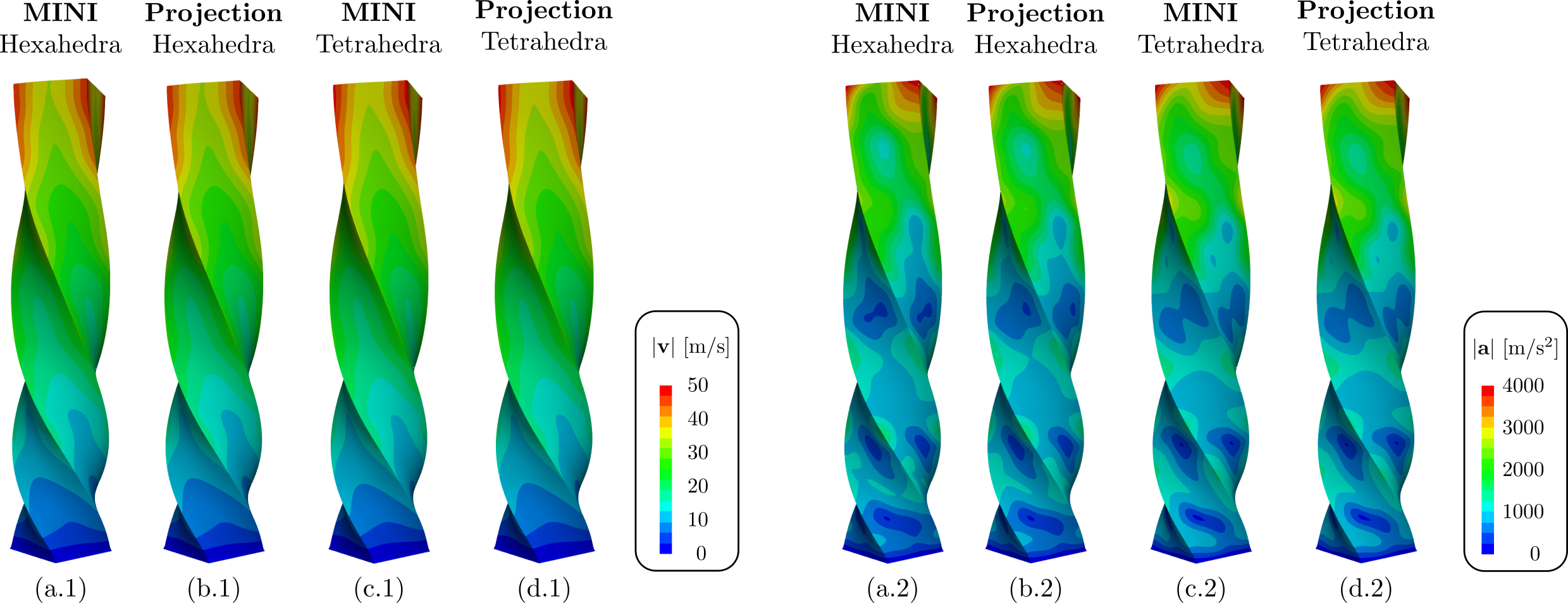}
  \caption{\emph{Twisting column test:}
    magnitude of (a) velocity $\vec{v}$ and (b) acceleration $\vec{a}$
    at time instant $t=\SI{0.3}{\s}$ for the different grids and stabilization
    techniques.}%
  \label{fig:twisted_velocity}
\end{figure*}

\section{Conclusion}%
\label{sec:conclusion}
In this study we described methodology for modeling nearly and fully
incompressible solid mechanics for a large variety of different scenarios.
A stable MINI element was presented which can serve as an
\reviewerOne{excellent choice for applied problems where the use of higher order
  element types is not desired, e.g., due to fitting accuracy of the problem
  domain.}
\reviewerTwo{
We also proposed an easily implementable and computationally cheap technique based on a local
pressure projection.} Both approaches can be applied to stationary as well as
transient problems without modifications and perform excellent with both
hexahedral and tetrahedral grids.
\reviewerTwo{
Both approaches allow a straightforward inclusion in combination with existing
finite element codes since all required implementations are purely on the element
level and are well-suited for simple single-core simulations as well as HPC computing.
Numerical results demonstrate the robustness of the formulations,
exhibiting a great accuracy for selected benchmark problems from the literature.}

While the proposed projection method works well for relatively
stiff materials as considered in this paper,
the setting of the parameter $\mu^\ast$ has to be adjusted for
soft materials such as biological tissues.
A further limitation is that both formulations render the need of solving a
block system, which is computationally more demanding and suitable
preconditioning is not trivial.
However, the MINI element approach can be used without further tweaking of
artificial stabilization coefficients and preliminary results suggested robustness,
even for very soft materials.
Consistent linearization as presented ensures that quadratic convergence of
the Newton--Raphson algorithm was achieved for all the problems considered.
Note that all computations for forming the tangent matrices and also the right
hand side residual vectors are kept local to each element.
This benefits scaling properties of parallel codes and also enables
seamless implementation in standard finite element software.

The excellent performance of the methods along with their high versatility
ensure that this framework serves as a solid platform for simulating
nearly and fully incompressible phenomena in stationary and transient solid
mechanics.
In future studies, we plan to extend the formulation to anisotropic materials
with stiff fibers as they appear for example in the simulation of cardiac
tissue and arterial walls.

\begin{acknowledgement}
  This project has received funding from the European Union's Horizon 2020
  research and innovation programme under the Marie Sk{\l}odowska--Curie
  Action H2020-MSCA-IF-2016 InsiliCardio, GA No. 750835 to CMA.
  Additionally, the research was supported by the grants F3210-N18 and
  I2760-B30 from the Austrian Science Fund (FWF), and a BioTechMed award to GP.
  We acknowledge PRACE for awarding us access to resource ARCHER based in the
  UK at EPCC.
\end{acknowledgement}

\clearpage%
\appendix
\section*{Appendix}
\subsection*{Generalized-$\alpha$ time integration}
After spatial discretization
of~\eqref{eq:trans_nonlin:1}--\eqref{eq:trans_nonlin:2}
we get the following degenerate hyperbolic system
\begin{align*}
  \rho_0\tensor M_h \ddot{\vec u}(t) + \tensor C_h \dot{\vec u}(t)
    + \Xvec R_\mathrm{upper}(\vec u(t), \vec p(t)) &= \vec 0,\\
  \Xvec R_\mathrm{lower}(\vec u(t), \vec p(t)) &= \vec 0,\\
  \vec u(0) &= \vec u_0,\\
  \dot{\vec u}(0) &= \vec u_0,
\end{align*}
where $\tensor M_h$ denotes the mass matrix;
$\tensor C_h$ denotes an optional damping matrix;
$\ddot{\vec u}(t)$ denote the unknown nodal accelerations;
$\dot{\vec u}(t)$ denote the unknown nodal velocities;
$\vec u(t)$ denote the unknown nodal displacements; and
$\vec p(t)$ denote the unknown nodal pressure values.
We will use the modified generalized-$\alpha$ method proposed in~\cite{kadapa2017}.
To this end we introduce the auxiliary velocity $\vec v = \dot{\vec u}$.
Then, applying the standard generalized-$\alpha$ integrator from~\cite{chung1993}
we obtain
\begin{align}
  \label{eq:ga:1}
  \tensor M_h \dot{\vec u}_{n+\alpha_\mathrm{m}}
    - \tensor M_h \vec v_{n+\alpha_\mathrm{f}} &= \vec 0, \\
  \label{eq:ga:2}\rho_0 \tensor{M}_h \dot{\vec v}_{n+\alpha_\mathrm{m}}
    + \tensor C_h \vec v_{n+\alpha_\mathrm{f}}
    + \Xvec R_\mathrm{upper}^{n+\alpha_\mathrm{f}} &= \vec 0,\\
  \label{eq:ga:3}\Xvec R_\mathrm{lower}^{n+\alpha_\mathrm{f}} &= \vec 0,
\end{align}
where
\begin{align*}
 \Xvec R_\mathrm{upper}^{n+\alpha_\mathrm{f}} &:= \alpha_\mathrm{f} \Xvec{R}_\mathrm{upper}(\vec u_{n+1}, \vec p_{n+1}),\\
 &+(1-\alpha_\mathrm{f})\Xvec{R}_\mathrm{upper}(\vec u_{n}, \vec p_{n}),\\
\Xvec R_\mathrm{lower}^{n+\alpha_\mathrm{f}} &:= \alpha_\mathrm{f} \Xvec{R}_\mathrm{lower}(\vec u_{n+1}, \vec p_{n+1}),\\
 &+(1-\alpha_\mathrm{f})\Xvec{R}_\mathrm{lower}(\vec d_{n}, \vec p_{n}),
\end{align*}
and
\begin{align}
 \label{eq:ga:4}\dot{\vec u}_{n+\alpha_\mathrm{m}}
   &:= \alpha_\mathrm{m}\dot{\vec u}_{n+1} + (1-\alpha_\mathrm{m}) \dot{\vec u}_n,\\
 \label{eq:ga:5}\dot{\vec v}_{n+\alpha_\mathrm{m}}
   &:= \alpha_\mathrm{m}\dot{\vec v}_{n+1} + (1-\alpha_\mathrm{m}) \dot{\vec v}_n,\\
 \label{eq:ga:6}\vec v_{n+\alpha_\mathrm{f}}
   &:= \alpha_\mathrm{f}\ \vec v_{n+1} + (1-\alpha_\mathrm{f}) \ \vec v_n.
\end{align}
Moreover, we employ Newmark's approximations,~\cite{newmark85method},
\begin{align}
  \label{eq:ga:7}\dot{\vec u}_{n+1}
    &= \frac{1}{\gamma \Delta t}\left(\vec u_{n+1} - \vec u_{n}\right)
    + \frac{\gamma-1}{\gamma}\dot{\vec u}_n,\\
  \label{eq:ga:8}\vec v_{n+1}
    &= \frac{1}{\gamma \Delta t}\left(\vec v_{n+1} - \vec v_n\right)
    + \frac{\gamma-1}{\gamma} \dot{\vec v}_n
\end{align}
Using~\eqref{eq:ga:1} we observe
\begin{align*}
  \dot{\vec u}_{n+\alpha_\mathrm{m}} = \vec v_{n+\alpha_\mathrm{f}}
\end{align*}
and combining this with~\eqref{eq:ga:2}--\eqref{eq:ga:8} we conclude
\begin{align*}
 \vec v_{n+1} &= \frac{\alpha_\mathrm{m}}{\alpha_\mathrm{f} \gamma \Delta t}\left(\vec u_{n+1} - \vec u_n\right) + \frac{\gamma - \alpha_\mathrm{m}}{\gamma \alpha_\mathrm{f}} \dot{\vec u}_n + \frac{\alpha_\mathrm{f}-1}{\alpha_\mathrm{f}} \vec v_n,\\
 \dot{\vec v}_{n+1} &= \frac{\alpha_\mathrm{m}}{\alpha_\mathrm{f} \gamma^2 \Delta t^2}\left(\vec u_{n+1}- \vec u_n\right) - \frac{1}{\alpha_\mathrm{f} \gamma \Delta t} \vec v_n + \frac{\gamma -1}{\gamma} \dot{\vec v}_n\\
 &+\frac{\gamma - \alpha_\mathrm{m}}{\alpha_\mathrm{f} \gamma^2 \Delta t} \dot{\vec u}_n.
\end{align*}
Thus, a dependence of $\vec v_{n+1}$ and $\dot{\vec v}_{n+1}$ on $\vec u_{n+1}$ can be established.
Having this the unknown values $\vec u_{n+1}, \vec p_{n+1}$ can be computed with the Newton--Raphson method.
Based on~\cite{kadapa2017} we set the parameters depending only on
$\rho_\infty \in [0,1)$ by
\begin{align*}
  \alpha_\mathrm{f} &:= \frac{1}{1 + \rho_\infty},\\
  \alpha_\mathrm{m} &:= \frac{3 - \rho_\infty}{2(1 + \rho_\infty)},\\
  \gamma &:= \frac{1}{2}+\alpha_\mathrm{m}-\alpha_\mathrm{f}.
\end{align*}
In all our simulations we used a value of $\rho_\infty = 0.5$.
\reviewerTwo{
\subsection*{Remark on the implementation of the pressure-projection
  stabilized equal order pair}
Considering the bilinear form $s_h(p_h, q_h)$ defined in \eqref{eq:def_db_stabil} we can rewrite this with a simple calculation into
\begin{align*}
 s_h(p_h, q_h) := \sum_{l=1}^{n_\mathrm{el}}\left(\int\limits_{K_l} p_h q_h\dx - \frac{1}{\lvert\tau_l\rvert}\int\limits_{K_l} p_h \dx \int\limits_{K_l}q_h\dx\right).
\end{align*}
Denoting by $\{\phi_i\}_{i=1}^n$ the chosen ansatz functions the element contribution for an arbitrary element $K$ to the matrix $\tensor C_h$ is given by
\begin{align*}
  \int\limits_{K} \phi_i \phi_j \dx - \frac{1}{\lvert K \rvert} \int\limits_{K} \phi_i \dx \int\limits_{K} \phi_j \dx.
\end{align*}
This corresponds to an element mass matrix minus a rank-one correction.
}
\subsection*{Static condensation}
For completeness we provide a summary for the static condensation used for
the MINI element.
Consider a finite element $K \in \mathcal T_h$ with a local ordering of the
unknowns $\vec u$
\begin{multline*}
  \vec u = \left(u_x^1, u_y^1, u_z^1, \ldots, u_x^{\mathrm{ndofs}_N},
           u_y^{\mathrm{ndofs}_\mathrm{N}},
           u_z^{\mathrm{ndofs}_\mathrm{N}}, \right.\\
           \left. u_{x,\mathrm{B}}^1, u_{y,\mathrm{B}}^1, u_{z,\mathrm{B}}^1,
           \ldots,
           u_{x,\mathrm{B}}^{\mathrm{ndofs}_\mathrm{B}},
           u_{y,\mathrm{B}}^{\mathrm{ndofs}_\mathrm{B}},
           u_{z,\mathrm{B}}^{\mathrm{ndofs}_\mathrm{B}}\right)
\end{multline*}
and $\vec p$ as
\begin{align*}
  \vec p = \left(p^1,p^2,\ldots,p^{\mathrm{ndofs}_\mathrm{N}}\right).
\end{align*}
Here, $\mathrm{ndofs}_\mathrm{N}$ corresponds to the nodal degrees of freedom
per element and $\mathrm{ndofs}_\mathrm{B}$ to the bubble degrees of freedom
(one for tetrahedral elements and two for hexahedral elements).
Then the element contribution to the global saddle-point system can be written
as
\begin{align*}
 \begingroup
  \renewcommand*{\arraystretch}{1.5}
  \begin{pmatrix}
   \tensor K_\mathrm{NN} & \tensor K_\mathrm{NB} & \tensor B_\mathrm{N}^\top \\
   \tensor K_\mathrm{BN} & \tensor K_\mathrm{BB} & \tensor B_\mathrm{B}^\top \\
   \tensor B_\mathrm{N} & \tensor B_\mathrm{B} & \tensor C_\mathrm{N}
  \end{pmatrix}
  \begin{pmatrix}
   \Delta \vec u_\mathrm{N}\\
   \Delta \vec u_\mathrm{B}\\
   \Delta \vec p
  \end{pmatrix}
  =
  \begin{pmatrix}
   -\Xvec R_\mathrm{N}^\mathrm{upper}\\
   -\Xvec R_\mathrm{B}^\mathrm{upper}\\
   -\Xvec R_\mathrm{N}^\mathrm{lower}
  \end{pmatrix}
  \endgroup.
\end{align*}
The bubble part of the stiffness matrix,~$\tensor K_\mathrm{BB}$ is local to
the element and can be directly inverted. This gives the condensed system
\begin{align*}
\begingroup
 \renewcommand*{\arraystretch}{1.5}
 \begin{pmatrix}
  \tensor K_\mathrm{eff} & \tensor B_\mathrm{eff}^T \\
  \tensor B_\mathrm{eff} & \tensor C_\mathrm{eff}
 \end{pmatrix}
 \begin{pmatrix}
  \Delta \vec u_\mathrm{N}\\
  \Delta \vec p
 \end{pmatrix}
 =
 \begin{pmatrix}
  -\Xvec R_\mathrm{eff}^\mathrm{upper}\\
  -\Xvec R_\mathrm{eff}^\mathrm{lower}
 \end{pmatrix}
 \endgroup,
\end{align*}
where the effective matrices and vectors are given as
\begin{align*}
  \tensor K_\mathrm{eff}
    &:= \tensor K_\mathrm{NN}
      - \tensor K_\mathrm{NB}\tensor K_\mathrm{BB}^{-1}\tensor K_\mathrm{BN}\\
  \tensor B_\mathrm{eff}
    &:= \tensor B_\mathrm{N}
      - \tensor B_\mathrm{B}\tensor K_\mathrm{BB}^{-1}\tensor K_\mathrm{BN},\\
  \tensor C_\mathrm{eff}
    &:= \tensor C_\mathrm{N}
      - \tensor{B}_\mathrm{B}\tensor K_\mathrm{BB}^{-1}\tensor B_\mathrm{B}^\top,\\
  \Xvec R_\mathrm{eff}^\mathrm{upper}
    &:= \Xvec R_\mathrm{N}^\mathrm{upper}
      - \tensor K_\mathrm{NB} \tensor K_\mathrm{BB}^{-1}
        \Xvec R_\mathrm{B}^\mathrm{upper},\\
  \Xvec R_\mathrm{eff}^\mathrm{lower}
    &:= \Xvec R_\mathrm{N}^\mathrm{lower}
      - \tensor B_\mathrm{B} \tensor K_\mathrm{BB}^{-1}
        \Xvec R_\mathrm{B}^\mathrm{upper}.
\end{align*}
The effective matrices and vectors can then be assembled in a standard way into the global system.
The bubble update contributions can be calculated once $\Delta \vec u_N$ and $\Delta \vec p_N$ are know as
\begin{align*}
 \Delta \vec u_B = -\tensor K_\mathrm{BB}^{-1} \left(\Xvec R_\mathrm{B}^\mathrm{upper} + \tensor K_\mathrm{BN} \Delta \vec u_\mathrm{N} + \tensor B_\mathrm{B}^\top \Delta \vec p_\mathrm{N}\right).
\end{align*}

\subsection*{Tensor calculus}
We use the following results from tensor calculus,
for more details we refer to,
e.g.,~\cite{holzapfel2000nonlinear,wriggers2008nonlinear}.
\begin{align*}
  \frac{\partial \overline{\tensor C}}{\partial \tensor C} &= J^{-\frac{2}{3}} \mathbb P = J^{-\frac{2}{3}} \left(\mathbb I - \frac{1}{3}\tensor C^{-1} \otimes \tensor C\right),\\
\frac{\partial \tensor C^{-1}}{\partial \tensor C} &= - \tensor{C}^{-1} \odot \tensor{C}^{-1},\\
{(\tensor{A}\odot\tensor{A})}_{ijkl}&:= \frac{1}{2}\left(A_{ik}A_{jl}+A_{il}A_{jk}\right).
\end{align*}
For symmetric $\tensor A$ it holds
\begin{align*}
  \mathbb P : \tensor A = \mathrm{Dev}(\tensor A) = \tensor A - \frac{1}{3}(\tensor A : \tensor C) \tensor C^{-1}.
\end{align*}
%
The isochoric part of the second Piola--Kirchhoff stress tensor as well as the
isochoric part of the fourth order elasticity tensor are given as
\begin{align}
  \label{eq:def_Sisc}\tensor{S}_\mathrm{isc}
    &:= 2 \frac{\partial \overline{\Psi}(\overline{\tensor{C}})}{\partial \tensor{C}}
    = J^{-\frac{2}{3}}\mathrm{Dev}(\overline{\tensor{S}}),\\
  \nonumber\overline{\tensor{S}}
    &:= 2 \frac{\partial \overline{\Psi}(\overline{\tensor{C}})}{\partial \overline{\tensor{C}}},\\
  \label{eq:def_c_isc}\mathbb{C}_\mathrm{isc}
    & := 4 \frac{\overline{\Psi}(\overline{\tensor C})}
                {\partial \tensor C \partial \tensor C} \\
  \nonumber
    & = J^{-\frac{4}{3}} \mathbb P \overline{\mathbb C} \mathbb P^\top
      + J^{-\frac{2}{3}} \frac{2}{3}\mathrm{tr}(\tensor C \overline{\tensor S})
      \widetilde{\mathbb P}\\
   \nonumber & - \frac{4}{3}\tensor{S}_\mathrm{isc} \symotimes \tensor{C}^{-1},\\
  \nonumber\overline{\mathbb{C}}
    &:= 4\frac{\partial \overline{\Psi}
      (\overline{\tensor{C}})}{\partial\overline{\tensor C}
      \partial\overline{\tensor{C}}},\\
  \nonumber\widetilde{\mathbb{P}}
    &:= \tensor{C}^{-1} \odot \tensor{C}^{-1}
      - \frac{1}{3} \tensor{C}^{-1} \otimes \tensor{C}^{-1},\\
  \nonumber\tensor{A}\symotimes \tensor{B}
    &:= \frac{1}{2}\left(\tensor{A}\otimes \tensor{B}
    + \tensor{B} \otimes \tensor{A} \right).
\end{align}
\clearpage%
\printbibliography%

@article{kadapa2017,
title = "On the advantages of using the first-order generalised-alpha scheme for structural dynamic problems",
journal = "Computers \& Structures",
volume = "193",
pages = "226 - 238",
year = "2017",
issn = "0045-7949",
doi = "https://doi.org/10.1016/j.compstruc.2017.08.013",
url = "http://www.sciencedirect.com/science/article/pii/S0045794917306429",
author = "C. Kadapa and W.G. Dettmer and D. Perić",
keywords = "Structural dynamics, Time integration, Generalised-alpha scheme, Numerical dissipation, Overshoot",
abstract = "The advantages of using the generalised-alpha scheme for first-order systems for computing the numerical solutions of second-order equations encountered in structural dynamics are presented. The governing equations are rewritten so that the second-order equations can be solved directly without having to convert them into state-space. The stability, accuracy, dissipation and dispersion characteristics of the scheme are discussed. It is proved through spectral analysis that the proposed scheme has improved dissipation properties when compared with the standard generalised-alpha scheme for second-order equations. It is also proved that the proposed scheme does not suffer from overshoot. Towards demonstrating the application to practical problems, proposed scheme is applied to the benchmark example of three degrees of freedom stiff-flexible spring-mass system, two-dimensional Howe truss model, and elastic pendulum problem discretised with non-linear truss finite elements."
}

@article{taylor2000mixed,
  title={A mixed-enhanced formulation for tetrahedral finite elements},
  author={Taylor, Robert L},
  journal={International Journal for Numerical Methods in Engineering},
  volume={47},
  number={1-3},
  pages={205--227},
  year={2000},
  publisher={Wiley Online Library}
}

@article{newmark85method,
  title={A Method of Computation for Structural Dynamics},
  author={Newmark, Nathan M},
  journal={Journal of the Engineering Mechanics Division},
  volume={85},
  number={3},
  pages={67--94},
  publisher={ASCE},
  year={1959}
}

@ARTICLE{chung1993,
   author = {{Chung}, J. and {Hulbert}, G.~M.},
    title = {A time integration algorithm for structural dynamics with
             improved numerical dissipation: the generalized-{$\alpha$} method},
  journal = {Journal of Applied Mechanics},
     year = 1993,
   volume = 60,
    pages = {371},
      doi = {10.1115/1.2900803},
   adsurl = {http://adsabs.harvard.edu/abs/1993JAM....60..371C},
  adsnote = {Provided by the SAO/NASA Astrophysics Data System}
}

@article{bochev2006,
author = {Bochev, P. and Dohrmann, C. and Gunzburger, M.},
title = {Stabilization of low-order mixed finite elements for the {S}tokes equations},
journal = {SIAM Journal on Numerical Analysis},
volume = {44},
number = {1},
pages = {82-101},
year = {2006},
doi = {10.1137/S0036142905444482},
URL = {https://doi.org/10.1137/S0036142905444482},
}

@book{ciarlet2002finite,
  title={The finite element method for elliptic problems},
  author={Ciarlet, Philippe G},
  volume={40},
  year={2002},
  publisher={Siam}
}

@article{knabner2003,
title = "Conditions for the invertibility of the isoparametric mapping for hexahedral finite elements",
journal = "Finite Elements in Analysis and Design",
volume = "40",
number = "2",
pages = "159 - 172",
year = "2003",
issn = "0168-874X",
doi = "https://doi.org/10.1016/S0168-874X(02)00196-8",
url = "http://www.sciencedirect.com/science/article/pii/S0168874X02001968",
author = "P. Knabner and S. Korotov and G. Summ",
keywords = "Hexahedral finite elements, Isoparametric mapping, Invertibility"
}

@article{stenberg1990error,
  title={Error analysis of some finite element methods for the Stokes problem},
  author={Stenberg, Rolf},
  journal={Mathematics of Computation},
  volume={54},
  number={190},
  pages={495--508},
  year={1990}
}

@article{lamichhane2017quadrilateral,
  title={A quadrilateral '{MINI}' finite element for the {S}tokes problem using
         a single bubble function},
  author={Lamichhane, Bishnu P},
  journal={International Journal of Numerical Analysis \& Modeling},
  volume={14},
  number={6},
  year={2017}
}

@article{bai1997quadmini,
title = "The quadrilateral ‘Mini’ finite element for the Stokes problem",
journal = "Computer Methods in Applied Mechanics and Engineering",
volume = "143",
number = "1",
pages = "41 - 47",
year = "1997",
issn = "0045-7825",
doi = "https://doi.org/10.1016/S0045-7825(96)01146-2",
url = "http://www.sciencedirect.com/science/article/pii/S0045782596011462",
author = "Wen Bai",
abstract = "We propose two mixed finite element methods for the Stokes equations in two space dimensions. Starting from a Q1 − Q1 combination for the velocity and pressure, it is studied how to introduce internal degrees of freedom into the velocity space in order to obtain a LBB-stable combination. The LBB stability conditions of the two proposed methods are checked by using the macroelement technique."
}

@article{brezzi1992relationship,
  title={A relationship between stabilized finite element methods and the Galerkin method with bubble functions},
  author={Brezzi, Franco and Bristeau, Marie-Odile and Franca, Leopoldo P and Mallet, Michel and Rog{\'e}, Gilbert},
  journal={Computer Methods in Applied Mechanics and Engineering},
  volume={96},
  number={1},
  pages={117--129},
  year={1992},
  publisher={Elsevier}
}

@article{soulaimani1987,
title = "Simple continuous pressure elements for two- and three-dimensional incompressible flows",
journal = "Computer Methods in Applied Mechanics and Engineering",
volume = "62",
number = "1",
pages = "47 - 69",
year = "1987",
issn = "0045-7825",
doi = "https://doi.org/10.1016/0045-7825(87)90089-2",
url = "http://www.sciencedirect.com/science/article/pii/0045782587900892",
author = "Azzeddine Soulaimani and Michel Fortin and Yvon Ouellet and Gouri Dhatt and François Bertrand"
}

@book{deuflhard2011newton,
  title={Newton methods for nonlinear problems: affine invariance and adaptive algorithms},
  author={Deuflhard, Peter},
  volume={35},
  year={2011},
  publisher={Springer Science \& Business Media}
}

@book{schwab1998,
title={$p$- and $hp$- {F}inite {E}lement {M}ethods, Theory and Applications in Solid and Fluid Mechanics},
author={Schwab, C.},
year={1998},
publisher={Oxford: Clarendon Press},
volume={6}
}

@article{atluri1989formulation,
  title={On the formulation of variational theorems involving volume constraints},
  author={Atluri, Satya N and Reissner, Eric},
  journal={Computational Mechanics},
  volume={5},
  number={5},
  pages={337--344},
  year={1989},
  publisher={Springer}
}

@article{brink1996,
author="Brink, U. and Stein, E.",
title="On some mixed finite element methods for incompressible and nearly incompressible finite elasticity",
journal="Computational Mechanics",
year="1996",
volume="19",
number="1",
pages="105--119",
issn="1432-0924",
doi="10.1007/BF02824849",
}

@article{sussman1987finite,
  title={A finite element formulation for nonlinear incompressible elastic and inelastic analysis},
  author={Sussman, Theodore and Bathe, Klaus-J{\"u}rgen},
  journal={Computers \& Structures},
  volume={26},
  number={1-2},
  pages={357--409},
  year={1987},
  publisher={Elsevier}
}

@article{Ball1976,
author="Ball, John M.",
title="Convexity conditions and existence theorems in nonlinear elasticity",
journal="Archive for Rational Mechanics and Analysis",
year="1976",
volume="63",
number="4",
pages="337--403",
issn="1432-0673",
doi="10.1007/BF00279992",
}

@article{hartmann2003polyconvexity,
  title={Polyconvexity of generalized polynomial-type hyperelastic strain energy functions for near-incompressibility},
  author={Hartmann, Stefan and Neff, Patrizio},
  journal={International journal of solids and structures},
  volume={40},
  number={11},
  pages={2767--2791},
  year={2003},
  publisher={Elsevier}
}

@article{flory1961thermodynamic,
  title={Thermodynamic relations for high elastic materials},
  author={Flory, PJ},
  journal={Transactions of the Faraday Society},
  volume={57},
  pages={829--838},
  year={1961},
  publisher={Royal Society of Chemistry}
}

@article{ruter2000analysis,
  title={Analysis, finite element computation and error estimation in transversely isotropic nearly incompressible finite elasticity},
  author={R{\"u}ter, Marcus and Stein, Erwin},
  journal={Computer methods in applied mechanics and engineering},
  volume={190},
  number={5-7},
  pages={519--541},
  year={2000},
  publisher={Elsevier}
}

@book{boffi2013mixed,
  title={Mixed finite element methods and applications},
  author={Boffi, Daniele and Brezzi, Franco and Fortin, Michel},
  year={2013},
  publisher={Springer}
}

@book{ern2013theory,
  title={Theory and practice of finite elements},
  author={Ern, Alexandre and Guermond, Jean-Luc},
  volume={159},
  year={2013},
  publisher={Springer Science \& Business Media}
}

@article{auricchio2005stability,
  title={A stability study of some mixed finite elements for large deformation elasticity problems},
  author={Auricchio, Ferdinando and da Veiga, L Beirao and Lovadina, Carlo and Reali, Alessandro},
  journal={Computer Methods in Applied Mechanics and Engineering},
  volume={194},
  number={9},
  pages={1075--1092},
  year={2005},
  publisher={Elsevier}
}

@article{auricchio2010importance,
  title={The importance of the exact satisfaction of the incompressibility constraint in nonlinear elasticity: mixed FEMs versus NURBS-based approximations},
  author={Auricchio, F and Da Veiga, L Beirao and Lovadina, C and Reali, A},
  journal={Computer Methods in Applied Mechanics and Engineering},
  volume={199},
  number={5},
  pages={314--323},
  year={2010},
  publisher={Elsevier}
}

@phdthesis{weise2014,
    author ={Weise, M},
    title  ={Elastic incompressibility and large deformations:
             numerical simulation with adaptive mixed {FEM}},
    school ={Department of Mathematics, Technische Universit\"at Chemnitz},
    year   ={2014}
}

@article{augustin2016anatomically,
   AUTHOR = {Augustin, Christoph M AND Neic, Aurel AND Liebmann, Manfred AND
             Prassl, Anton J AND Niederer, Steven A and Haase, Gundolf AND
             Plank, Gernot},
    TITLE = {Anatomically accurate high resolution modeling of cardiac
             electromechanics: a strongly scalable algebraic multigrid solver
             method for non-linear deformation},
 fjournal = {Journal of Computational Physics},
  journal = {J Comput Phys},
    pages = {622--646},
   volume = {305},
     fDOI = {10.1016/j.jcp.2015.10.045},
      url = {http://dx.doi.org/10.1016/j.jcp.2015.10.045},
 keywords = {journal},
     year = {2016}
}

@article{vigmond2008solvers,
  title={Solvers for the cardiac bidomain equations},
  author={Vigmond, EJ and Weber dos Santos, R and Prassl, AJ and Deo, M and Plank, G},
  journal={Prog Biophys Mol Biol},
 fjournal={Progress in biophysics and molecular biology},
  volume={96},
  number={1},
  pages={3--18},
  year={2008},
  url={http://www.ncbi.nlm.nih.gov/pubmed/17900668},
  publisher={Elsevier}
}

@article{schroeder2016novel,
abstract = {A variety of numerical approximation schemes for boundary value problems suffer from so-called locking-phenomena. It is well known that in such cases several finite element formulations exhibit poor convergence rates in the basic variables. A serious locking phenomenon can be observed in the case of anisotropic elasticity, due to high stiffness in preferred directions. The main goal of this paper is to overcome this locking problem in anisotropic hyperelasticity by introducing a novel mixed variational framework. Therefore we split the strain energy into two main parts, an isotropic and an anisotropic part. For the isotropic part we can apply different well-established approximation schemes and for the anisotropic part we apply a constant approximation of the deformation gradient or the right Cauchy–Green tensor. This additional constraint is attached to the strain energy function by a second-order tensorial Lagrange-multiplier, governed by a Simplified Kinematic for the Anisotropic part. As a matter of fact, for the tested boundary value problems the SKA-element based on quadratic ansatz functions for the displacements, performs excellent and behaves more robust than competitive formulations.},
author = {Schr{\"{o}}der, J{\"{o}}rg and Viebahn, Nils and Balzani, Daniel and Wriggers, Peter},
doi = {10.1016/J.CMA.2016.06.029},
issn = {0045-7825},
journal = {Computer Methods in Applied Mechanics and Engineering},
month = {10},
pages = {475--494},
publisher = {North-Holland},
title = {{A novel mixed finite element for finite aniso\-tropic elasticity; the SKA-element Simplified Kinematics for Anisotropy}},
url = {https://www.sciencedirect.com/science/article/pii/S0045782516306478},
volume = {310},
year = {2016}
}

@article{gultekin2018quasi,
abstract = {Quasi-incompressible behavior is a desired feature in several constitutive models within the finite elasticity of solids, such as rubber-like materials and some fiber-reinforced soft biological tissues. The Q1P0 finite element formulation, derived from the three-field Hu-Washizu variational principle, has hitherto been exploited along with the augmented Lagrangian method to enforce incompressibility. This formulation typically uses the unimodular deformation gradient. However, contributions by Sansour (Eur J Mech A Solids 27:28-39, 2007) and Helfenstein et al. (Int J Solids Struct 47:2056-2061, 2010) conspicuously demonstrate an alternative concept for analyzing fiber reinforced solids, namely the use of the (unsplit) deformation gradient for the anisotropic contribution, and these authors elaborate on their proposals with analytical evidence. The present study handles the alternative concept from a purely numerical point of view, and addresses systematic comparisons with respect to the classical treatment of the Q1P0 element and its coalescence with the augmented Lagrangian method by means of representative numerical examples. The results corroborate the new concept, show its numerical efficiency and reveal a direct physical interpretation of the fiber stretches.},
author = {G{\"{u}}ltekin, Osman and Dal, H{\"{u}}sn{\"{u}} and Holzapfel, Gerhard A},
doi = {10.1007/s00466-018-1602-9},
journal = {Computational Mechanics},
keywords = {Augmented Lagrangian method,Fiber-reinforced materials,Finite element analysis,Hyperelasticity,Quasi-incompressibility,Soft biological tissues},
title = {{On the quasi-incompressible finite element analysis of an\-isotropic hyperelastic materials}},
url = {https://doi.org/10.1007/s00466-018-1602-9},
year = {2018}
}

@article{babuska1992locking,
abstract = {We consider the finite element approximation of the 2D elasticity problem when the Poisson ratiov is close to 0.5. It is well-known that the performance of certain commonly used finite elements deteriorates asv{\{}$\backslash$textrightarrow{\}}0, a phenomenon calledlocking. We analyze this phenomenon and characterize the strength of the locking androbustness of varioush-version schemes using triangular and rectangular elements. We prove that thep-andh-p versions are free of locking with respect to the error in the energy norm. A generalization of our theory to the 3D problem is also discussed.},
author = {Babu{\v{s}}ka, Ivo and Suri, Manil},
doi = {10.1007/BF01396238},
issn = {0945-3245},
journal = {Numerische Mathematik},
month = {12},
number = {1},
pages = {439--463},
title = {{Locking effects in the finite element approximation of elasticity problems}},
url = {https://doi.org/10.1007/BF01396238},
volume = {62},
year = {1992}
}

@book{hughes1987finite,
  title={The Finite Element Method},
  subtitle={Linear Static and Dynamic Finite Element Analysis},
  author={Hughes, Thomas J R},
  year={1987},
  publisher={Prentice-Hall},
  address={Englewood Cliffs, New Jersey},
}

@book{zienkiewicz2000finite,
  title={The finite element method: solid mechanics},
  author={Zienkiewicz, Olgierd Cecil and Taylor, Robert Leroy and Taylor, Robert Leroy},
  volume={2},
  year={2000},
  publisher={Butterworth-Heinemann}
}

@book{wriggers2008nonlinear,
abstract = {Material assembled from lecture notes for the course Nonlinear Finite Elements Methods, offered since 1987 to date at the Aerospace Engineering Sciences Department of the University of Colorado at Boulder.},
address = {Berlin, Heidelberg},
archivePrefix = {arXiv},
arxivId = {arXiv:1011.1669v3},
author = {Wriggers, Peter},
booktitle = {Nonlinear Finite Element Methods},
doi = {10.1007/978-3-540-71001-1},
isbn = {978-3-540-71000-4},
issn = {0003-5599},
pages = {1--559},
pmid = {15369193},
publisher = {Springer Berlin Heidelberg},
title = {{Nonlinear Finite Element Methods}},
year = {2008}
}

@book{braess2007finite,
abstract = {This definitive introduction to finite element methods has been thoroughly updated for this third edition, which features important new material for both research and application of the finite element method. The discussion of saddle point problems is a highlight of the book and has been elaborated to include many more nonstandard applications. The chapter on applications in elasticity now contains a complete discussion of locking phenomena. The numerical solution of elliptic partial differential equations is an import- ant application of finite elements and the author discusses this subject com- prehensively. These equations are treated as variational problems for which the Sobolev spaces are the right framework. Graduate students who do not necessarily have any particular background in differential equations but require an introduction to finite element methods will find this text invaluable. Specifically, the chapter on finite elements in solid mechanics provides a bridge between mathematics and engineering. DIETRICH},
author = {Braess, Dietrich},
isbn = {9780521705189},
publisher = {Cambridge University Press},
title = {{Finite elements}},
url = {www.cambridge.org/9780521705189},
year = {2007}
}

@article{chapelle1993infsup,
abstract = {We briefly review the inf-sup condition for the finite element solution of problems in incompressible elasticity, and then propose a numerical test on whether the inf-sup condition is passed. The evaluation of elements with this test is simple, and various results are presented. This inf-sup test will prove useful for many discretizations of constrained variational problems. {\textcopyright} 1993.},
author = {Chapelle, D. and Bathe, K. J.},
doi = {10.1016/0045-7949(93)90340-J},
issn = {00457949},
journal = {Computers and Structures},
title = {{The inf-sup test}},
year = {1993}
}

@article{brezzi1974existence,
abstract = {— We give necessary and sufficient conditions for existence and uniqueness of a c{\^{i}}ass of problems of « saddle point » type which are often encountered in applying the method of Lagrangian multipliers. A study of the approximation of such problems by means of « discrete problems » (with or without numerical int{\'{e}}gration) is also given, and sufficient conditions for the convergence and error bounds are obtained.},
author = {Brezzi, F.},
doi = {10.1051/m2an/197408R201291},
issn = {0764-583X},
journal = {Revue Fran{\c{c}}aise d'Automatique, Informatique, Recherche Op{\'{e}}rationnelle. Analyse Num{\'{e}}rique},
keywords = {Fourni dans le cadre du programme NUMDAM},
title = {On the existence, uniqueness and approximation of saddle-point problems arising from {L}agrangian multipliers},
year = {1974}
}

@article{babuska1973finite,
abstract = {Summary The Dirichlet problem for second order differential equations is chosen as a model problem to show how the finite element method may be implemented to avoid difficulty in fulfilling essential (stable) boundary conditions. The implementation is based on the application of Lagrangian multiplier. The rate of convergence is proved.},
author = {Babu{\v{s}}ka, Ivo},
doi = {10.1007/BF01436561},
issn = {0029599X},
journal = {Numerische Mathematik},
title = {{The finite element method with Lagrangian multipliers}},
year = {1973}
}

@article{arnold1984stable,
abstract = {We present in this paper a new velocity-pressure finite element for the compu-tation of Stokes flow. We discretize the velocity field with continuous piecewise linear functions enriched by bubble functions, and the pressure by piecewise linear functions. We show that this element satisfies the usual inf-sup condition and converges with first order for both velocities and pressure. Finally we relate this element to families of higer order elements and to the popuIar Taylor-Hood element.},
author = {Arnold, D. N. and Brezzi, F. and Fortin, M.},
doi = {10.1007/BF02576171},
issn = {00080624},
journal = {Calcolo},
title = {A stable finite element for the {S}tokes equations},
year = {1984}
}

@article{taylor1973numerical,
author = {Taylor, C. and Hood, P.},
doi = {10.1016/0045-7930(73)90027-3},
issn = {00457930},
journal = {Computers {\&} Fluids},
month = {1},
number = {1},
pages = {73--100},
title = {{A numerical solution of the Navier-Stokes equations using the finite element technique}},
url = {http://linkinghub.elsevier.com/retrieve/pii/0045793073900273},
volume = {1},
year = {1973}
}

@article{falk1991nonconforming,
abstract = {In the adaptation of nonconforming finite element methods to the equations of elasticity with traction boundary conditions, the main difficulty in the analysis is to prove that an appropriate discrete version of Korn's second inequality is valid. Such a result is shown to hold for nonconforming piecewise quadratic and cubic finite elements and to be false for nonconforming piecewise linears. Optimal-order error estimates, uniform for Poisson ratio i{\^{}}6[0, 1/2), are then derived for the corresponding P2 and P¿ methods. This contrasts with the use of C finite elements, where there is a deterioration in the convergence rate as v-► 1/2 for piecewise polynomials of degree {\textless} 3. Modifications of the continuous methods and the nonconforming linear method which also give uniform optimal-order error estimates are discussed.},
author = {Falk, Richard S},
doi = {10.2307/2938702},
issn = {00255718},
journal = {Mathematics of Computation},
month = {10},
number = {196},
pages = {529},
title = {{Nonconforming Finite Element Methods for the Equations of Linear Elasticity}},
url = {http://www.ams.org/journal-terms-of-use https://www.jstor.org/stable/2938702?origin=crossref},
volume = {57},
year = {1991}
}

@article{dipietro2014extension,
author = {{Di Pietro}, Daniele A and Lemaire, Simon},
doi = {10.1090/S0025-5718-2014-02861-5},
issn = {0025-5718},
journal = {Mathematics of Computation},
month = {8},
number = {291},
pages = {1--31},
title = {{An extension of the Crouzeix–Raviart space to general meshes with application to quasi-incompressible linear elasticity and Stokes flow}},
url = {http://www.ams.org/mcom/2015-84-291/S0025-5718-2014-02861-5/},
volume = {84},
year = {2014}
}

@article{teneyck2006discontinuous,
author = {{Ten Eyck}, A. and Lew, A.},
doi = {10.1002/nme.1667},
journal = {International Journal for Numerical Methods in Engineering},
keywords = {Discontinuous Galerkin,Incompressibility,Locking,Non-linear elasticity,Stabilization},
month = {8},
number = {9},
pages = {1204--1243},
publisher = {John Wiley {\&} Sons, Ltd},
title = {{Discontinuous Galerkin methods for non-linear elasticity}},
url = {http://doi.wiley.com/10.1002/nme.1667},
volume = {67},
year = {2006}
}

@article{kabaria2015hybridizable,
author = {Kabaria, Hardik and Lew, AJ and Cockburn, Bernardo},
doi = {10.1016/j.cma.2014.08.012},
issn = {00457825},
journal = {Computer Methods in Applied Mechanics and Engineering},
keywords = {Cavitation,Hybridized discontinuous Galerkin,Large shear deformations,Non-convergence,Non-linear elasticity,Stabilization},
month = {1},
pages = {303--329},
title = {{A hybridizable discontinuous Galerkin formulation for non-linear elasticity}},
url = {www.sciencedirect.comwww.elsevier.com/locate/cma https://linkinghub.elsevier.com/retrieve/pii/S0045782514002825},
volume = {283},
year = {2015}
}

@article{masud2013framework,
abstract = {A new Variational Multiscale framework for finite strain incompressible elasticity is presented. Significant contributions in this work are: (i) a systematic derivation of multiscale formulations that include the classical F method as a particular subclass, (ii) an error estimation procedure for nonlinear elasticity that emanates naturally from within the present multiscale framework, and (iii) robust performance of linear triangular and tetrahedral elements for modeling nearly incompressible materials in the finite strain range. When viewed from the Variational Multiscale perspective, the classical F method is shown to include only a volumetric or diagonal fine-scale deformation gradient while the proposed formulation includes the full spectrum of inter-scale coupling effects. Also, the error estimation procedure readily carries over from developments conducted for small strain linear problems. The formulation is presented first in the context of displacement-based methods and then extended to more general mixed methods accommodating arbitrary combinations of displacement-pressure interpolations. An extensive set of benchmark problems is investigated to show the performance of the method for a variety of hyperelastic materials exhibiting incompressible response. ?? 2013 Elsevier B.V.},
author = {Masud, Arif and Truster, Timothy J.},
doi = {10.1016/j.cma.2013.08.010},
issn = {00457825},
journal = {Computer Methods in Applied Mechanics and Engineering},
keywords = {A posteriori error estimation,F methods,Finite strains,Linear triangles and tetrahedra,Mixed methods,Variational Multiscale method},
mendeley-groups = {P1P1 Stabilization},
number = {December},
pages = {359--399},
publisher = {Elsevier B.V.},
title = {{A framework for resid\-ual-based stabilization of incompressible finite elasticity: Stabilized formulations and F methods for linear triangles and tetrahedra}},
url = {http://dx.doi.org/10.1016/j.cma.2013.08.010},
volume = {267},
year = {2013}
}

@article{dohrmann2004stabilized,
author = {Dohrmann, Clark R. and Bochev, Pavel B.},
doi = {10.1002/fld.752},
issn = {0271-2091},
journal = {International Journal for Numerical Methods in Fluids},
keywords = {equal-order interpolation,inf,stabilized mixed methods,stokes equations,sup condition},
month = {9},
number = {2},
pages = {183--201},
title = {A stabilized finite element method for the {S}tokes problem based on polynomial pressure projections},
url = {http://onlinelibrary.wiley.com/doi/10.1002/fld.752/abstract http://doi.wiley.com/10.1002/fld.752},
volume = {46},
year = {2004}
}

@article{hughes1986new,
abstract = {A new Petrov-Galerkin formulation of the Stokes problem is proposed. The new formulation possesses better stability properties than the classical Galerkin/variational method. An error analysis is performed for the case in which both the velocity and pressure are approximated by C0interpolations. Combinations of C0interpolations which are unstable according to the Babu{\v{s}}ka-Brezzi condition (e.g., equal-order interpolations) are shown to be stable and convergent within the present framework. Calculations exhibiting the good behavior of the methodology are presented. {\textcopyright} 1986.},
author = {Hughes, Thomas J R and Franca, Leopoldo P. and Balestra, Marc},
doi = {10.1016/0045-7825(86)90025-3},
issn = {00457825},
journal = {Computer Methods in Applied Mechanics and Engineering},
title = {A new finite element formulation for computational fluid dynamics: V. Circumventing the {B}abu{\v{s}}ka--{B}rezzi condition: a stable {P}etrov--{G}alerkin formulation of the {S}tokes problem accommodating equal-order interpolations},
year = {1986}
}

@article{rossi2016implicit,
abstract = {We propose a stabilization method for linear tetrahedral finite elements, suitable for the implicit time integration of the equations of nearly and fully incompressible nonlinear elastodynamics. In particular, we derive and discuss a generalized framework for stabilization and implicit time integration that can comprehensively be applied to the class of all isotropic hyperelastic models. In this sense the presented development can be considered an important extension and complement to the stabilization approach proposed by the authors in previous work, which was instead focused on explicit time integration and simple neo-Hookean models for nearly-incompressible elasticity. With the goal of computational efficiency, we also present a two-step block Gauss–Seidel strategy for the time update of displacements, velocities and pressures. Specifically, a mixed system of equations for the velocity and pressure is updated implicitly in a first stage, and the displacements are updated explicitly in a second stage. The proposed mixed formulation is then embedded in Newton-type strategies for the nonlinear solution of the equations of motion. Various implicit time integration strategies are considered, and, particularly, we focus on high-frequency dissipation time integrators, which are preferable in transient mechanics applications. An extensive set of numerical computations with linear tetrahedral elements is presented to demonstrate the performance of the proposed approach.},
author = {Rossi, S. and Abboud, N. and Scovazzi, G.},
doi = {10.1016/j.cma.2016.07.015},
issn = {00457825},
journal = {Computer Methods in Applied Mechanics and Engineering},
keywords = {Incompressible elasticity,Piece-wise linear interpolation,Stabilized method,Tetrahedral finite element,Transient dynamics},
month = {11},
pages = {208--249},
publisher = {North-Holland},
title = {{Implicit finite incompressible elastodynamics with linear finite elements: A stabilized method in rate form}},
url = {https://www.sciencedirect.com/science/article/pii/S0045782516307563 https://linkinghub.elsevier.com/retrieve/pii/S0045782516307563},
volume = {311},
year = {2016}
}

@article{scovazzi2016simple,
author = {Scovazzi, Guglielmo and Carnes, Brian and Zeng, Xianyi and Rossi, Simone},
doi = {10.1002/nme.5138},
issn = {00295981},
journal = {International Journal for Numerical Methods in Engineering},
keywords = {nearly incompressible elasticity,piecewise linear interpolation,stabilized method,tetrahedral finite element,transient dynamics},
month = {6},
number = {10},
pages = {799--839},
publisher = {John Wiley {\&} Sons, Ltd},
title = {{A simple, stable, and accurate linear tetrahedral finite element for transient, nearly, and fully incompressible solid dynamics: a dynamic variational multiscale approach}},
url = {http://doi.wiley.com/10.1002/nme.5138},
volume = {106},
year = {2016}
}

@article{nakshatrala2007finite,
abstract = {Abstract In this paper we present a mixed stabilized finite element formulation that does not lock and also does not exhibit unphysical oscillations near the incompressible limit. The new mixed formulation is based on a multiscale variational principle and is presented in two different forms. In the first form the displacement field is decomposed into two scales, coarse-scale and fine-scale, and the fine-scale variables are eliminated at the element level by the static condensation technique. The second form is obtained by simplifying the first form, and eliminating the fine-scale variables analytically yet retaining their effect that results with additional (stabilization) terms. We also derive, in a consistent manner, an expression for the stabilization parameter. This derivation also proves the equivalence between the classical mixed formulation with bubbles and the Galerkin least-squares type formulations for the equations of linear elasticity. We also compare the performance of this new mixed stabilized formulation with other popular finite element formulations by performing numerical simulations on three well known test problems.},
author = {Nakshatrala, K. B. and Masud, A. and Hjelmstad, K. D.},
doi = {10.1007/s00466-007-0212-8},
journal = {Computational Mechanics},
keywords = {Mixed methods,Multiscale formulation,Stabilized finite elements},
month = {12},
number = {4},
pages = {547--561},
title = {{On finite element formulations for nearly incompressible linear elasticity}},
url = {http://link.springer.com/10.1007/s00466-007-0212-8},
volume = {41},
year = {2007}
}

@article{masud2005stabilized,
abstract = {Abstract. A finite element method is considered for dealing with nearly incompressible material. In the case of large deformations the nonlinear character of the volumetric contribution has to be taken into account. The proposed mixed method avoids volumetric locking also in this case and is robust for $\lambda$ →∞(with $\lambda$ being the well-known Lam{\'{e}} constant). Error estimates for the L ∞-norm are crucial in the control of the nonlinear terms. 1.},
author = {Masud, Arif and Xia, Kaiming},
doi = {10.1115/1.1985433},
issn = {00218936},
journal = {Journal of Applied Mechanics},
number = {5},
pages = {711},
title = {{A Stabilized Mixed Finite Element Method for Nearly Incompressible Elasticity}},
url = {http://appliedmechanics.asmedigitalcollection.asme.org/article.aspx?articleid=1415451},
volume = {72},
year = {2005}
}

@article{xia2009stabilized,
abstract = {Standard displacement-based finite element formulations have a tendency to lock in the modeling of nearly incompressible materials. This overly stiff response often leads to an overestimation of the collapse load for the system. Employing variational multiscale ideas this paper presents a stabilized/mixed displacement-pressure finite element formulation for finite deformation elastoplastic analysis. The stabilized form is written in the spatial description and is integrated with a smooth surface cap model. Numerical experiments using hexahedral elements with equal low-order interpolations for both displacement and pressure fields show that the new method can successfully remove volumetric locking for the case of near incompressibility in isochoric inelastic flow and also results in stabilizing the pressure field. This formulation can also predict the compaction density induced by the external loads and it is an attractive feature of the method for practical applications in geotechnical/pavement engineering. {\textcopyright} 2008 Elsevier Ltd. All rights reserved.},
author = {Xia, Kaiming and Masud, Arif},
doi = {10.1016/j.compgeo.2008.05.001},
issn = {0266352X},
journal = {Computers and Geotechnics},
keywords = {Elastoplasticity,Finite deformation,Stabilized finite elements,Variational multiscale},
month = {4},
number = {3},
pages = {396--405},
publisher = {Elsevier Ltd},
title = {{A stabilized finite element formulation for finite deformation elastoplasticity in geomechanics}},
url = {http://dx.doi.org/10.1016/j.compgeo.2008.05.001 https://linkinghub.elsevier.com/retrieve/pii/S0266352X08000591},
volume = {36},
year = {2009}
}

@article{chiumenti2015mixed,
abstract = {In previous works, the authors have presented the stabilized mixed displacement/pressure formulation to deal with the incompressibility constraint. More recently, the authors have derived stable mixed stress/displacement formulations using linear/linear interpolations to enhance stress accuracy in both linear and non-linear problems. In both cases, the Variational Multi Scale (VMS) stabilization technique and, in particular, the Orthogonal Subgrid Scale (OSS) method allows the use of linear/linear interpolations for triangular and tetrahedral elements bypassing the strictness of the inf-sup condition on the choice of the interpolation spaces. These stabilization procedures lead to discrete problems which are fully stable, free of volumetric locking or stress oscillations.This work exploits the concept of mixed finite element methods to formulate stable displacement/stress/pressure finite elements aimed for the solution of nonlinear problems for both solid and fluid finite element (FE) analyses. The final goal is to design a finite element technology able to tackle simultaneously problems which may involve isochoric behavior (preserve the original volume) of the strain field together with high degree of accuracy of the stress field. These two features are crucial in nonlinear solid and fluid mechanics, as used in most numerical simulations of industrial manufacturing processes.Numerical benchmarks show that the results obtained compare very favorably with those obtained with the corresponding mixed displacement/pressure formulation.},
author = {Chiumenti, M and Cervera, M and Codina, R},
doi = {10.1016/j.cma.2014.08.004},
issn = {00457825},
journal = {Computer Methods in Applied Mechanics and Engineering},
keywords = {Incompressible limit,Mixed three-field finite element technology,Stress accurate,Variational Multi Scale (VMS) stabilization},
month = {1},
pages = {1095--1116},
title = {{A mixed three-field FE formulation for stress accurate analysis including the incompressible limit}},
url = {www.sciencedirect.comwww.elsevier.com/locate/cma https://linkinghub.elsevier.com/retrieve/pii/S0045782514002667},
volume = {283},
year = {2015}
}

@article{lafontaine2015explicit,
abstract = {Low-order finite elements face inherent limitations related to their poor convergence properties. Such difficulties typically manifest as mesh-dependent or excessively stiff behaviour when dealing with complex problems. A recent proposal to address such limitations is the adoption of mixed displacement-strain technologies which were shown to satisfactorily address both problems. Unfortunately , although appealing, the use of such element technology puts a large burden on the linear algebra, as the solution of larger linear systems is needed. In this paper, the use of an explicit time integration scheme for the solution of the mixed strain-displacement problem is explored as an alternative. An algorithm is devised to allow the effective time integration of the mixed problem. The developed method retains second order accuracy in time and is competitive in terms of computational cost with the standard irreducible formulation.},
author = {Lafontaine, NM and Rossi, R. and Cervera, M. and Chiumenti, M.},
doi = {10.1007/s00466-015-1121-x},
issn = {0178-7675},
journal = {Computational Mechanics},
keywords = {Displacement-strain formulation,Explicit {\textperiodcentered},Mixed formulation {\textperiodcentered}},
month = {3},
number = {3},
pages = {543--559},
title = {{Explicit mixed strain-displacement finite element for dynamic geometrically non-linear solid mechanics}},
url = {http://www.cimne.upc.edu http://link.springer.com/10.1007/s00466-015-1121-x},
volume = {55},
year = {2015}
}

@article{codina2000stabilization,
abstract = {Two apparently dierent forms of dealing with the numerical instability due to the incompressibility constraint of the Stokes problem are analyzed in this paper. The {\textregistered}rst of them is the stabilization through the pressure gradient projection, which consists of adding a certain least-squares form of the dierence between the pressure gradient and its L 2 projection onto the discrete velocity space in the variational equations of the problem. The second is a sub-grid scale method, whose stabilization eect is very similar to that of the Galerkin/least-squares (GLS) method for the Stokes problem. It is shown here that the {\textregistered}rst method can also be recast in the framework of sub-grid scale methods with a particular choice for the space of sub-scales. This leads to a new stabilization procedure, whose applicability to stabilize convection is also studied in this paper. {\'{O}}},
author = {Codina, Ramon},
doi = {10.1016/S0045-7825(00)00254-1},
issn = {00457825},
month = {12},
number = {13-14},
pages = {1579--1599},
title = {{Stabilization of incompressibility and convection through orthogonal sub-scales in finite element methods}},
url = {www.elsevier.com/locate/cma http://linkinghub.elsevier.com/retrieve/pii/S0045782500002541},
volume = {190},
year = {2000}
}

@article{zienkiewicz1998triangles,
author = {Zienkiewicz, O C and Rojek, J and Taylor, R L and Pastor, M},
doi = {10.1002/(SICI)1097-0207(19981015)43:3<565::AID-NME454>3.0.CO;2-9},
issn = {0029-5981},
journal = {International Journal for Numerical Methods in Engineering},
keywords = {elements,explicit dynamic computation,incompressibility,linear,tetrahedral},
month = {10},
number = {3},
pages = {565--583},
title = {{Triangles and tetrahedra in explicit dynamic codes for solids}},
url = {http://doi.wiley.com/10.1002/{\%}28SICI{\%}291097-0207{\%}2819981015{\%}2943{\%}3A3{\%}3C565{\%}3A{\%}3AAID-NME454{\%}3E3.0.CO{\%}3B2-9},
volume = {43},
year = {1998}
}

@article{franca1988new,
author = {Franca, Leopoldo P and Hughes, Thomas J R and Loula, Abimael F D and Miranda, Isidoro},
doi = {10.1007/BF01395881},
issn = {0945-3245},
journal = {Numerische Mathematik},
month = {1},
number = {1},
pages = {123--141},
title = {{A new family of stable elements for nearly incompressible elasticity based on a mixed Petrov--Galerkin finite element formulation}},
url = {https://doi.org/10.1007/BF01395881},
volume = {53},
year = {1988}
}

@TechReport{petsc-user-ref,
       author = {Satish Balay and Shrirang Abhyankar and Mark~F. Adams and Jed Brown and Peter Brune
                 and Kris Buschelman and Lisandro Dalcin and Alp Dener and Victor Eijkhout and William~D. Gropp
                 and Dinesh Kaushik and Matthew~G. Knepley and Dave~A. M,ay and Lois Curfman McInnes
                 and Richard Tran Mills and Todd Munson and Karl Rupp and Patrick Sanan
                 and Barry~F. Smith and Stefano Zampini and Hong Zhang and Hong Zhang},
       title  = {{PETS}c Users Manual},
       institution = {Argonne National Laboratory},
       year   = 2018,
       number = {ANL-95/11 - Revision 3.10},
       url    = {http://www.mcs.anl.gov/petsc}
     }

@inproceedings{henson2002boomeramg,
abstract = {Driven by the need to solve linear systems arising from problems posed on extremely large, unstructured grids, there has been a recent resurgence of interest in algebraic multigrid (AMG). AMG is attractive in that it holds out the possibility of multigrid-like performance on unstructured grids. The sheer size of many modern physics and simulation problems has led to the development of massively parallel computers, and has sparked much research into developing algorithms for them. Parallelizing AMG is a difficult task, however. While much of the AMG method parallelizes readily, the process of coarse-grid selection, in particular, is fundamentally sequential in nature. We have previously introduced a parallel algorithm [A.J. Cleary, R.D. Falgout, V.E. Henson, J.E. Jones, in: Proceedings of the Fifth International Symposium on Solving Irregularly Structured Problems in Parallel, Springer, New York, 1998] for the selection of coarse-grid points, based on modifications of certain parallel independent set algorithms and the application of heuristics designed to insure the quality of the coarse grids, and shown results from a prototype serial version of the algorithm. In this paper we describe an implementation of a parallel AMG code, using the algorithm of A.J. Cleary, R.D., Falgout, V.E. Henson, J.E. Jones [in: Proceedings of the Fifth International Symposium on Solving Irregularly Structured Problems in Parallel, Springer, New York, 1998] as well as other approaches to parallelizing the coarse-grid selection. We consider three basic coarsening schemes and certain modifications to the basic schemes, designed to address specific performance issues. We present numerical results for a broad range of problem sizes and descriptions, and draw conclusions regarding the efficacy of the method. Finally, we indicate the current directions of the research. {\textcopyright} 2002 IMACS. Published by Elsevier Science B.V. All rights reserved.},
author = {Henson, Van Emden and Yang, Ulrike Meier},
booktitle = {Applied Numerical Mathematics},
doi = {10.1016/S0168-9274(01)00115-5},
issn = {01689274},
keywords = {Algebraic multigrid,Parallel computing},
title = {{BoomerAMG: A parallel algebraic multigrid solver and preconditioner}},
year = {2002}
}

@book{holzapfel2000nonlinear,
abstract = {Nonlinear Solid Mechanics a Continuum Approach for Engineering Gerhard A. Holzapfel Graz University of Technology, Austria With a modern, comprehensive approach directed towards computational mechanics, this book covers a unique combination of subjects at present unavailable in any other text. It includes vital information on 'variational principles' constituting the cornerstone of the finite element method. In fact this is the only method by which Nonlinear Solid Mechanics is utilized in engineering practice. The book opens with a fundamental chapter on vectors and tensors. The following chapters are based on nonlinear continuum mechanics - an inevitable prerequisite for computational mechanicians. In addition, continuum field theory (applied to a representative sample of hyperelastic materials currently used in nonlinear computations such as incompressible and compressible materials) is presented, as are transversely isotropic materials, composite materials, viscoelastic materials and hyperelastic materials with isotropic damage. Another central chapter is devoted to the thermodynamics of materials, covering both finite thermoelasticity and finite thermoviscoelasticity. Also included are: an up-to-date list of almost 300 references and a comprehensive index useful examples and exercises for the student selected topics of statistical and continuum thermodynamics. Furthermore, the principle of virtual work (in both the material and spatial descriptions) is compared with two and three-field variational principles particularly designed to capture kinematic constraints such as incompressibility. All of the features combined result in an essential text for final year undergraduates, postgraduates and researchers in mechanical, civil and aerospace engineering and applied maths and physics.},
address = {Chichester},
archivePrefix = {arXiv},
arxivId = {arXiv:1511.03484v1},
author = {Holzapfel, Gerhard A.},
booktitle = {Chichester: Wiley, 2000.},
doi = {10.1023/A:1020843529530},
isbn = {0471823198},
issn = {1572-9648},
pmid = {4106048},
publisher = {John Wiley {\&} Sons Ltd},
title = {{Nonlinear solid mechanics: A continuum approach for engineering}},
year = {2000}
}

@article{Reese1998,
  author = {Reese, S and Wriggers, P and Reddy, B D},
  journal = {Computational Mechanics},
  keywords = {abstract,an innovative brick element,be con-,deformation problems in finite,elasticity is discussed,enhanced strain method,formulation for large,hourglass stabilization,in the present contribution,incompressibility,reduced integration,shell,stability,the new formulation can},
  mendeley-groups = {P1P1 Stabilization},
  pages = {1--21},
  title = {A New Locking-Free Brick Element Formulation for Continuous Large Deformation Problems},
  year = {1998}
}

@article{schroeder2011new,
abstract = {Finite element formulations for arbitrary hyperelastic strain energy functions that are characterized by a locking-free behavior for incompressible materials, a good bending performance and accurate solutions for coarse meshes need still attention. Therefore, the main goal of this contribution is to provide an improved mixed finite element for quasi-incompressible finite elasticity. Based on the knowledge that the minors of the deformation gradient play a major role for the transformation of infinitesimal line-, area- and volume elements, as well as in the formulation of polyconvex strain energy functions a mixed finite element with different interpolation orders of the terms related to the minors is developed. Due to the formulation it is possible to condensate the mixed element formulation at element level to a pure displacement form. Examples show the performance and robustness of the element. {\textcopyright} 2011 Elsevier B.V.},
author = {Schr{\"{o}}der, J{\"{o}}rg and Wriggers, Peter and Balzani, Daniel},
doi = {10.1016/j.cma.2011.08.009},
issn = {00457825},
journal = {Computer Methods in Applied Mechanics and Engineering},
keywords = {Anisotropy,Mixed finite elements,Polyconvexity,Quasi-incompressible elasticity},
month = {12},
number = {49-52},
pages = {3583--3600},
publisher = {North-Holland},
title = {{A new mixed finite element based on different approximations of the minors of deformation tensors}},
url = {https://www.sciencedirect.com/science/article/pii/S0045782511002647 https://linkinghub.elsevier.com/retrieve/pii/S0045782511002647},
volume = {200},
year = {2011}
}

@article{doll2000on,
abstract = {To describe elastic material behavior the starting point is the isochoric-volumetric decoupling of the strain energy function. The volumetric part is the central subject of this contribution. First, some volumetric functions given in the literature are discussed with respect to physical conditions, then three new volumetric functions are developed which fulfill all imposed conditions. One proposed function which contains two material parameters in addition to the compressibility parameter is treated in detail. Some parameter fits are carried out on the basis of well-known volumetric strain energy functions and experimental data. A generalization of the proposed function permits an unlimited number of additional material parameters. Dedicated to Professor Franz Ziegler on the occasion of his 60th birthday.},
author = {Doll, S and Schweizerhof, K},
doi = {10.1115/1.321146},
issn = {00218936},
journal = {Journal of Applied Mechanics},
number = {1},
pages = {17},
title = {{On the Development of Volumetric Strain Energy Functions}},
url = {http://appliedmechanics.asmedigitalcollection.asme.org/article.aspx?articleid=1413933},
volume = {67},
year = {2000}
}

@book{steinbach2008numerical,
address = {New York, NY},
annote = {Finite and boundary elements,
Translated from the 2003 German original},
author = {Steinbach, Olaf},
doi = {10.1007/978-0-387-68805-3},
isbn = {978-0-387-31312-2},
mendeley-groups = {Reviews},
pages = {xii+386},
publisher = {Springer New York},
title = {{Numerical Approximation Methods for Elliptic Boundary Value Problems}},
url = {http://link.springer.com/10.1007/978-0-387-68805-3},
year = {2008}
}

@article{bonet2015computational,
author = {Bonet, Javier and Gil, Antonio J. and Ortigosa, Rogelio},
doi = {10.1016/j.cma.2014.10.002},
isbn = {978-1-4577-0079-8},
issn = {00457825},
journal = {Computer Methods in Applied Mechanics and Engineering},
keywords = {Complementary energy variational principle,Finite elements,Incompressible elasticity,Large strain elasticity,Mixed variational principle,Polyconvex elasticity},
month = {1},
pages = {1061--1094},
publisher = {North-Holland},
title = {{A computational framework for polyconvex large strain elasticity}},
url = {https://www.sciencedirect.com/science/article/pii/S0045782514003636 https://linkinghub.elsevier.com/retrieve/pii/S0045782514003636},
volume = {283},
year = {2015}
}

@article{cervera2003mixed,
abstract = {This paper exploits the concept of orthogonal sub-grid scales to stabilize the behaviour of mixed linear/linear simplicial elements (triangles and tetrahedra) in incompressible or nearly incompressible situations. Both incompressible elastic and J2-plastic constitutive behaviours have been considered. The different assumptions and approximations used to derive the method are exposed. Implementation and computational aspects are also discussed, showing that a robust application of the proposed formulation is feasible. Numerical examples show that the elements derived are free of volumetric locking and spurious oscillations of the pressure, and that the results obtained compare favourably with those obtained with the Q1P0 quadrilateral. {\textcopyright} 2003 Elsevier B.V. All rights reserved.},
author = {Cervera, M and Chiumenti, M and Valverde, Q and {Agelet de Saracibar}, C.},
doi = {10.1016/j.cma.2003.07.007},
issn = {00457825},
journal = {Computer Methods in Applied Mechanics and Engineering},
keywords = {Incompressibility,Orthogonal sub-grid scales,Plasticity,Stabilized finite element method},
month = {12},
number = {49-50},
pages = {5249--5263},
title = {{Mixed linear/linear simplicial elements for incompressible elasticity and plasticity}},
url = {www.elsevier.com/locate/cma https://linkinghub.elsevier.com/retrieve/pii/S0045782503004882},
volume = {192},
year = {2003}
}

@article{caylak2012stabilization,
abstract = {We describe a method for discretizing planar C2-regular domains immersed in non-conforming triangulations. The method consists in constructing mappings from triangles in a background mesh to curvilinear ones that conform exactly to the immersed domain. Constructing such a map relies on a novel way of parameterizing the immersed boundary over a collection of nearby edges with its closest point projection. By interpolating the mappings to curvilinear triangles at select points, we recover isoparametric mappings for the immersed domain defined over the background mesh. Indeed, interpolating the constructed mappings just at the vertices of the background mesh yields a fast meshing algorithm that involves only perturbing a few vertices near the boundary. For the discretization of a curved domain to be robust, we have to impose restrictions on the background mesh. Conversely, these restrictions define a family of domains that can be discretized with a given background mesh. We then say that the background mesh is a universal mesh for such a family of domains. The notion of universal meshes is particularly useful in free/moving boundary problems because the same background mesh can serve as the universal mesh for the evolving domain for time intervals that are independent of the time step. Hence it facilitates a framework for finite element calculations over evolving domains while using a fixed background mesh. Furthermore, since the evolving geometry can be approximated with any desired order, numerical solutions can be computed with high-order accuracy. We demonstrate these ideas with various numerical examples.},
archivePrefix = {arXiv},
arxivId = {1201.4903},
author = {Caylak, Ismail and Mahnken, Rolf},
doi = {10.1002/nme.3320},
isbn = {978-1-4577-0079-8},
issn = {00295981},
journal = {International Journal for Numerical Methods in Engineering},
keywords = {Area bubble functions,Finite elements,Large deformations,Mixed formulations},
month = {4},
number = {2},
pages = {218--242},
pmid = {260949200001},
publisher = {John Wiley {\&} Sons, Ltd},
title = {{Stabilization of mixed tetrahedral elements at large deformations}},
url = {http://doi.wiley.com/10.1002/nme.3320},
volume = {90},
year = {2012}
}

@article{reese2000new,
abstract = {In the present contribution, an innovative brick element formulation for large deformation problems in finite elasticity is discussed. The new formulation can be considered as a reduced integration plus stabilization concept with the stabilization factors being computed on the basis of the enhanced strain method. Such an idea has not been applied yet in the context of large deformation 3D problems and leads to a surprisingly well-behaved locking-free element formulation. Crucial to the method is the notion of the so-called equivalent parallelepiped. The major advantages of this element technology are its simplicity and robustness. Since the element quantities are evaluated only in the center of the element, the approach is also very efficient from the numerical point of view.},
author = {Reese, S. and Wriggers, P. and Reddy, B.D.},
doi = {10.1016/S0045-7949(99)00137-6},
issn = {00457949},
journal = {Computers {\&} Structures},
mendeley-groups = {P1P1 Stabilization},
month = {4},
number = {3},
pages = {291--304},
publisher = {Pergamon},
title = {{A new locking-free brick element technique for large deformation problems in elasticity}},
url = {https://www.sciencedirect.com/science/article/pii/S0045794999001376 http://linkinghub.elsevier.com/retrieve/pii/S0045794999001376},
volume = {75},
year = {2000}
}

@article{aguirre2014vertex,
abstract = {A vertex centred Finite Volume algorithm is presented for the numerical simulation of fast transient dynamics problems involving large deformations. A mixed formulation based upon the use of the linear momentum, the deformation gradient tensor and the total energy as conservation variables is discretised in space using linear triangles and tetrahedra in two-dimensional and three-dimensional computations, respectively. The scheme is implemented using central differences for the evaluation of the interface fluxes in conjunction with the Jameson–Schmidt–Turkel (JST) artificial dissipation term. The discretisation in time is performed by using a Total Variational Diminishing (TVD) two-stage Runge–Kutta time integrator. The JST algorithm is adapted in order to ensure the preservation of linear and angular momenta. The framework results in a low order computationally efficient solver for solid dynamics, which proves to be very competitive in nearly incompressible scenarios and bending dominated applications.},
author = {Aguirre, Miquel and Gil, Antonio J. and Bonet, Javier and {Arranz Carre{\~{n}}o}, Aurelio},
doi = {10.1016/j.jcp.2013.12.012},
issn = {00219991},
journal = {Journal of Computational Physics},
month = {2},
pages = {672--699},
publisher = {Academic Press},
title = {{A vertex centred Finite Volume Jameson–Schmidt–Turkel (JST) algorithm for a mixed conservation formulation in solid dynamics}},
url = {https://www.sciencedirect.com/science/article/pii/S0021999113008115 https://linkinghub.elsevier.com/retrieve/pii/S0021999113008115},
volume = {259},
year = {2014}
}

@article{gil2014stabilised,
abstract = {A mixed second order stabilised Petrov-Galerkin finite element framework was recently introduced by the authors (Lee et al., 2014) [46]. The new mixed formulation, written as a system of conservation laws for the linear momentum and the deformation gradient, performs extremely well in bending dominated scenarios (even when linear tetrahedral elements are used) yielding equal order of convergence for displacements and stresses. In this paper, this formulation is further enhanced for nearly and truly incompressible deformations with three key novelties. First, a new conservation law for the Jacobian of the deformation is added into the system providing extra flexibility to the scheme. Second, a variationally consistent Petrov-Galerkin stabilisation methodology is derived. Third, an adapted fractional step method is presented for both incompressible and nearly incompressible materials in the context of nonlinear elastodynamics. For completeness and ease of understanding, these three improvements are presented both in small and large strain regimes, studying the eigenstructure of the resulting systems. A series of numerical examples are presented in order to demonstrate the robustness of the enhanced methodology with respect to the work previously published by the authors. {\textcopyright} 2014 Elsevier B.V.},
author = {Gil, Antonio J. and Lee, Chun Hean and Bonet, Javier and Aguirre, Miquel},
doi = {10.1016/j.cma.2014.04.006},
issn = {00457825},
journal = {Computer Methods in Applied Mechanics and Engineering},
keywords = {Fast dynamics,Fractional step,Geometric conservation law,Incompressible,Locking,Petrov-Galerkin},
month = {7},
pages = {659--690},
publisher = {North-Holland},
title = {{A stabilised Petrov-Galerkin formulation for linear tetrahedral elements in compressible, nearly incompressible and truly incompressible fast dynamics}},
url = {https://www.sciencedirect.com/science/article/pii/S0045782514001315},
volume = {276},
year = {2014}
}

@article{quaglino2017quasi,
abstract = {This work deals with novel triangular and tetrahe-dral elements for nonlinear elasticity. While it is well-known that linear and quadratic elements perform, respectively, poorly and accurately in this context, their cost is very dif-ferent. We construct an approximation that falls in-between these two cases, which we refer to as quasi-quadratic. We seek to satisfy the following: (1) absence of locking and pressure oscillations in the incompressible limit, (2) an exact equivalence to quadratic elements on linear problems, and (3) a computational cost comparable to linear elements on nonlinear problems. Our construction is formally based on the Hellinger-Reissner principle, where strains and displace-ment are interpolated linearly on nested meshes, but it can be recast in a pure displacement form via static condensation. We show that (1) and (2) are fulfilled via numerical studies on a series of benchmarks and analyze the cost of quadrature in order to show (3).},
author = {Quaglino, A and Favino, M. and Krause, R.},
doi = {10.1007/s00466-017-1494-0},
issn = {0178-7675},
journal = {Computational Mechanics},
keywords = {Finite elements,Large deformation,Nonlinear elasticity},
month = {10},
pages = {1--19},
title = {{Quasi-quadratic elements for nonlinear compressible and incompressible elasticity}},
url = {https://link.springer.com/content/pdf/10.1007{\%}2Fs00466-017-1494-0.pdf http://link.springer.com/10.1007/s00466-017-1494-0},
year = {2017}
}

@article{elguedj2008projection,
  title = "B and F projection methods for nearly incompressible linear and non-linear elasticity and plasticity using higher-order NURBS elements",
  journal = "Computer Methods in Applied Mechanics and Engineering",
  volume = "197",
  number = "33",
  pages = "2732 - 2762",
  year = "2008",
  issn = "0045-7825",
  doi = "https://doi.org/10.1016/j.cma.2008.01.012",
  url = "http://www.sciencedirect.com/science/article/pii/S0045782508000248",
  author = "T. Elguedj and Y. Bazilevs and V.M. Calo and T.J.R. Hughes",
  keywords = "Isogeometric analysis, NURBS, Volumetric locking, Non-linear elasticity, Plasticity, Incompressibility,  method,  method",
  abstract = "This paper presents projection methods to treat the incompressibility constraint in small- and large-deformation elasticity and plasticity within the framework of Isogeometric Analysis. After reviewing some fundamentals of isogeometric analysis, we investigate the use of higher-order Non-Uniform Rational B-Splines (NURBS) within the B projection method. The higher-continuity property of such functions is explored in nearly incompressible applications and shown to produce accurate and robust results. A new non-linear F¯ projection method, based on a modified minimum potential energy principle and inspired by the B¯ method is proposed for the large-deformation case. It leads to a symmetric formulation for which the consistent linearized operator for fully non-linear elasticity is derived and used in a Newton–Raphson iterative procedure. The performance of the methods is assessed on several numerical examples, and results obtained are shown to compare favorably with other published techniques."
}

@article{boffi2017remark,
  title={A remark on finite element schemes for nearly incompressible elasticity},
  author={Boffi, Daniele and Stenberg, Rolf},
  journal={Computers \& Mathematics with Applications},
  volume={74},
  number={9},
  pages={2047--2055},
  year={2017},
  doi = {10.1016/j.camwa.2017.06.006},
  publisher={Elsevier}
}

@Article{cante2014,
author={Cante, J. and D{\'a}valos, C. and Hern{\'a}ndez, J. A. and Oliver, J. and Jons{\'e}n, P. and Gustafsson, G.
and H{\"a}ggblad, H.-{\AA}.},
title={PFEM-based modeling of industrial granular flows},
journal={Computational Particle Mechanics},
year={2014},
volume={1},
number={1},
pages={47--70},
issn={2196-4386},
doi={10.1007/s40571-014-0004-9}
}

@article{rodriguez2016,
author = {Rodriguez, J. M. and Carbonell, J. M. and Cante, J. C. and Oliver, J.},
title = {The particle finite element method (PFEM) in thermo-mechanical problems},
journal = {International Journal for Numerical Methods in Engineering},
volume = {107},
number = {9},
pages = {733-785},
keywords = {particle finite element method (PFEM), thermo-elastoplasticity, IMPL-EX integration, remeshing and geometry update},
doi = {10.1002/nme.5186},
year = {2016}
}

@article{shariff2000,
author="Shariff, M.~H.~B.~M. and Parker, D.~F.",
title="An extension of Key's principle to nonlinear elasticity",
journal="Journal of Engineering Mathematics",
year="2000",
volume="37",
number="1",
pages="171--190",
issn="1573-2703",
doi="10.1023/A:1004734311626"
}

@article{shariff1997,
title = "An extension of Herrmann's principle to nonlinear elasticity",
journal = "Applied Mathematical Modelling",
volume = "21",
number = "2",
pages = "97 - 107",
year = "1997",
issn = "0307-904X",
doi = "10.1016/S0307-904X(96)00151-5",
author = "Shariff, M.~H.~B.~M."
}

@article{herrmann1965elasticity,
  title={Elasticity equations for incompressible and nearly incompressible materials by a variational theorem.},
  author={Herrmann, Leonard R},
  journal={AIAA journal},
  volume={3},
  number={10},
  pages={1896--1900},
  year={1965},
  doi={10.2514/3.3277}
}

@article{viebahn2018,
title = "A simple and efficient Hellinger–Reissner type mixed finite element for nearly incompressible elasticity",
journal = "Computer Methods in Applied Mechanics and Engineering",
author = "Nils Viebahn and Karl Steeger and Jörg Schröder",
volume = "340",
pages = "278 - 295",
year = "2018",
issn = "0045-7825",
doi = "https://doi.org/10.1016/j.cma.2018.06.001"
}

@article{khan2019,
author = {Khan, Arbaz and Powell, Catherine E. and Silvester, David J.},
title = {Robust a posteriori error estimators for mixed approximation of nearly incompressible elasticity},
journal = {International Journal for Numerical Methods in Engineering},
volume = {119},
number = {1},
pages = {18-37},
keywords = {a posteriori analysis, error estimation, finite elements, mixed approximation, planar elasticity},
doi = {10.1002/nme.6040},
year = {2019}
}

@article{lamichhane2009mixed,
abstract = {We present a finite element method for non-linear and nearly incompressible elasticity. The formulation is based on Petrov-Galerkin discretization for the pressure and is closely related to the average nodal pressure formulation presented earlier in the context of incompressible and nearly incompressible dynamic explicit applications (Commun. Numer. Meth. Engng 1998; 14:437-449). Some numerical examples are presented to show the efficiency of the approach. {\textcopyright} 2009 John Wiley {\&} Sons, Ltd.},
author = {Lamichhane, Bishnu P.},
doi = {10.1002/nme.2594},
file = {:home/augustin/.local/share/data/Mendeley Ltd./Mendeley Desktop/Downloaded/Lamichhane - 2009 - A mixed finite element method for non-linear and nearly incompressible elasticity based on biorthogonal systems.pdf:pdf},
issn = {00295981},
journal = {International Journal for Numerical Methods in Engineering},
keywords = {Biorthogonal system,Displacement-based formulation,Mixed finite elements,Petrov-Galerkin discretization},
mendeley-groups = {P1P1 Stabilization},
month = {8},
number = {7},
pages = {870--886},
publisher = {John Wiley {\&} Sons, Ltd},
title = {{A mixed finite element method for non-linear and nearly incompressible elasticity based on biorthogonal systems}},
url = {http://doi.wiley.com/10.1002/nme.2594},
volume = {79},
year = {2009}
}

@article{chamberland2010comparison,
abstract = {In this paper, known analytical solutions are used to compare various finite element discretizations for the numerical solution of large deformation problems involving hyperelastic materials. This approach is also known as the method of manufactured solutions (MMS). The performance of these different discretizations is analyzed in terms of convergence in a suitable norm with respect to mesh size, and also in terms of computational cost and memory consumption. Some interesting features of mixed finite element discretizations are presented and discussed. The test cases presented can also serve as nontrivial benchmark problems and as a basis for code verification. {\textcopyright} 2010 Elsevier B.V. All rights reserved.},
author = {Chamberland, {\'{E}}. and Fortin, A. and Fortin, M.},
doi = {10.1016/j.compstruc.2010.02.007},
file = {:home/augustin/.local/share/data/Mendeley Ltd./Mendeley Desktop/Downloaded/Chamberland, Fortin, Fortin - 2010 - Comparison of the performance of some finite element discretizations for large deformation elast(2).pdf:pdf},
issn = {00457949},
journal = {Computers {\&} Structures},
keywords = {Code verification,Convergence order,Hyperelastic material,Large deformations,Manufactured solutions,Mixed methods},
mendeley-groups = {P1P1 Stabilization},
month = {6},
number = {11-12},
pages = {664--673},
publisher = {Pergamon},
title = {{Comparison of the performance of some finite element discretizations for large deformation elasticity problems}},
url = {https://www.sciencedirect.com/science/article/pii/S0045794910000404 https://linkinghub.elsevier.com/retrieve/pii/S0045794910000404},
volume = {88},
year = {2010}
}

@article{land2015verification,
abstract = {{\textcopyright} 2015 The Author(s) Published by the Royal Society. All rights reserved. Models of cardiac mechanics are increasingly used to investigate cardiac physiology. These models are characterized by a high level of complexity, including the particular anisotropic material properties of biological tissue and the actively contracting material. A large number of independent simulation codes have been developed, but a consistent way of verifying the accuracy and replicability of simulations is lacking. To aid in the verification of current and future cardiac mechanics solvers, this study provides three benchmark problems for cardiac mechanics. These benchmark problems test the ability to accurately simulate pressure-Type forces that depend on the deformed objects geometry, anisotropic and spatially varying material properties similar to those seen in the left ventricle and active contractile forces. The benchmark was solved by 11 different groups to generate consensus solutions, with typical differences in higher-resolution solutions at approximately 0.5{\%}, and consistent results between linear, quadratic and cubic finite elements as well as different approaches to simulating incompressible materials. Online tools and solutions are made available to allow these tests to be effectively used in verification of future cardiac mechanics software.},
author = {Land, Sander and Gurev, Viatcheslav and Arens, Sander and Augustin, Christoph M. and Baron, Lukas and Blake, Robert and Bradley, Chris and Castro, Sebastian and Crozier, Andrew and Favino, Marco and Fastl, Thomas E. and Fritz, Thomas and Gao, Hao and Gizzi, Alessio and Griffith, Boyce E. and Hurtado, Daniel E. and Krause, Rolf and Luo, Xiaoyu and Nash, Martyn P. and Pezzuto, Simone and Plank, Gernot and Rossi, Simone and Ruprecht, Daniel and Seemann, Gunnar and Smith, Nicolas P. and Sundnes, Joakim and Rice, J. Jeremy and Trayanova, Natalia and Wang, Dafang and {Jenny Wang}, Zhinuo and Niederer, Steven A.},
doi = {10.1098/rspa.2015.0641},
file = {:home/augustin/.local/share/data/Mendeley Ltd./Mendeley Desktop/Downloaded/Land et al. - 2015 - Verification of cardiac mechanics software benchmark problems and solutions for testing active and passive material.pdf:pdf},
issn = {1364-5021},
journal = {Proceedings of the Royal Society A: Mathematical, Physical and Engineering Science},
keywords = {biology,biomedical engineering,computational,computational mechanics},
month = {12},
number = {2184},
pages = {20150641},
title = {{Verification of cardiac mechanics software: benchmark problems and solutions for testing active and passive material behaviour}},
url = {http://rspa.royalsocietypublishing.org/lookup/doi/10.1098/rspa.2015.0641},
volume = {471},
year = {2015}
}

@article{karabelas2018towards,
author = {Karabelas, Elias and Gsell, Matthias A. F. and Augustin, Christoph M. and Marx, Laura and Neic, Aurel and Prassl, Anton J. and Goubergrits, Leonid and Kuehne, Titus and Plank, Gernot},
doi = {10.3389/fphys.2018.00538},
file = {:home/augustin/.local/share/data/Mendeley Ltd./Mendeley Desktop/Downloaded/Karabelas et al. - 2018 - Towards a Computational Framework for Modeling the Impact of Aortic Coarctations Upon Left Ventricular Load.pdf:pdf},
issn = {1664-042X},
journal = {Frontiers in Physiology},
keywords = {arbitrary Lagrangian-Eulerian formulation,arbitrary lagrangian-eulerian,cardiac,cardiac mechanics,computational fluid dynamics,f,finite element model,formulation,patient-specific modeling,total heart function,translational cardiac modeling},
month = {5},
number = {May},
pages = {1--20},
pmid = {29892227},
publisher = {Frontiers},
title = {{Towards a Computational Framework for Modeling the Impact of Aortic Coarctations Upon Left Ventricular Load}},
url = {https://www.frontiersin.org/article/10.3389/fphys.2018.00538/full},
volume = {9},
year = {2018}
}

@incollection{hughes2017multiscale,
author = {Hughes, Thomas J. R. and Scovazzi, Guglielmo and Franca, Leopoldo P.},
booktitle = {Encyclopedia of Computational Mechanics Second Edition},
doi = {10.1002/9781119176817.ecm051},
title = {{Multiscale and Stabilized Methods}},
year = {2017}
}

@incollection{auricchio2017mixed,
address = {Chichester, UK},
author = {Auricchio, F. and da Veiga, L. Beir{\~{a}}o and Brezzi, F. and Lovadina, C.},
booktitle = {Encyclopedia of Computational Mechanics Second Edition},
doi = {10.1002/9781119176817.ecm2004},
file = {:home/augustin/.local/share/data/Mendeley Ltd./Mendeley Desktop/Downloaded/Auricchio et al. - 2017 - Mixed Finite Element Methods.pdf:pdf},
keywords = {Hellinger,Hu,Reissner,Washizu,mixed finite elements,nearly incompressible elasticity,potential energy,stability conditions,stokes,thermal diffusion},
month = {12},
pages = {1--53},
publisher = {John Wiley {\&} Sons, Ltd},
title = {{Mixed Finite Element Methods}},
url = {http://doi.wiley.com/10.1002/9781119176817.ecm2004},
year = {2017}
}

@article{alnaes2015fenics,
  title = {The FEniCS Project Version 1.5},
  author = {Martin S. Aln{\ae}s and Jan Blechta and Johan Hake and August Johansson and Benjamin Kehlet and Anders Logg and Chris Richardson and Johannes Ring and Marie E. Rognes and Garth N. Wells},
  year = {2015},
  journal = {Archive of Numerical Software},
  volume = {3},
  number = {100},
  doi = {10.11588/ans.2015.100.20553},
  page = {9-23},
}
\end{document}
